\definecolor{darkgreen}{HTML}{014801}
\definecolor{darkblue}{HTML}{191970}
\definecolor{antiquefuchsia}{rgb}{0.57, 0.36, 0.51}
\definecolor{color1}{rgb}{0.94 0.93 0.96}
\definecolor{color2}{rgb}{0.74 0.74 0.86}
\definecolor{color3}{rgb}{0.46 0.42 0.69}
\definecolor{color4}{rgb}{0.53 0.34 0.65}
\newenvironment{linkproof}[2][]{%
    \begin{proof}[#1]
        
    }{%
    \end{proof}
    
}
{%
        \hfill\href{\linkagdadefinition}{\(\diamond\)}
        \end{definition}
}
{%
        \hfill\href{\linkagdatheorem}{\(\diamond\)}
        \end{theorem}
}
{%
        \hfill\href{\linkagdalemma}{\(\diamond\)}
        \end{lemma}
}
\numberwithin{equation}{section}
\newcommand{\eout}[1]{{#1}^\to}
\newcommand{\ein}[1]{{#1}^\leftarrow}
\newcommand{\Aut}{\mathsf{Aut}}
\newcommand{\coe}[1]{\ensuremath{\mathsf{coe}}\left (#1\right)}
\newcommand{\Connected}{\ensuremath{\mathsf{Connected}}}
\newcommand{\Edge}{\ensuremath{\mathsf{E}}}
\newcommand{\Face}[1]{\ensuremath{\mathsf{Face}({#1})}}
\newcommand{\fpred}[1]{\ensuremath{\mathsf{pred}(#1)}}
\newcommand{\Graph}{\ensuremath{\mathsf{Graph}}}
\newcommand{\HomWalk}[3]{\ensuremath{#2\sim_{#1}#3}}
\newcommand{\Hom}[2]{\ensuremath{\mathsf{Hom}({#1},{#2})}}
\newcommand{\hProp}{\ensuremath{\mathsf{hProp}}}
\newcommand{\isSet}[1]{\ensuremath{\mathsf{isSet}(#1)}}
\newcommand{\Map}[1]{\ensuremath{\mathsf{Map}(#1)}}
\newcommand{\Node}{\ensuremath{\mathsf{N}}}
\newcommand{\N}{\mathbb{N}}
\newcommand{\trtwo}[4]{\ensuremath{\mathsf{tr_2}^{#1}(#2,#3,#4)}}
\newcommand{\trunc}[1]{\|#1\|}
\newcommand{\tr}[3]{\ensuremath{\mathsf{tr}^{#1}(#2,#3)}}
\newcommand{\UU}{\mathcal{U}}
\Crefname{theorem}{Theorem}{Theorems}
\Crefname{lem}{Lemma}{Lemmas}
\Crefname{lem}{Corollary}{Corollaries}
\Crefname{listthm}{Theorem}{Theorems}
\Crefname{listlem}{Lemma}{Lemmas}
\crefname{axiom}{Axiom}{Axioms}
\Crefname{axiom}{Axiom}{Axioms}
\Crefname{fun}{Function}{Functions}
\Crefname{eq}{}{}
\Crefname{equiv}{Equivalence}{Equivalences}
\Crefname{equality}{Equality}{Equalities}
\Crefname{cond}{Condition}{Conditions}
\Crefname{calc}{Calculation}{Calculations}
\Crefname{type}{Type}{Types}
\Crefname{fig}{Figure}{Figures}
\Crefname{proof}{Proof}{Proofs}
\begin{document}

\title{On Planarity of Graphs in Homotopy Type Theory
}
\author{Jonathan Prieto-Cubides}
\orcid{1234-4564-1234-4565}
\affiliation{
  \department{Department of Informatics}
  \institution{University of Bergen}
          \postcode{5073}
      \country{Norway}
  }
\email{jonathan.cubides@uib.no}
\author{Håkon Robbestad Gylterud}
\orcid{1234-4564-1234-4565}
\affiliation{
  \department{Department of Informatics}
  \institution{University of Bergen}
            \country{Norway}
  }
\email{hakon.gylterud@uib.no}

\input{planar/abstract.md}

\begin{CCSXML}
  <ccs2012>
     <concept>
         <concept_id>10003752.10003790.10003796</concept_id>
         <concept_desc>Theory of computation~Constructive mathematics</concept_desc>
         <concept_significance>500</concept_significance>
         </concept>
     <concept>
         <concept_id>10003752.10003790.10011740</concept_id>
         <concept_desc>Theory of computation~Type theory</concept_desc>
         <concept_significance>100</concept_significance>
         </concept>
     <concept>
         <concept_id>10002950.10003624.10003633.10003643</concept_id>
         <concept_desc>Mathematics of computing~Graphs and surfaces</concept_desc>
         <concept_significance>500</concept_significance>
         </concept>
   </ccs2012>
\end{CCSXML}

\ccsdesc[500]{Theory of computation~Constructive mathematics}
\ccsdesc[100]{Theory of computation~Type theory}
\ccsdesc[500]{Mathematics of computing~Graphs and surfaces}

\keywords{planarity, combinatorial maps, univalent mathematics, formalisation of mathematics}

\maketitle

\hypertarget{introduction}{%
\section{Introduction}\label{introduction}}

Topological graph theory studies the embedding of graphs into surfaces
\citep{gross, surveytopgraph, Stahl1978} such as: the plane, the
sphere, the torus, etc. Even the simplest case, embedding graphs into
the plane, has inspired a lot of interesting characterisations and
mathematical results. Two such characterisations are Kuratowski's
theorem and the closely related Wagner's theorem
\citep{diestel, Rahman2017}. Both theorems characterise planarity by
excluding two sub graphs known as the forbidden minors, namely \(K_5\)
and \(K_{3,3}\). Other approaches refer to algebraic methods as
MacLane's Theorem \citep{maclane} and Schnyder's theorem
\citep[§3.3]{Baur2012}.

One of the most powerful tools in topological graph theory is the
\emph{combinatorial representation} of graph embeddings, called
rotation systems \citep{gross}. These representations encode what the
embedding looks like around each node -- characterising the embedding
up to isotopy. \Cref{sec:graph-embeddings} will give a more detailed
description of rotation systems. For now, it suffices to know that for
suitably general class of embeddings into closed surfaces -- namely,
the cellular ones -- the embedding is characterised by the cyclic
order of outgoing edges from each node as they lie around the node on
the surface.

Homotopy type theory (HoTT)\citep{hottbook} is a variation of
dependent type theory which emphasises the higher dimensional
structure of types: Equalities within a type are seen as paths, and
the type of all equalities between two elements -- the identity type
-- is thought of as a path space. In this way, HoTT takes seriously
the notion of proof-relevancy, and interesting questions arise when
considering what the equality between two proofs are.

The goal of this paper is to develop a proof-relevant notion of planar
graphs in homotopy type theory, based on rotation systems. In other
words, \emph{planarity is a structure on a graph}, not a mere
property. Intuitively, a proof that a graph is planar is an embedding
into the plane. The question is then, when are two such embeddings
equal? One good answer is that the proofs ought to be equal when the
embeddings are isotopic -- i.e.~can be deformed continuous to one
another without crossing edges. This will be the case for the notion
presented here. But in order to arrive at a type of graph embeddings
where the identity type corresponds to isotopy, a lot of care has to
be taken when defining embeddings, and planarity.

In short, a planar graph will be a graph with a combinatorial
embedding into the sphere and a fixed face where to puncture, as in
\Cref{fig:planar-embeddings}. The intuition is that an embedding into
the plane can be obtained from an embedding into the sphere by
puncturing the sphere at a point symbolising infinity (in any
direction) on the plane. Up to isotopy, the important data when
choosing a point of puncture is which face the points lies in.

\begin{figure} \centering
  \includegraphics[width=0.7\textwidth]{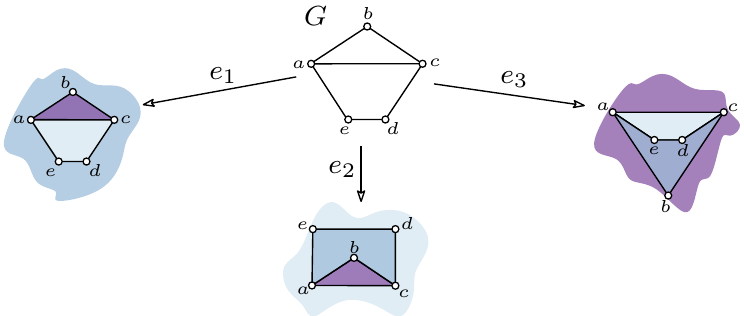}
  \caption{It is shown a graph along with three different planar
  embeddings, namely $e_1$, $e_2$, and $e_3$. We have shaded with
  different colours the three faces in each embedding.}
  \label{fig:planar-embeddings} \end{figure}

Our development here differs from other related works in the subject
(see \Cref{sec:related-work}), essentially by adopting the Voevodsky's
Univalence axiom (UA) present in HoTT. One consequence is that the
graphs maintain the structural identity principle, i.e.~isomorphic
graphs are equal and share the same structures and properties. Such a
correspondence turns to be crucial for formalising mathematics, as in
standard mathematical practice. For example, the identity type of a
graph helps us to understand its symmetries. Any automorphism of a
graph gives rise to an inhabitant of its identity type and vice versa.
One can then describe the group structure of the set of automorphisms
for a graph by studying its identity type \citep[§11]{Gross2018}, see
\Cref{sec:automorphisms}. We foresee an exciting opportunity to
combine ideas and prove results in combinatorics, graph theory, and
homotopy type theory.

\paragraph{Outline}

\Cref{sec:background} introduces the basic terminology and notation
used throughout the paper. In \Cref{sec:notions-of-graph-theory}, the
category of graphs is described, along with a few relevant examples.
In \Cref{sec:graph-embeddings}, we present different types for
graph-theoretic concepts, including the type of maps, faces, and
spherical maps, which allows us to define in HoTT the notion of planar
maps, and consequently, planar graphs in \Cref{sec:planar-embeddings}.
Additionally, to construct larger planar graphs, the planarity for
cyclic graphs and graph extensions are proved. \Cref{sec:related-work}
addresses the connection of this work with other developments. A few
concluding remarks and future work are discussed in
\Cref{sec:conclusions}.

\paragraph{Formalisation}

One exciting feature to work with systems such as HoTT is producing
machine-verified proofs \citep{harrison-notices}. To check the
correctness of this work, we use the proof assistant Agda
\citep{agda}, a computer system with support for dependent type
theories capable of managing the same abstraction level with which we
reason our mathematical theories on paper. Formal machine-verified
proofs can offer a window to new proofs and theorems
\citep{avigad-fvm}. Moreover, they also serve to find flaws and corner
cases that human reasoning might not see. We must therefore, pay
special attention to definitions and theorems, as they are the primary
input to these systems. The computer's formalisation process is an
exciting and challenging activity, full of details and technical
issues \citep{Gonthier2008, Appel1986}. We use Agda v\(2.6.2\) to
type-check the formalisation \citep{agdaformalisation} of the
essential parts of this paper. To be compatible with HoTT, we use the
flag \texttt{without-K} \citep{COCKX2016} and the flag
\texttt{exact-split} to ensure all clauses in a definition are
definitional equalities.

\section{Mathematical Foundation}\label{sec:background}

In this paper, we work with homotopy type theory
\citep{hottbook, Grayson2017}: a Martin-Löf intensional intuitionistic
type theory extended with the Univalence Axiom
\citep{Awodey2018, escardoUA}, proposed originally by Voevodsky
\citep{voevodsky2014equivalence}, and some higher inductive types
(HITs), such as propositional truncation. The presentation of our
constructions is informal in the style of the HoTT book
\citep{hottbook}. However, the essential constructions have been
verified in the proof assistant Agda \citep{agdaformalisation}.

HoTT emphasises the rôle of the identity type as a path-type
\citep{Awodey2012}. The intended interpretation is that elements,
\(a,a' : A\), are \emph{points} and that a witness of an equality
\(p : a = a'\) is a \emph{path} from \(a\) to \(a'\) in \(A\). Since
the identity type again is a type, we can iterate the process, which
gives each type the structure of an ∞-groupoid.

This may at first seem of little relevant when working with finite
combinatorics, as one would expect only types with trivial path-types
(sets) to show up in combinatorics. But we will see that types with
nontrivial path types do indeed arise naturally in combinatorics --
which is not surprise for someone familiar with the role of groups and
groupoids in this field, such as Joyal's work on combinatorial species
\citep{Baez2009, Yorgey}-- and that the paths in these types are often
various forms of permutations.

\hypertarget{notation}{%
\subsection{Notation}\label{notation}}

\begin{itemize}
\item
  Definitions are introduced by (\(:≡\)) while judgmental equalities
  use (\(≡\)).
\item
  The type \(\UU\) is an \emph{univalent} universe. The notation
  \(A : \UU\) indicates that \(A\) is a type. To state that \(a\) is
  of type \(A\) we write \(a : A\).
\item
  The equality sign of the identity type of \(A\) is denoted by
  (\(=_{A}\)). If the type \(A\) can be inferred from the context, we
  simply write \((=)\). The equalities between \(x,y: A\) are of type
  \(x =_{A} y\).
\item
  The type of non-dependent function between \(A\) and \(B\) is
  denoted by \(A \to B\).
\item
  Type equivalences are denoted by (\(≃\)). The canonical map for
  types is \(\mathsf{idtoequiv} : A = B → A ≃ B\) and its inverse
  function is called \(\mathsf{ua}\). Given \(e : A ≃ B\), the
  underlying function of type \(A \to B\) is denoted by \(e\) while
  \(\mathsf{ua}(e)\) is denoted by \(\overline{e}\). The coercion
  along \(p : A = B\) is the function \(\mathsf{coe}\) of type
  \(A \to B\).
\item
  The point-wise equality for functions (also known as
  \emph{homotopy}) is denoted by (\(\sim\)). The function
  \(\mathsf{happly}\) is of type \(f = g → f ∼ g\) and its inverse
  function is called \(\mathsf{funext}\).
\item
  The coproduct of two types \(A\) and \(B\) is denoted by \(A + B\).
  The corresponding data constructors are the functions
  \(\mathsf{inl} : A \to A + B\) and \(\mathsf{inr}: B \to A+B\).
\item
  Dependent sum type (Σ-type) is denoted by \(Σ_{x:A} B(x)\) while
  dependent product type (Π-type) is denoted by \(Π_{x:A}B(x)\).
\item
  The empty type and unit type are denoted by \(\mathbb{0}\) and
  \(\mathbb{1}\), respectively.
\item
  The type \(x \neq y\) denotes the function type
  \((x = y) \to \mathbb{0}\).
\item
  Natural numbers are of type \(\N\). The variable \(n\) is of type
  \(\N\), unless stated otherwise.
\item
  The type with \(n\) points is denoted by \([n]\).
\item
  The universe \(\UU\) is closed under the type formers considered
  above.
\item
  The function transport/substitution is denoted by \(\mathsf{tr}\).
  We denote by \(\mathsf{tr}_2\) the function of type
  \(\prod_{p : a₁ = a₂}\,(\tr{B}{p}{b_1} = b₂) → C(a₁, b₁) → C(a₂,b₂)\),
  where \(A : \UU\), \(B : A →\UU\), \(a₁,a₂ : A\), \(b₁ : B a₁\),
  \(b₂ : B a₂\), and \(C~:~\prod_{x :A}~(B x →\UU)\).
\end{itemize}

For the sake of readability in the upcoming section, we will use
variables \(A,B\) and \(X\) to denote types, unless stated otherwise.

\hypertarget{homotopy-levels}{%
\subsection{Homotopy Levels}\label{homotopy-levels}}

The following establishes a level hierarchy for types with respect to
the nontrivial homotopy structure of the identity type. The first four
homotopy levels are of special interest for this work. These four
levels are enough for expressing the mathematical objects we want to
construct.

\begin{definition}\label{def:n-types} Let $n$ be an integer such that
$n \geq -2$. One states that a type $A$ is an $n$\emph{-type} and that
it has homotopy level $n$ if the proposition
$\mathsf{is\mbox{-}level}(n,A)$ holds.
\begin{align*}
\mathsf{is\mbox{-}level}(-2, A)  &:≡ \sum_{(c:A)} \prod_{(x:A)} (c = x), \\
\mathsf{is\mbox{-}level}(n+1, A) &:≡ \prod_{(x,y:A)}\,\mathsf{is\mbox{-}level}(n,x = y).
\end{align*}
\end{definition}

For convenience of the presentation, one states that a type of
homotopy level \((-2)\) is a contractible type, a level \((-1)\) is a
proposition, a level \((0)\) is a set, and finally, the level \((1)\)
is a \(1\)-groupoid. We also define
\(\mathsf{isProp}(A):\equiv \mathsf{is\mbox{-}level}(-1, A)\) and
\(\mathsf{isSet}(A):\equiv \mathsf{is\mbox{-}level}(0, A)\). Types
that are propositions are of type \(\mathsf{hProp}\) and similarly
with the other levels. Additionally, it is possible to have an
\(n\)-type out of any type \(A\) for \(n ≥ −2\). This can be done
using the construction of a higher inductive type called
\(n\)-truncation \citep[§7.3]{hottbook} denoted by \(\|A\|_n\). The
case for \((-1)\)-truncation is called \emph{propositional truncation}
(or \emph{reflection}), and is often simply denoted by \(\|A\|\).

\begin{definition}\label{def:propositional-truncation}
\emph{Propositional truncation} of a type $A$ denoted by $\| A
\|_{-1}$ is the \emph{universal solution} to the problem of mapping
$A$ to a proposition $P$. The elimination principle of this
construction gives rise a map of type $\| A \| \to P$ which requires a
map $f : A → P$ and a proof that $P$ is a proposition.
\end{definition}

Propositional truncation allows us to model the \emph{mere} existence
of inhabitants of type \(A\). We state that \(x\) is \emph{merely}
equal to \(y\) when \(\|x = y\|\) for \(x,y : A\). Then, we can
express in HoTT by means of propositional truncation: logical
conjunction\footnote{\((P ∨ Q) :≡ \| P + Q \|\).}, the
disjunction\footnote{\((P ∧ Q) :≡ \| P × Q \|\).}, and existential
\footnote{\((\exists (x:A) P(x)) :≡ \| Σ_{x :A} Px \|\).}
quantification.

\begin{definition}\label{def:connected-component} Given $x:A$, the
\emph{connected component} of $x$ in $A$ is the collection of all
$y:A$ that are merely equal to $x$, i.e. $Σ_{y:A} \| y = x \|$. If $\|
x = y\|$, one says that $x$ is connected to $y$.
\end{definition}

\begin{definition}
The type $A$ is called \emph{connected} if $\|A\|$ holds and also, if
all $x:A$ belong to the same connected component.
\end{definition}

\begin{lemma}\label{lem:connected-share-same-predicates} Terms in the
same connected component share the same propositional properties. If
$P : A \to \mathsf{hProp}$ and $x, y : A$ are connected in $A$ then
one gets the equivalence $P(x) ≃ P(y)$.
\end{lemma}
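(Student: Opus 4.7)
The plan is to use the universal property of propositional truncation (\Cref{def:propositional-truncation}) to reduce the hypothesis $\|x = y\|$ to an actual path $p : x = y$. In order to do so, I must first verify that the goal, $P(x) \simeq P(y)$, is itself a proposition. This holds because both $P(x)$ and $P(y)$ are propositions by assumption: any map between propositions is a proposition (by function extensionality together with the fact that propositions are closed under $\Pi$-types), and the property of being an equivalence is always a proposition. Hence equivalences between propositions form a proposition.

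Once this is established, the elimination principle of $\|x = y\|$ into the proposition $P(x) \simeq P(y)$ applies, and it suffices to produce an equivalence from an inhabitant $p : x = y$. Given such $p$, one route is to form $\ap P\, p : P(x) = P(y)$ in the universe and then apply $\mathsf{idtoequiv}$ to obtain the desired equivalence. A more direct route, which avoids passing through universe equalities, is to define
\begin{align*}
 f &:\equiv \lambda q.\,\tr{P}{p}{q} \,:\, P(x) \to P(y), \\
 g &:\equiv \lambda q.\,\tr{P}{\sym\,p}{q} \,:\, P(y) \to P(x),
\end{align*}
and observe that $f$ and $g$ are mutually inverse: both composites are maps from a proposition to itself, hence equal to the identity automatically.

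The only subtle step is the first one — recognising that the target is a proposition so that the truncation elimination applies — but this is essentially immediate from closure of $\hProp$ under the relevant type formers. Everything else is a routine application of transport, so no genuine obstacle is expected.
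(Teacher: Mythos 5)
The paper states this lemma without supplying a proof, so there is nothing to compare against, but your argument is correct and is the natural one: establish that $P(x) \simeq P(y)$ is a proposition (equivalences between propositions form a proposition, since $P(x) \to P(y)$ is a proposition, $\mathsf{isEquiv}$ is always a proposition, and propositions are closed under $\Sigma$ with propositional fibers), then apply the universal property of $\|x = y\|$ and transport along the resulting path. Both of your routes for the final step work; one small pedantic point on the first route is that $\ap P\,p$ lands in $P(x) =_{\hProp} P(y}$, so one should first compose $P$ with the underlying-type projection $\hProp \to \UU$ before applying $\mathsf{idtoequiv}$; the direct transport route sidesteps this entirely (and, incidentally, the round-trips being the identity already follows from the general law $\tr{P}{\sym p}{\mbox{-}} \circ \tr{P}{p}{\mbox{-}} = \tr{P}{p \cdot \sym p}{\mbox{-}} = \idfunc$, even without appealing to propositionality).
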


\hypertarget{finite-types}{%
\subsection{Finite Types}\label{finite-types}}

The finiteness of a type \(A\) is the existence of a bijection between
\(A\) and the type \([n]\) for some \(n:\mathbb{N}\). However, this
description is not a structure on \(A\), providing it with a specific
equivalence \(A ≃ [n]\), but rather a property --- a mere proposition.
This ensures that the identity type on the total type of finite types
is free to permute the elements, without having to respect a chosen
equivalence.

\begin{definition}\label{def:finite-type}
Given $X : \UU$, let $\mathsf{isFinite}(X) : \UU$ be given by
\begin{equation*}
\mathsf{isFinite}(X) :≡ \sum_{(n:\N)} \left\| X \simeq [n] \right\|.
\end{equation*}
\end{definition}

\begin{lemma}\label{lem:is-finite-is-prop}
The type $\mathsf{isFinite}(X)$ is
a proposition.
\end{lemma}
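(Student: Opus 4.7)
The plan is to show that any two inhabitants of $\mathsf{isFinite}(X)$ are equal. Given $(n,p)$ and $(m,q)$ in $\sum_{k:\N}\|X \simeq [k]\|$, I first observe that, because the second component is a propositional truncation and hence a proposition, equality in this $\Sigma$-type reduces to equality of the first components. Thus the entire statement reduces to showing that $n = m$.

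Since $\N$ is a set, the type $n = m$ is itself a proposition, and so I may use the elimination principle of propositional truncation (\Cref{def:propositional-truncation}) on both $p$ and $q$ to obtain genuine equivalences $f : X \simeq [n]$ and $g : X \simeq [m]$. Composing gives an equivalence $g \circ f^{-1} : [n] \simeq [m]$. It then suffices to invoke the standard fact that $[n] \simeq [m]$ implies $n = m$, which I would prove by a double induction on $n$ and $m$: the base case $n = 0$ forces $m = 0$ (since an equivalence out of the empty type can only land in the empty type), and the inductive step reduces an equivalence $[n+1] \simeq [m+1]$ to an equivalence $[n] \simeq [m]$ by removing the image and preimage of a chosen point --- a construction which is a bit fiddly in HoTT but entirely routine.

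The main obstacle, such as it is, lies in this last cardinality lemma, since it must be proved carefully in a constructive and univalent setting. The cleanest route is to first establish that $[n] \simeq [m]$ is logically equivalent to the mere existence of a bijection, then build the point-removal equivalence $([n+1] \simeq [m+1]) \simeq ([n] \simeq [m])$ that justifies the inductive step. Everything else in the proof --- the characterisation of equality in $\Sigma$-types with propositional fibres, and the elimination of truncations into propositions --- is standard machinery of HoTT.
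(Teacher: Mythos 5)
Your proposal matches the paper's proof essentially step for step: reduce to equality of the first components because the fibre is a proposition, eliminate the truncations into the proposition $n=m$ to obtain genuine equivalences $X\simeq[n]$ and $X\simeq[m]$, compose, and appeal to the cardinality lemma that $[n]\simeq[m]$ implies $n=m$. The only difference is that you spell out a proof sketch for that last lemma, which the paper simply cites as a well-known result on finite sets.
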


\begin{proof} Let $(n, p), (m, q) : \mathsf{isFinite}(X)$, which we
want to prove equal. Since $p$ and $q$ are elements of a family of
propositions, it is sufficient to show that $n=m$. This equation is a
proposition, so we can apply the truncation-elimination principle to
get $X ≃ [n]$ and $X ≃ [m]$. Thus, from $[n] ≃ [m]$ follows that
$n = m$ --- by a well-known result on finite sets. \end{proof}

A type \(X\) is \emph{finite} if \(\mathsf{isFinite}(X)\) holds. The
natural number \(n\) is referred as the cardinal number of \(X\). Now,
since \(\mathsf{isFinite}(X)\) is a proposition, the total type of
finite types, \(\sum_{X :\UU} \mathsf{isFinite}(X)\), has permutations
as its identity type. This shows that \Cref{def:finite-type} is
equivalent to the type \(∃_{n:N} (X = [n])\). However, the former
definition allows us to easily obtain \(n\) by projecting on the first
coordinate. We find this distinction more practical than the latter
for certain proofs, as in \Cref{lem:cyclic-equation}. Furthermore, one
can show that any property on \([n]\), for example, \say{being a set}
and \say{having
decidable equality} can be transported to any finite type.

\subsection{Cyclic Types}\label{sec:cycle-types}

The notion of being a cycle for a type is not a property but a
structure. The description of such a structure can be addressed in
several ways. For example, one can state that a type is cyclic if it
has a cycle order, i.e.~there exists a (ternary) relation on the type
for which the axioms of a cyclic ordering hold true \citep{Haar2014}.
In our approach, \Cref{def:cyclic-type}, we establish that being
cyclic for a type is a structure given by preserving the structure of
cyclic subgroups of permutations on \([n]\).

\begin{wrapfigure}[6]{R}[0pt]{0pt}
  \centering
  \begin{tikzcd}
  n-1 \arrow[d, "\mathsf{pred}^{n-i-1}", dotted, maps to] & 0 \arrow[l, "\mathsf{pred}"', maps to] & 1 \arrow[l, "\mathsf{pred}"', maps to]               \\[2pt]
  i \arrow[rr, "\mathsf{pred}"', maps to]                 &                                        & i-1 \arrow[u, "\mathsf{pred}^{i-2}"', dotted, maps to]
  \end{tikzcd}
  \end{wrapfigure}

As with natural numbers, we can define counterpart functions for the
successor and the predecessor in \([n]\) if \(n \geq 1\). The
predecessor function \(\mathsf{pred}\), of type \([n] → [n]\), can be
defined as the mapping: \(0↦ (n-1)\) and \((n+1) ↦n\). The
corresponding \(\mathsf{succ}\) function is the inverse of
\(\mathsf{pred}\), and they are therefore both equivalences. The
\(\mathsf{pred}\) function generates a cyclic subgroup (of order
\(n\)) of the group of permutations on \([n]\). An equivalent cyclic
subgroup can be defined by means of the \(\mathsf{succ}\) function;
however, \(\mathsf{pred}\) is more straightforward to define than
\(\mathsf{succ}\). Now, we want to mirror the structure of \([n]\)
given by \(\mathsf{pred}\) for any finite type \(A\) along with an
endomap \(\varphi : A → A\). This can be done by establishing a
structure-preserving map between \((A, \varphi)\) and
\(([n], \mathsf{pred})\) in the category of endomaps of sets. This is
the idea behind \Cref{def:cyclic-type} and the fact that the condition
\emph{structure-preserving} can be attained.

\begin{definition}\label{def:cyclic-type}
The type of \emph{cyclic structures} on a type $A$ is given by $\mathsf{Cyclic}(A)$.
\begin{equation*}
\mathsf{Cyclic}(A) :≡ \sum_{(\varphi : A → A)}\sum_{(n : \N)} \| ∑_{(e : A ≃ [n])} e ∘ \varphi = \mathsf{pred} ∘ e \| .
\end{equation*}
\end{definition}

A cyclic structure on a type \(A\) is denoted by a triple
\(⟨A, f, n⟩\) where \((f, n, \mbox{-}) : \mathsf{Cyclic}(A)\). Given
such a triple, we can refer to \(A\) as an \(n\)-\emph{cyclic type}.
By \Cref{lem:finite-type-prop1,lem:finite-type-prop2}, one can prove
that the \(n\)-cyclic type \(A\) is not only a finite set, but the
function \(f\) is a bijection. To avoid any confusions, we denote
\(f\) by \(\mathsf{pred}_{A}\) and the inverse by
\(\mathsf{suc}_{A}\). One may also drop the previous sub-indices. To
define such bijections as products of cycles, we use the same notation
from group theory. For example, the permutation denoted by \((a)(bc)\)
permutes \(b\) and \(c\) and fixes \(a\).

\begin{lemma}\label{lem:finite-type-prop2}
Let $P$ be a family of propositions of type $\prod_{X:\UU} (X \to X) \to \mathsf{hProp}$
and an $n$-cyclic structure $⟨A, f, n⟩$.
If $P([n],\mathsf{pred})$, then $P(A , f)$.
\end{lemma}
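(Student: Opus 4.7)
The plan is to eliminate the propositional truncation inside the cyclic structure, use univalence to turn the resulting equivalence into an equality between $A$ and $[n]$, and then transport $P([n], \mathsf{pred})$ along this equality. Since the goal $P(A, f)$ is itself a proposition by hypothesis on $P$, we are entitled to use the recursion principle of propositional truncation.

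First I would unpack the cyclic structure $\langle A, f, n\rangle$. The third component has type $\|\sum_{(e : A \simeq [n])} e \circ f = \mathsf{pred} \circ e\|$, so by \Cref{def:propositional-truncation} we may assume an honest pair $(e, \alpha)$ with $e : A \simeq [n]$ and $\alpha : e \circ f = \mathsf{pred} \circ e$. Applying univalence gives $p :\equiv \overline{e} : A = [n]$, and by the computation rule $\mathsf{coe}(p) \sim e$ (and $\mathsf{coe}(p^{-1}) \sim e^{-1}$).

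Next I would show $(A, f) = ([n], \mathsf{pred})$ in $\sum_{X : \UU} (X \to X)$. It suffices to exhibit $p : A = [n]$ together with a proof that $\tr{X \mapsto X \to X}{p}{f} = \mathsf{pred}$. The transport of an endomap along an equality of types is computed pointwise as $x \mapsto \mathsf{coe}(p)\bigl(f(\mathsf{coe}(p^{-1})(x))\bigr)$, which by the computation rule above is pointwise equal to $e \circ f \circ e^{-1}$. Using $\alpha$ and $\mathsf{funext}$, this equals $\mathsf{pred} \circ e \circ e^{-1} = \mathsf{pred}$. Packaging this with $p$ via the characterisation of equality in $\Sigma$-types, or more conveniently by a single application of $\mathsf{tr}_2$ as defined in the notation section, produces the desired equality of pairs.

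Finally, viewing $P$ as a map on $\sum_{X:\UU}(X \to X)$, I would transport the hypothesis $P([n], \mathsf{pred})$ backwards along this equality to conclude $P(A, f)$. The only subtle step is the transport computation for the endomap family; everything else is bookkeeping. Because the overall goal is a proposition, the choice of $(e, \alpha)$ made when eliminating the truncation is irrelevant, which justifies the whole argument.
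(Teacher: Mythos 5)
Your proposal is correct and takes essentially the same approach as the paper: both recognise that the goal is a proposition so the truncation can be eliminated, producing an equality $(A,f) = ([n],\mathsf{pred})$ in $\sum_{X:\UU}(X\to X)$ along which $P$ is transported. The paper simply delegates the transport step to the abstract \Cref{lem:connected-share-same-predicates} after observing that being cyclic makes $(A,f)$ and $([n],\mathsf{pred})$ connected, whereas you unfold that lemma inline, carrying out the univalence and endomap-transport computation explicitly.
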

\begin{proof}
It follows from \Cref{lem:connected-share-same-predicates}. Note that
being cyclic for a type is equivalent to saying $(A,f)$ and $([n],
\mathsf{pred})$ are connected in $(X:\UU) \times (X \to X)$.
\end{proof}

\begin{lemma}\label{lem:finite-type-prop1}
Let $P$ be a family of propositions of type $\UU \to \mathsf{hProp}$ and an
$n$-cyclic structure $⟨A, f, n⟩$. If $P([n])$, then $P(A)$.
\end{lemma}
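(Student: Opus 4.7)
The plan is to reduce this to the preceding \Cref{lem:finite-type-prop2}, since the current statement is a ``forgetful'' version of it: the family $P$ here depends only on the underlying type, not on an accompanying endomap. This reduction is essentially free, so I expect no real obstacle; the alternative direct proof, via truncation elimination and univalence, is equally short.

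More concretely, the first step is to lift $P : \UU \to \hProp$ to a family $P'$ of type $\prod_{X:\UU} (X \to X) \to \hProp$ by setting $P'(X, g) :\equiv P(X)$. This $P'$ is a family of propositions because each $P(X)$ is. Next, the hypothesis $P([n])$ is definitionally $P'([n], \mathsf{pred})$. Finally, applying \Cref{lem:finite-type-prop2} to $P'$ and the given $n$-cyclic structure $\langle A, f, n\rangle$ yields $P'(A, f) \equiv P(A)$, as required.

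If one prefers an independent argument, it goes as follows. Since $P(A)$ is a proposition, one may apply the propositional truncation-elimination principle to the third component of the cyclic structure on $A$ to extract an actual equivalence $e : A \simeq [n]$ (the equation $e \circ f = \mathsf{pred} \circ e$ is discarded). By univalence, $\overline{e} : A = [n]$, and transporting the assumption $P([n])$ along $\mathsf{sym}(\overline{e})$ produces an element of $P(A)$.

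The main (and minor) point to notice is that the cyclic structure gives us a \emph{merely} existing equivalence $A \simeq [n]$, which would not suffice for proving arbitrary structure on $A$; but because the goal $P(A)$ is itself a proposition, the propositional truncation does not obstruct the argument.
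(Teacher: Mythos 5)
Your proposal is correct, and the first (reduction) argument matches the approach the paper implicitly intends: the paper states \Cref{lem:finite-type-prop1} immediately after \Cref{lem:finite-type-prop2} without a separate proof, precisely because it is the special case obtained by lifting $P : \UU \to \hProp$ to the constant family $P'(X,g) :\equiv P(X)$. Your alternative direct argument via truncation elimination and univalence is also sound, and your closing remark correctly identifies the one subtlety, namely that the merely existing equivalence suffices only because the goal $P(A)$ is a proposition.
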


In any finite type, every element is searchable. In particular, given
an \(n\)-cyclic type \(⟨A, f, n⟩\), one can search any element by
iterating at most \(n\) times, the function \(f\) on any other
element.

\begin{lemma}\label{lem:cyclic-equation} If $A$ is an $n$-cyclic type,
then there exists a unique number $k$ with $k<n$ such that
$\mathsf{pred}_{A}^k(a) = b$ for all $a,b:A$.
\end{lemma}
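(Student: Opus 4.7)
The plan is to read the statement as claiming that for every pair $a,b:A$ there is a unique $k<n$ with $\mathsf{pred}_{A}^{k}(a)=b$. Since existence-and-uniqueness of such a $k$ is a proposition (the uniqueness clause makes the underlying $\Sigma$-type a proposition, because $k$ lives in $\{k:\N \mid k<n\}$, which is a set), I can unpack the propositional truncation appearing in the cyclic structure $\langle A, \mathsf{pred}_{A}, n\rangle$. This yields a concrete equivalence $e : A \simeq [n]$ together with a witness of the commutation $e \circ \mathsf{pred}_{A} = \mathsf{pred} \circ e$.

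The next step is to transport the equation along $e$. A routine induction on $k$ lifts the commutation to
\begin{equation*}
    e \circ \mathsf{pred}_{A}^{k} \;=\; \mathsf{pred}^{k} \circ e \qquad \text{for every } k:\N.
\end{equation*}
Since $e$ is an equivalence, $\mathsf{pred}_{A}^{k}(a)=b$ is logically equivalent to $\mathsf{pred}^{k}(e(a))=e(b)$. Hence the problem reduces to showing that for any $i,j:[n]$ there is a unique $k<n$ with $\mathsf{pred}^{k}(i) = j$.

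For the base case on $[n]$, I would exhibit $k \coloneqq (i-j) \bmod n$ as the required witness. Existence is a direct computation on $[n]$, and uniqueness follows from the fact that $\mathsf{pred}$ generates a cyclic subgroup of order $n$ in the permutations of $[n]$: if $\mathsf{pred}^{k}(i) = \mathsf{pred}^{k'}(i)$ for $k,k'<n$, then $\mathsf{pred}^{k-k'}$ fixes $i$, which by the orbit structure forces $k=k'$. Both clauses are decidable because $[n]$ has decidable equality, so the argument remains constructive.

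The main obstacle is not the arithmetic on $[n]$, which is standard, but the proper handling of the propositional-truncation elimination: one must verify once and for all that the goal is a proposition before opening the truncation, and then check that the inductive extension of the commutation $e\circ \mathsf{pred}_{A}^{k} = \mathsf{pred}^{k}\circ e$ is carried out with the correct functorial behaviour of iteration. Once that bookkeeping is in place, the transport through $e$ and the modular-arithmetic argument on $[n]$ fit together with no further surprises.
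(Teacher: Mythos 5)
Your proposal is correct, and the paper in fact offers no explicit proof of this lemma — it only remarks (just after Definition 2.6) that the projection of $n$ from $\mathsf{isFinite}$ makes this kind of argument practical. Your strategy — observe that unique existence over the set $\{k:\N \mid k<n\}$ of a prop-valued family is itself a proposition, unpack the truncated $e : A \simeq [n]$ with $e \circ \mathsf{pred}_A = \mathsf{pred}\circ e$, extend that commutation to all iterates by induction on $k$, and reduce to the modular-arithmetic computation on $[n]$ — is exactly the intended route and fills the gap the paper leaves implicit. One small phrasing slip: when establishing uniqueness you write that $\mathsf{pred}^{k-k'}$ fixes $i$, which presumes $k \geq k'$; better to argue directly that $(i - k) \equiv (i - k') \pmod n$ forces $k = k'$ for $k,k' < n$, or take a without-loss-of-generality case split.
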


The total type, \(∑_{A:\UU} \mathsf{Cyclic}(A)\), is the classifying
type \citep[§4.6-7]{symmetrybook} of finite cyclic groups. In the
remaining of this section we compute the identity type between two
finite cyclic types that we use, for example, in
\Cref{ex:bouquet-maps} to enumerate the maps of the bouquet graph
\(B_2\).

\begin{lemma}\label{lem:cyclic-is-set}
  The type $\mathsf{Cyclic}(A)$ is a set.
\end{lemma}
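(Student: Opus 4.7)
The plan is to unfold $\mathsf{Cyclic}(A)$ as an iterated $\Sigma$-type and argue that each of its three components is a set. The third component is a propositional truncation, hence automatically a proposition and so a set; the second component is $\mathbb{N}$, which is a set; the only subtle component is the first: the function type $A \to A$ is a set precisely when $A$ is, and $A$ is not assumed to be a set a priori.

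First, I would reduce the goal $\isSet{\mathsf{Cyclic}(A)}$ to showing, for arbitrary $x,y : \mathsf{Cyclic}(A)$, that the identity type $x =_{\mathsf{Cyclic}(A)} y$ is a proposition. Because being a proposition is itself a mere proposition, I am free to use propositional truncation-elimination on the third component of $x$ should I need an untruncated witness. A cleaner route, however, is to invoke \Cref{lem:finite-type-prop1} directly with $P :\equiv \mathsf{isSet}$ and the $n$-cyclic structure on $A$ supplied by $x$: since $\isSet{[n]}$ holds, the lemma yields $\isSet{A}$ without having to pierce the truncation by hand.

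With $A$ established as a set, $A \to A$ is a set (function types into sets are sets), $\mathbb{N}$ is a set, and $\|\sum_{(e : A \simeq [n])} e \circ \varphi = \mathsf{pred} \circ e\|$ is a proposition and hence a set. A $\Sigma$-type whose base and each fibre is a set is itself a set, and the conclusion follows.

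The main obstacle is exactly the one just flagged: without the observation that the existence of a cyclic structure on $A$ forces $A$ to be a set, one cannot conclude that $A \to A$ is a set and the argument stalls on the first component. Once this is extracted from \Cref{lem:finite-type-prop1}, the remainder is a routine application of closure of sets under $\Sigma$.
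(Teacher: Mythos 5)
Your proof is correct and the argument is sound. The paper states this lemma without supplying a written proof, so there is nothing to compare against; the key insight you identify — that the existence of a cyclic structure forces $A$ to be a set via \Cref{lem:finite-type-prop1}, after which closure of sets under $\Sigma$, $\Pi$, $\mathbb{N}$, and propositional truncation finishes the job — is exactly the content needed and is the natural route. The one step that deserves explicit mention (and which you handle correctly) is that you only obtain $\isSet{A}$ given an inhabitant of $\mathsf{Cyclic}(A)$; this is licensed by the general fact that $\isSet{X}$ follows from $X \to \isSet{X}$, since the definition of being a set lets you assume $x,y:X$ before having to produce $\isProp{x = y}$. Minor nit: the phrase \emph{precisely when} in ``$A \to A$ is a set precisely when $A$ is'' overstates what you need and is not quite an equivalence in general (e.g.\ for uninhabited $A$); the direction actually used, that $\isSet{A}$ implies $\isSet{A \to A}$, is the correct and standard one.
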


\begin{lemma}\label{lem:cyclic-identity-type}
  Given $\mathcal{A}$ and $\mathcal{B}$ defined
  by $\langle A , f , n \rangle$ and $\langle B , g , m \rangle$,
  $$(\mathcal{A} = \mathcal{B}) \simeq  ∑_{(α:A = B)} (\coe{α} ∘ f = g ∘ \coe{α}).$$
\end{lemma}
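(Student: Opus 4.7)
The plan is to peel off the iterated $\Sigma$-types in the definition of $\mathsf{Cyclic}$ by repeated application of the characterisation of identities in $\Sigma$-types, contracting the components which live in propositions.

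First I would show that, given $X : \UU$ and $\varphi : X \to X$, the predicate
\[
Q(X, \varphi) \;:\equiv\; \sum_{n : \N} \bigl\| \textstyle\sum_{(e : X \simeq [n])} e \circ \varphi = \mathsf{pred} \circ e \bigr\|
\]
is a proposition. Given two elements $(n, t)$ and $(m, s)$, equality reduces to $n = m$, since the truncation component is automatic. Because $n = m$ is itself a proposition, I may eliminate the truncations to extract genuine equivalences $e_{1} : X \simeq [n]$ and $e_{2} : X \simeq [m]$, compose to obtain $[n] \simeq [m]$, and conclude by the uniqueness of cardinality for finite sets used in \Cref{lem:is-finite-is-prop}.

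Having established this, I can rearrange $\sum_{X : \UU} \mathsf{Cyclic}(X) \simeq \sum_{(X, \varphi) : \sum_{X : \UU} (X \to X)} Q(X, \varphi)$, and since $Q$ is propositional the identity type $\mathcal{A} = \mathcal{B}$ reduces to $(A, f) = (B, g)$ inside $\sum_{X : \UU}(X \to X)$. The standard $\Sigma$-identity principle then yields
\[
\bigl((A, f) = (B, g)\bigr) \;\simeq\; \sum_{\alpha : A = B} \bigl( \mathsf{tr}(\alpha, f) = g \bigr),
\]
where transport is taken along the family $X \mapsto (X \to X)$. A path-induction on $\alpha$ identifies this transport with $\coe{\alpha} \circ f \circ \coe{\sym\,\alpha}$, and hence the equation is equivalent, by post-composing with the equivalence $\coe{\alpha}$, to $\coe{\alpha} \circ f = g \circ \coe{\alpha}$, as desired.

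The main obstacle is the propositionality of $Q$: it hinges on showing that the order $n$ of a cyclic structure is determined by the underlying endomap $(X, \varphi)$. Once this is in hand, the remaining steps are routine manipulations of $\Sigma$-type identities and transport along a path of types.
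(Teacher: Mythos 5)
Your argument is correct and follows essentially the same route as the paper: decompose the identity type of the iterated $\Sigma$-type, observe that the trailing components form a proposition so that they contribute nothing, and then compute transport along the family $X \mapsto (X \to X)$. Your regrouping of the cyclic data into the propositional predicate $Q(X,\varphi)$ is a tidier bookkeeping than the paper's flat decomposition, and — usefully — it makes explicit the cardinality-uniqueness argument (eliminating the truncations into the proposition $n=m$, exactly as in \Cref{lem:is-finite-is-prop}) that the paper only implicitly invokes when it asserts that $(n = m) \times (p = q)$ is contractible.
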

\begin{proof}
We show the equivalence by \Cref{comp-identity-fin}. In
\Cref{comp-identity-fin-1}, we expand the cycle type definition for
$\mathcal{A}$ and $\mathcal{B}$. \Cref{comp-identity-fin-2} follows
from the characterisation of the identity type between pairs in a
$\Sigma$-type \cite[§3.7]{hottbook}. Note that in
\Cref{comp-identity-fin-2}, the type $(n = m) \times (p = q)$ is a
contractible type, i.e. equivalent to $\mathbb{1}$, the one-point
type. We can then simplify the inner $\Sigma$-type to its base in
\Cref{comp-identity-fin-2a} to obtain by the equivalence $\Sigma_{x:A}
\mathbb{1} ≃ A$, \Cref{comp-identity-fin-3}. Finally,
\Cref{comp-identity-fin-4} is a consequence of transporting functions
along the equality $\alpha$. The conclusion is that the identity type
$\mathcal{A}=\mathcal{B}$ is equivalent to the type of equalities $α:A
= B$ along with a proof that the structure of $f$ is preserved in the
structure of $g$. 

\begin{subequations}\label[calc]{comp-identity-fin}
\begin{align}
\label[equiv]{comp-identity-fin-1}
(\mathcal{A} = \mathcal{B})  &≡ ((A, (f, n, p )) = (B , (g, m, q))) \\
\label[equiv]{comp-identity-fin-2}
&≃ ∑_{(α:A = B)} ∑_{(β : \tr{\,\lambda X.X → X}{α}{f}=g)} (n = m) \times (p = q)\\
\label[equiv]{comp-identity-fin-2a}
&≃ ∑_{(α:A = B)} ∑_{(β : \tr{\,\lambda X.X → X}{α}{f}=g)} \mathbb{1}\\
\label[equiv]{comp-identity-fin-3}
&≃ ∑_{(α:A = B)} (\tr{\,\lambda X.X → X}{α}{f}=g) \\
\label[equiv]{comp-identity-fin-4}
&≃ ∑_{(α:A = B)} (\coe{α} ∘ f = g ∘ \coe{α}).
\end{align}
\end{subequations}

\end{proof}

\section{Fundamental Notions of Graph Theory}\label{sec:notions-of-graph-theory}

While graphs are ubiquitous in science, exactly what is referred to by
the word \say{graph} depends upon the context. In some settings,
graphs are undirected while in others, graphs are directed. Sometimes
self-edges are allowed, sometimes not. Which notion is chosen in a
given context depends on the application, e.g.~power graphs in
computational biology, quivers in category theory, and networks in
network theory.

\subsection{The Type of Graphs}\label{sec:type-of-graphs}

In this work, a graph is a directed multigraph (self-edges are
allowed). The type of graphs is the set-level structure of the notion
of abstract graphs.

\begin{definition}\label{def:graph} A graph is an
object of type $\Graph$. The corresponding data is a \emph{set} of
points in a space (also known as \emph{nodes}) and a set for each pair
of points (also known as \emph{edges}).

\begin{equation*}
\Graph :≡ \sum_{(\Node:\UU)}\sum_{(\Edge:\Node → \Node →
\UU)}\, \isSet{\Node} × \prod_{(x,y : \Node)} \isSet{\Edge(x,y)}.
\end{equation*}
\end{definition}

The initial graph examples are the trivial ones: the \emph{empty}
graph and the \emph{unit} graph defined, respectively as
\(K_0 :≡ (\mathbb{0}, λ\,u\,v.\mathbb{0})\) and
\(K_1 :≡ (\mathbb{1}, λ\,u\,v.\mathbb{1})\).

Given a graph \(G\), for brevity, the set of nodes and the family of
edges are denoted by \(\Node_{G}\) and \(\Edge_{G}\), respectively. In
this way, the graph \(G\) is defined as
\((\Node_{G}, \Edge_{G}, (s_{G}, t_{G})) : \mathsf{Graph}\) where
\(s_{G} : \isSet{\Node_{G}}\) and
\(t_{G} : \prod_{x,y:\Node_{G}} \isSet{\Edge_{G}(x,y)}\). We may refer
to \(G\) only as the pair \((\Node_{G}, \Edge_{G})\), unless we
require to show the remaining data, the mere propositions \(s_{G}\)
and \(t_{G}\). Additionally, we will use the variables \(G\) and \(H\)
to be graphs, and variables \(x,y\) and \(z\) to be nodes in \(G\),
unless stated otherwise.

\begin{definition}\label{def:graph-homomorphism} A \emph{graph
homomorphism} from $G$ to $H$ is a pair of functions $(α, β)$ such
that $α : \Node_{G} → \Node_{H}$ and $β : \prod_{x,y : \Node_{G}}
\Edge_{G}(x,y) → \Edge_{H}(α(x),α(y))$. We denote by $\Hom{G}{H}$ the
type of these pairs.
\end{definition}

We denote by \(\mathsf{id}_{G}\), for any graph \(G\), the identity
graph homomorphism where the corresponding \(\alpha\) and \(\beta\)
are the identity functions.

\begin{lemma}\label{lem:hom-is-set}
The type $\Hom{G}{H}$ forms a set.
\end{lemma}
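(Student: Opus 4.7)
The plan is to decompose $\Hom{G}{H}$ as an iterated $\Sigma$-type and then apply the standard closure properties of the $n$-types: that sethood is preserved under $\Pi$-types with set-valued codomain, and that a $\Sigma$-type whose base is a set and whose fibres are all sets is again a set.

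First I would recall that $\Node_H$ is a set by the graph structure on $H$, so the function type $\Node_G \to \Node_H$ is a set, since a function type into a set is a set (function extensionality reduces equality of functions to a pointwise equality in a set). Next, for a fixed $\alpha : \Node_G \to \Node_H$, each $\Edge_H(\alpha(x),\alpha(y))$ is a set by the graph structure on $H$, so for each pair $(x,y)$ the type $\Edge_G(x,y) \to \Edge_H(\alpha(x),\alpha(y))$ is a set by the same argument. Iterating the fact that $\Pi$-types preserve sethood when the codomain family is set-valued, the dependent product $\prod_{x,y:\Node_G} \Edge_G(x,y) \to \Edge_H(\alpha(x),\alpha(y))$ is a set.

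Finally I would assemble: $\Hom{G}{H}$ unfolds to $\sum_{\alpha : \Node_G \to \Node_H}\,\prod_{x,y:\Node_G} \Edge_G(x,y) \to \Edge_H(\alpha(x),\alpha(y))$, which is a $\Sigma$-type whose base is a set and whose fibres are sets (by the two paragraphs above). Since $\Sigma$-types preserve sethood under these hypotheses, $\Hom{G}{H}$ is a set.

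No step here is genuinely difficult; the entire argument is a mechanical application of the closure properties of $n$-types collected in the HoTT Book, so the only thing worth being careful about is invoking sethood of $\Node_H$ and $\Edge_H(\mbox{-},\mbox{-})$ from the data $s_H$ and $t_H$ packaged in \Cref{def:graph}.
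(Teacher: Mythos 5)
Your proof is correct, and it is exactly the routine argument the paper silently relies on: unfold $\Hom{G}{H}$ as a $\Sigma$-type over $\alpha:\Node_G\to\Node_H$, use $s_H$ and $t_H$ to see that both the base and each fibre are $\Pi$-types into sets (hence sets, via function extensionality and closure of $n$-types under dependent products), and then apply closure of sets under $\Sigma$. The paper states the lemma without proof precisely because it is this mechanical application of the closure properties of $n$-types, so nothing is missing and nothing further needs to be said.
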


\hypertarget{the-category-of-graphs}{%
\subsection{The Category of Graphs}\label{the-category-of-graphs}}

Graphs as objects and graph homomorphisms as the corresponding arrows
form a small pre-category. In fact, the type of graphs is a small
univalent category in the sense of the HoTT Book
\citep[§9.1.1]{hottbook}. This fact follows from
\Cref{thm:equivalence-principle} and, morally, because the \(\Graph\)
type is a set-level structure.

In a (pre-) category, an isomorphism is a morphism which has an
inverse. In the particular case of graphs, this can be formulated in
terms of the underlying maps being equivalences.

\begin{lemma}\label{def:isomorphism} Let $h$ be a graph homomorphism
given by the pair-function $(α, β)$. The claim $h$ is an
\emph{isomorphism}, denoted by $\mathsf{isIso}(h)$, is a proposition
equivalent to stating that the functions $\alpha$ and $β(x,y)$ for all
$x,y:\Node_{G}$, are both bijections.
\end{lemma}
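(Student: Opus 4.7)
The plan is to unfold the categorical definition of isomorphism and then prove the claimed equivalence in two directions, treating propositionality separately.

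First, I would make the definition $\mathsf{isIso}(h) :\equiv \sum_{h' : \Hom{H}{G}} (h \circ h' = \mathsf{id}_H) \times (h' \circ h = \mathsf{id}_G)$ explicit, and show it is a proposition by the standard uniqueness-of-inverses argument: if $h'$ and $h''$ are both inverses of $h$, then $h' = h' \circ h \circ h'' = h''$, using that $\Hom{H}{G}$ is a set (\Cref{lem:hom-is-set}) to conclude that the whole $\Sigma$-type is a proposition. For the right-hand side, the type $\mathsf{isEquiv}(\alpha) \times \prod_{x,y} \mathsf{isEquiv}(\beta(x,y))$ is a product of propositions and hence a proposition. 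So it is enough to give maps in both directions, without worrying about coherences.

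For the forward direction, given an inverse $h' = (\alpha', \beta')$ with $h \circ h' = \mathsf{id}_H$ and $h' \circ h = \mathsf{id}_G$, projecting onto nodes gives $\alpha \circ \alpha' = \mathsf{id}$ and $\alpha' \circ \alpha = \mathsf{id}$, so $\alpha$ is a bijection. Projecting onto edges at a fixed pair $x,y$ yields analogous equations for $\beta(x,y)$ and $\beta'(\alpha(x), \alpha(y))$ (after transporting along the node-equalities), witnessing that $\beta(x,y)$ is a bijection.

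For the backward direction, assume $\alpha$ and every $\beta(x,y)$ are bijections. I would define the inverse homomorphism $h' = (\alpha^{-1}, \beta')$ where, for $u,v : \Node_H$, the map $\beta'(u,v) : \Edge_H(u,v) \to \Edge_G(\alpha^{-1}(u), \alpha^{-1}(v))$ is obtained from $\beta(\alpha^{-1}(u), \alpha^{-1}(v))^{-1}$ by transporting along the equations $\alpha(\alpha^{-1}(u)) = u$ and $\alpha(\alpha^{-1}(v)) = v$ supplied by $\alpha$ being an equivalence. Verifying that $h \circ h' = \mathsf{id}_H$ and $h' \circ h = \mathsf{id}_G$ then reduces on nodes to the inverse equations for $\alpha$ and on edges to the inverse equations for each $\beta(x,y)$, up to the bookkeeping of those transports.

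The main obstacle is precisely this bookkeeping in the backward direction: the edge-part of $h'$ has to live over $\alpha^{-1}$, while the hypothesis gives inverses of $\beta(x,y)$ living over $\alpha$. Getting the transports to cancel cleanly when composing $h$ with $h'$ is the only nontrivial step; since both target types are propositions it is mostly a matter of writing down a map, but care with $\mathsf{tr}$ and $\mathsf{tr}_2$ is needed so that the composite equals $\mathsf{id}$ on the nose in $\Hom{G}{G}$ and $\Hom{H}{H}$.
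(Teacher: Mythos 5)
The paper states this lemma without proof, so there is no paper argument to compare against; I will assess your proposal on its own terms. Your argument is correct and is the natural one. The categorical definition $\mathsf{isIso}(h) :\equiv \sum_{h'} (h \circ h' = \mathsf{id}_H) \times (h' \circ h = \mathsf{id}_G)$ is indeed a proposition here precisely because $\Hom{G}{H}$ is a set (\Cref{lem:hom-is-set}): uniqueness of inverses gives $h' = h''$, and the two inverse laws are equalities in a set, hence propositions, so the remaining fibre collapses. Your observation that the right-hand side is a product of propositions (taking $\mathsf{isEquiv}$ to be any of the proposition-valued formulations, which all agree for sets) then reduces the equivalence to a bi-implication, which is exactly the right move. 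The forward and backward directions are sketched correctly, and you have identified the genuine technical content — the transports relating $\beta'(u,v)$ over $\alpha^{-1}$ to $\beta(x,y)^{-1}$ over $\alpha$ — as the only nontrivial bookkeeping. Since the target equalities $h \circ h' = \mathsf{id}$ and $h' \circ h = \mathsf{id}$ live in a set, you can verify the edge component of these equalities pointwise after choosing the node component, which is what makes the bookkeeping tractable. Your proposal would stand as a complete proof once that transport calculation is written out.
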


The collection of all isomorphisms between \(G\) and \(H\) is denoted
by \(G \cong H\). If \(G \simeq H\), one says that \(G\) and \(H\) are
\emph{isomorphic}.

\begin{lemma}\label{lem:iso-is-set}
  The type $G \cong H$ forms a set.
\end{lemma}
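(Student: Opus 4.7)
The plan is to exhibit $G \cong H$ as a $\Sigma$-type whose base is a set and whose fibers are propositions, and then appeal to the standard fact that such a $\Sigma$-type is itself a set. Concretely, by unfolding the definition of isomorphism we have
\[
G \cong H \;\equiv\; \sum_{h : \Hom{G}{H}} \mathsf{isIso}(h),
\]
so the proof reduces to checking the two factors separately.

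First, by \Cref{lem:hom-is-set}, the base type $\Hom{G}{H}$ is a set. Second, by \Cref{def:isomorphism}, the predicate $\mathsf{isIso}(h)$ is a proposition for every $h$, and every proposition is in particular a set (propositions are $(-1)$-types, hence $0$-types, by \Cref{def:n-types}). Thus both the base and the fibers of the displayed $\Sigma$-type are sets.

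Finally, I would invoke the general closure principle for sets under $\Sigma$-types: if $A$ is a set and $B : A \to \UU$ is a family of sets, then $\sum_{x : A} B(x)$ is a set. Applying this to $A :\equiv \Hom{G}{H}$ and $B :\equiv \mathsf{isIso}$ yields that $G \cong H$ is a set. There is no real obstacle here — the proof is essentially a one-line consequence of \Cref{lem:hom-is-set} together with the propositionality of $\mathsf{isIso}$ from \Cref{def:isomorphism}; the only thing to be careful about is to cite the right homotopy-level closure lemma for $\Sigma$-types rather than attempting a direct calculation on identity types.
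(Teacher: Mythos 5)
Your proof is correct and is exactly the argument one expects here; the paper omits its own proof, but your route is the standard (and essentially only reasonable) one. You unfold $G \cong H$ as $\sum_{h : \Hom{G}{H}} \mathsf{isIso}(h)$, observe that the base is a set by \Cref{lem:hom-is-set}, that the fibers $\mathsf{isIso}(h)$ are propositions (hence sets, by cumulativity of homotopy levels) via \Cref{def:isomorphism}, and close with the standard closure of sets under $\Sigma$. One small point worth noting: it is important that this lemma is proved independently of \Cref{thm:equivalence-principle}, since the paper later uses \Cref{lem:iso-is-set} in the proof of \Cref{lem:graph-is-gropoid} after invoking the equivalence principle; your direct argument avoids any circularity.
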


We define a type to compare sameness in graphs in
\Cref{def:isomorphism}; the type of graph isomorphisms. In HoTT, the
identity type (\(=\)) serves the same purpose, and one expects
\citep{iso-implies-equality} the two notions coincide. In
\Cref{thm:equivalence-principle}, we prove that they are in fact
homotopy equivalent. The same correspondence for graphs also arises
for many other structures \citep{ahrens2018, Ahrens2020}, for example,
groups, and topological spaces.

\begin{theorem}[Equivalence principle]\label{thm:equivalence-principle}
The canonical map $(G = H) → (G \cong H)$ is an equivalence for $G , H : \Graph$.
\end{theorem}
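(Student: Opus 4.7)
The plan is to characterise the identity type $G = H$ by iterated application of the $\Sigma$-type path lemma, and then translate each component with univalence and function extensionality until it matches the data of a graph isomorphism as given in \Cref{def:isomorphism}.

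First I would unfold $\mathsf{Graph}$ as the $\Sigma$-type in \Cref{def:graph} and apply the characterisation of paths in a $\Sigma$-type to $G = H$. Since the last two components ($\isSet{\Node}$ and the pointwise set-truncation of edges) are propositions, their contribution to the path space is contractible and can be discarded. This reduces $(G = H)$ to
\[
\sum_{(p : \Node_G = \Node_H)} \PathOver(\lambda X.\,X \to X \to \UU,\, p,\, \Edge_G,\, \Edge_H),
\]
i.e.\ a path on nodes together with a dependent path on the edge family.

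Next I would apply univalence to the first component, giving $p \simeq (\alpha : \Node_G \simeq \Node_H)$, and then reduce the dependent path on $\Edge$ using function extensionality twice and univalence pointwise. The effect of transporting $\Edge_G$ along $p$ in the family $X \mapsto X \to X \to \UU$ is, after passing through $\mathsf{funext}$ and $\mathsf{ua}$, the data of a family
\[
\beta \;:\; \prod_{(x,y : \Node_G)} \bigl(\Edge_G(x,y) \simeq \Edge_H(\alpha x, \alpha y)\bigr).
\]
Thus $(G = H)$ is equivalent to pairs $(\alpha, \beta)$ in which $\alpha$ is an equivalence on nodes and each $\beta_{x,y}$ is an equivalence on edges, which by \Cref{def:isomorphism} is exactly the type $G \cong H$.

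Finally I would check that the equivalence just constructed coincides with the canonical map $(G=H) \to (G \cong H)$ induced by path induction: it suffices to send $\refl_G$ to $\id_G$, which both maps do by definition. The main obstacle is the bookkeeping in the middle step, namely showing that a dependent path between two-argument families of sets $\Edge_G, \Edge_H : \Node \to \Node \to \UU$ over $p : \Node_G = \Node_H$ corresponds, up to equivalence, to a family of equivalences $\Edge_G(x,y) \simeq \Edge_H(\alpha x, \alpha y)$ correctly indexed by the transported arguments. This is essentially the standard transport lemma for $\Pi$-types applied twice and then composed with univalence pointwise; once it is in place, the equivalence principle follows mechanically from the fact that $\Graph$ is a set-level structure.
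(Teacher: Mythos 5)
Your proposal follows essentially the same route as the paper: unfold $\Graph$ as a nested $\Sigma$-type, apply the $\Sigma$-type path characterisation, discard the contractible contributions of the set-ness proofs, and then convert the remaining data via function extensionality (twice) and univalence (pointwise) into a node-equivalence together with a family of edge-equivalences matching \Cref{def:isomorphism}. Your final check that the constructed equivalence agrees with the canonical map at $\refl_G$ is a bit more explicit than the paper's brief remark on this point, but both arguments rest on the same calculation and reach the same conclusion.
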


\begin{proof}
  It is sufficient to show that $(G = H) \simeq (G\cong H)$. Remember
  that being an equivalence for a function constitutes a proposition.
  The equivalence follows from \Cref{proof:equivalence-principle}. We
  consider the type families $F_1(X):≡ X \to X \to \UU$ and
  $F_2(X,R):≡\prod_{x, y : X} \isSet{R(x,y)}$ to shorten the
  presentation.

\begin{subequations}
  \label[calc]{proof:equivalence-principle}
\begin{align}
(G = H) &≡ ((\Node_{G}, \Edge_{G}, (s_{G}, t_{G})) = (\Node_{H}, \Edge_{H}, (s_{H}, t_{H}))) \label{eq:equiv-principle-step-0}\\
&≃ \sum_{(α : \Node_{G} = \Node_{H})}
\sum_{(β : \tr{F_1}{α}{\Edge_{G}} = \Edge_{H})}
(\tr{\mathsf{isSet}}{α}{s_{G}} = s_{H}) × (\trtwo{F_2}{α}{β}{t_{G}} = t_{H})
\label{eq:equiv-principle-step-1}\\
&≃ \sum_{(α : \Node_{G} = \Node_{H})}
\sum_{(β : \tr{F_1}{α}{\Edge_{G}} = \Edge_{H})} \mathbb{1} \times \mathbb{1} \label{eq:equiv-principle-step-2}\\
&≃ ∑_{(α : \Node_{G} = \Node_{H})} (\tr{F_1}{α}{\Edge_{G}} = \Edge_{H})\label{eq:equiv-principle-step-3}\\
&≃ ∑_{(α : \Node_{G} = \Node_{H})} ∏_{(x,y:\Node_{G})}\Edge_{G}(x,y)=\Edge_{H}(\overline{α}(x),\overline{α}(y))\label{eq:equiv-principle-step-4}\\
&≃ ∑_{(α : \Node_{G} ≃ \Node_{H})} ∏_{(x,y:\Node_{G})}\Edge_{G}(x,y)≃\Edge_{H}(\overline{α}(x),\overline{α}(y))\label{eq:equiv-principle-step-5}\\
&≃ (G ≅ H).\label{eq:equiv-principle-step-6}
\end{align}
\end{subequations}
{\Crefname{equation}{Equivalence}{Equivalences}
We first unfold definitions in \Cref{eq:equiv-principle-step-0}.
\Cref{eq:equiv-principle-step-1} follows from the characterisation of
the identity type between pairs in a $\Sigma$-type (Lemma 3.7 in HoTT
book). \Cref{eq:equiv-principle-step-2} stems from the fact that being
a set is a mere proposition and, thus, equations between proofs of
such are contractible, similarly as in
\Cref{lem:cyclic-identity-type}. To get
\Cref{eq:equiv-principle-step-4}, we apply function
extensionality twice in the inner equality in
\Cref{eq:equiv-principle-step-3}. By the Univalence axiom, we replace
in \Cref{eq:equiv-principle-step-5} equalities by equivalences.
Finally, \Cref{eq:equiv-principle-step-6} follows from
\Cref{def:isomorphism} completing the calculation from which the
conclusion follows.
}
\end{proof}

\begin{lemma}\label{lem:graph-is-gropoid}
The type of graphs is a $1$-groupoid.
\end{lemma}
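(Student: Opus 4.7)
The plan is to unfold the definition of being a $1$-groupoid and then reduce to results already established in the excerpt. By \Cref{def:n-types}, showing that $\Graph$ is a $1$-groupoid amounts to showing that for every $G, H : \Graph$, the identity type $G = H$ is a set, i.e., a $0$-type.

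Fix arbitrary $G, H : \Graph$. The key observation is that \Cref{thm:equivalence-principle} gives an equivalence $(G = H) \simeq (G \cong H)$, and \Cref{lem:iso-is-set} tells us that the right-hand side $G \cong H$ is a set. Since being a set is a property preserved under type equivalence (this is a standard fact: if $A \simeq B$ and $B$ is an $n$-type, then $A$ is an $n$-type as well, obtained by transporting the predicate $\mathsf{is\mbox{-}level}(0,-)$ along $\mathsf{ua}$ of the equivalence), we conclude that $G = H$ is a set. Hence $\Graph$ is a $1$-groupoid.

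The proof is essentially a one-line combination of two previously established results, so there is no real obstacle — the structural work has been absorbed into \Cref{thm:equivalence-principle} and \Cref{lem:iso-is-set}. The only delicate point worth remarking is that we are invoking the closure of $n$-types under equivalences, which relies on univalence (or equivalently on the preservation of truncation levels along equivalences, which is provable directly without univalence by induction on $n$). Either way, no new calculation involving the underlying node or edge data is required.
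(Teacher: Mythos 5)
Your proof is correct and follows essentially the same route as the paper: reduce to showing $G = H$ is a set, apply \Cref{thm:equivalence-principle} to identify it with $G \cong H$, invoke \Cref{lem:iso-is-set}, and conclude by preservation of homotopy levels under equivalence. The only difference is cosmetic — you cite \Cref{lem:iso-is-set} explicitly where the paper just says ``the set of isomorphisms'' — and your aside on whether univalence is needed for level-preservation is accurate but not load-bearing.
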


\begin{proof}
We want to show that the identity type $G = H$ is a set for all $G,H :
\Graph{}$. This follows since type equivalences preserve homotopy
levels. The type $G=H$ is equivalent to the set of isomorphisms, $G
\simeq H$, by the equivalence principle, \Cref{thm:equivalence-principle}.
\end{proof}

\subsection{Graph Classes}\label{sec:graph-classes}

Graphs can be collected in different classes. A \emph{class} \(C\) of
graphs is a collection of graphs that holds some given structure
\(P : \Graph\, \to \UU\), i.e.~a graph class is the total type of the
corresponding predicate, \(C :≡ Σ_{G : \Graph}\,P(G)\). Examples of
classes are \emph{simple} graphs where the edge relation is
propositional, or \emph{undirected} graphs where the edge relation is
symmetric. \Cref{def:graph} is the class of directed multigraphs,
which leads us to more general statements about graphs and to write
shorter proofs. It is more general, because one can see, for example
that simple undirected graphs are instances of directed multigraphs.

Now, since any construction in HoTT respects the structure of its
constituents, a graph class is invariant under graph isomorphisms.
Specifically, given a graph isomorphism, we can transport any property
on graphs along the equality obtained by
\Cref{thm:equivalence-principle}. Graph properties provide a way to
determine if negative statements attribute on certain graphs, for
example, if two given graphs are not isomorphic.

\begin{lemma}[Leibniz principle]\label{leibniz-principle}
  Isomorphic graphs hold the same properties.
\end{lemma}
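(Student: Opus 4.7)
The plan is to reduce the claim to a direct consequence of the equivalence principle (\Cref{thm:equivalence-principle}). Concretely, given a property $P : \Graph \to \UU$ and an isomorphism $\phi : G \cong H$, one first applies the inverse of the canonical equivalence $(G = H) \simeq (G \cong H)$ to $\phi$, obtaining an equality $p : G = H$. Transport then produces a map $\lambda x.\, \tr{P}{p}{x} : P(G) \to P(H)$, and symmetrically transporting along the inverse equality $p^{-1}$ provides a map in the opposite direction. Since the two transports compose to the identity up to homotopy, one in fact establishes an equivalence $P(G) \simeq P(H)$, so $G$ and $H$ truly hold the same properties.

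There is essentially no obstacle: all of the real work has already been carried out in \Cref{thm:equivalence-principle}, and the remaining argument is a uniform restatement of the slogan that, in a univalent setting, any type-theoretic construction indexed over $\Graph$ automatically respects graph isomorphism. In particular, a graph class $C :\equiv \sum_{G:\Graph} P(G)$ in the sense of \Cref{sec:graph-classes} inherits this invariance for free, so that membership in $C$ is genuinely an isomorphism-invariant notion. One may further observe that the induced map $P(G) \simeq P(H)$ is functorial in $\phi$: the identity isomorphism is sent to the identity transport, and composition of isomorphisms corresponds to composition of transports, as expected. When $P$ lands in $\hProp$ rather than $\UU$, the equivalence $P(G) \simeq P(H)$ degenerates to a logical biconditional between two mere propositions, which is the form in which the principle is most often used in practice.
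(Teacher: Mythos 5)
Your proposal is correct and follows essentially the same route the paper takes: convert the isomorphism into an equality via the inverse of the canonical map from \Cref{thm:equivalence-principle}, then transport the property along that equality. The extra remarks on functoriality and the $\hProp$-valued case are reasonable elaborations but do not change the core argument.
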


\noindent A related principle is equivalence induction
\citep[§3.15]{Escardo2019}.

\begin{lemma}[Equivalence induction]\label{equivalence-induction}
Given a graph $G$ and a family of properties $P$ of type
$\sum_{H:\Graph} (G \cong H) → \hProp$, if the property $P(G,
\mathsf{id}_{G})$ holds then the property also holds for any
isomorphic graph $H$ to $G$, i.e. $P(H, φ)$ holds for all $φ : G ≅ H$.
\end{lemma}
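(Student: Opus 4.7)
The plan is to reduce the claim to the standard singleton-contraction principle (equivalently, path induction) by pulling the isomorphism back to a propositional equality via the equivalence principle. Concretely, by \Cref{thm:equivalence-principle}, the canonical map $(G = H) \to (G \cong H)$ is an equivalence, and by construction it sends $\refl_{G}$ to $\mathsf{id}_{G}$. Summing over all $H : \Graph$, this gives an equivalence
\[
\left(\sum_{H : \Graph}\, G = H\right) \;\simeq\; \left(\sum_{H : \Graph}\, G \cong H\right),
\]
carrying $(G, \refl_{G})$ to $(G, \mathsf{id}_{G})$.

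The left-hand side is a contractible type (the based path space of $G$ in $\Graph$), so the right-hand side is also contractible, with center $(G, \mathsf{id}_{G})$. Any family of propositions over a contractible type is determined by its value at the center: if $P(G, \mathsf{id}_{G})$ holds, then $P(H, \varphi)$ holds for every $(H, \varphi)$, because any such pair is (propositionally) equal to the center, and propositions are invariant under such equalities.

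Equivalently, one can phrase the same argument as path induction: given $\varphi : G \cong H$, apply the inverse of the equivalence from \Cref{thm:equivalence-principle} to obtain $p : G = H$ with the property that $p$ corresponds to $\varphi$ under the equivalence; then do $J$-elimination on $p$, reducing to the case $H \equiv G$ and $\varphi = \mathsf{id}_{G}$, where the hypothesis $P(G, \mathsf{id}_{G})$ applies directly.

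There is no real obstacle here, since all the heavy lifting has already been done in \Cref{thm:equivalence-principle}; the only point requiring mild care is checking that the equivalence of that theorem sends $\refl_{G}$ to $\mathsf{id}_{G}$, so that the center of the contractible sum really is $(G, \mathsf{id}_{G})$ rather than some other canonical isomorphism. This follows by unfolding the chain of equivalences in \Cref{proof:equivalence-principle}, each step of which maps identity data to identity data.
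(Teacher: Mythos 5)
The paper states this lemma without proof, only citing Escardo's notes (\S 3.15) for the general principle; your argument is exactly the standard derivation of equivalence induction from univalence that that reference gives. You correctly reduce the claim to contractibility of the based \say{iso space} $\sum_{H : \Graph}\, G \cong H$ by transporting the singleton contraction of $\sum_{H : \Graph}\, G = H$ across the equivalence principle, and you flag the one point that genuinely needs checking, namely that the equivalence of \Cref{thm:equivalence-principle} sends $\refl_{G}$ to $\mathsf{id}_{G}$ so that the center of contraction is the pair $(G, \mathsf{id}_{G})$ appearing in the hypothesis.

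One small remark: the restriction to $P$ valued in $\hProp$ is not actually used in your argument. Transport along the path $(G, \mathsf{id}_{G}) = (H, \varphi)$ in the total space yields the conclusion for an arbitrary type family, so the proof shows the stronger \say{equivalence induction} for any $P : \sum_{H:\Graph}(G \cong H) \to \UU$. The lemma as stated in the paper is the special case; nothing is wrong, but it is worth noticing the proof gives more.
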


Lastly, of importance for this work are the class of connected finite
graphs stated in \Cref{def:finite-graph,sec:connected-graphs}. We will
assume any graph in the remaining of this paper, as connected and
finite, unless stated otherwise.

\subsubsection{Finite graphs}\label{sec:finite-graphs}

A graph is \emph{finite} if both the node set and each edge-set are
finite sets. The corresponding graph property of finite graphs is
stated in \Cref{def:finite-graph}. Similarly, as with finite types, a
finite graph has associated a cardinal to the number of nodes and of
the number of edges. Consequently, one can show that for finite
graphs, the equality is decidable on the node set and on each
edge-set.

\begin{definition} \label{def:finite-graph}
A graph $G$ is \emph{finite} if $\mathsf{FiniteGraph}(G)$ holds.
\begin{equation*}
\mathsf{FiniteGraph}(G) :≡ \mathsf{isFinite}(\Node_{G}) \,
\times \prod_{(x,y : \Node_{G})} \mathsf{isFinite}(\Edge_{G}(x,y)).
\end{equation*}
\end{definition}

\subsubsection{Connected Graphs}\label{sec:connected-graphs}

Among the many notions in graph theory, the concept of a walk plays a
leading role for many computer algorithms. A \emph{walk} in a graph is
simply a sequence of edges that forms a chain of types stated in
\Cref{def:walk}. By considering the walks in a graph, one obtains an
endofunctor \(W\) in \(\Graph\) by mapping a graph \(G\) to a graph
formed by the same node set in \(G\), and the corresponding sets of
walks in \(G\) as the edge sets.

\begin{definition}\label{def:walk}
A \emph{walk} between $x$ and $y$, is an element of the type $\mathsf{Walk}(x,y)$,
which is inductively generated by
 \begin{itemize}
  \item the trivial walk $⟨x⟩$ when $x ≡ y$, and
  \item composite walks, $(e\,⊙\,w)$, where $e$ is an edge from $x$ to
  some $z$, and $w$ a walk from $z$ to $y$.
 \end{itemize}
\end{definition}

\begin{lemma} The type of walks between $x$ to $y$ forms a set.
\end{lemma}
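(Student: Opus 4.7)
The plan is to encode walks by their length and then invoke closure properties of sets. For each $n : \N$, I define the type $\mathsf{W}_n(x,y)$ of walks of length exactly $n$ by induction on $n$: set $\mathsf{W}_0(x,y) :\equiv (x =_{\Node_G} y)$ and $\mathsf{W}_{n+1}(x,y) :\equiv \sum_{z : \Node_G} \Edge_G(x,z) \times \mathsf{W}_n(z,y)$.

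First I would show, by induction on $n$, that each $\mathsf{W}_n(x,y)$ is a set. The base case is a proposition (and therefore a set) because $\Node_G$ is a set by the graph axioms, so its identity types are propositions. The inductive step uses that $\Sigma$-types and products of sets are sets, combined with the fact that $\Edge_G(x,z)$ is a set by the graph axioms and the inductive hypothesis that $\mathsf{W}_n(z,y)$ is a set.

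Second, I would construct an equivalence $\mathsf{Walk}(x,y) \simeq \sum_{n : \N} \mathsf{W}_n(x,y)$. The forward direction is defined by induction on walks: $\langle x \rangle \mapsto (0, \refl_x)$, and $e \odot w \mapsto (n+1, (z, e, r))$ whenever the image of $w$ is $(n, r)$. The inverse is defined by induction on the natural number component, using $\refl_x \mapsto \langle x \rangle$ in the base case and $(z, e, r) \mapsto e \odot \iota(r)$ in the step, where $\iota$ is the inverse at the smaller index. The two composites are then shown to be the identity by the evident induction. Since $\N$ is a set and each $\mathsf{W}_n(x,y)$ is a set, the total space $\sum_{n : \N} \mathsf{W}_n(x,y)$ is a set, and so $\mathsf{Walk}(x,y)$ is a set by transport along the equivalence.

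The main obstacle is handling the base case of $\mathsf{Walk}$ carefully: the constructor $\langle x \rangle$ inhabits $\mathsf{Walk}(x,y)$ only when $x \equiv y$, and the translation to the length-indexed version uses the propositional equality $\refl_x : x = x$ instead. Making the inductive clauses for the equivalence line up well-typed therefore needs mild care with transports along these equalities, especially in the composite case, where the intermediate node $z$ must be exposed as the first component of the $\Sigma$. An alternative that avoids constructing the equivalence explicitly is the standard encode–decode method: define a propositional code relation on $\mathsf{Walk}(x,y)$ by double induction, prove reflexivity, and show by induction that encoding and decoding are mutually inverse. Either route ultimately reduces the claim to the set-ness of $\Node_G$ and of the $\Edge_G(x,y)$, already guaranteed by \Cref{def:graph}.
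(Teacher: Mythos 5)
Your proof is correct. The paper does not include an explicit proof of this lemma, so there is nothing to compare against; but your length-indexed encoding $\mathsf{Walk}(x,y) \simeq \sum_{n:\N}\mathsf{W}_n(x,y)$, reducing the claim to the closure of sets under $\Sigma$, $\times$, and $\N$-indexed sums plus the set-level structure of $\Node_G$ and $\Edge_G(x,y)$ guaranteed by \Cref{def:graph}, is a standard and sound route, and you correctly flag the only delicate point, namely that the base case of the back-and-forth maps must go through path induction (or transport) on the equality $x = y$ rather than a judgmental match. The encode--decode alternative you mention would also work and is what one typically reaches for when one wants to characterise the identity types of an indexed inductive family directly; either way the claim reduces to the same set-level hypotheses on nodes and edges.
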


\begin{definition}\label{def:connected-graph} A graph $G$ is
  (strongly) \emph{connected} when a walk merely exists in $G$ joining
  any pair of nodes. If $G$ is connected then the proposition
  $\Connected(G)$ holds.
  
  \begin{equation*} \Connected(G) :\equiv
    \prod_{(x,y : \Node_{G})} \| \Edge_{W(G)}(x,y)\|. 
  \end{equation*}
\end{definition}

\begin{lemma}\label{lem:connected-is-proposition}
  Being connected for a graph is a proposition.
\end{lemma}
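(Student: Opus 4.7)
The plan is to unfold the definition of $\Connected(G)$ and observe that it is a (double) dependent product whose fibres are propositions, then invoke the closure of $\isProp$ under $\Pi$-formation.

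First I would recall that, by \Cref{def:connected-graph},
\[
\Connected(G) \;\equiv\; \prod_{x,y:\Node_{G}} \| \Edge_{W(G)}(x,y) \|.
\]
For each pair $x,y : \Node_{G}$, the fibre $\| \Edge_{W(G)}(x,y) \|$ is a propositional truncation, and by \Cref{def:propositional-truncation} (or rather the general property that $\|A\|_{-1}$ is a $(-1)$-type) this fibre satisfies $\isProp{(\| \Edge_{W(G)}(x,y) \|)}$.

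Next I would appeal to the standard closure result of HoTT: if $B : A \to \UU$ is a family of types such that $\prod_{a:A}\isProp{(B(a))}$, then $\isProp{\left(\prod_{a:A} B(a)\right)}$. This is a consequence of function extensionality, since any two dependent functions into a family of propositions are pointwise equal and hence equal. Applying this closure twice — once to contract the $y$-coordinate and once to contract the $x$-coordinate — yields $\isProp{(\Connected(G))}$.

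The argument is entirely formal, and I do not expect any real obstacle: the only ingredients are the definitional fact that propositional truncations are propositions and the $\Pi$-closure of $\isProp$ under function extensionality, both of which are already available in the ambient framework described in \Cref{sec:background}. The proof can therefore be stated in a single line once these facts are cited.
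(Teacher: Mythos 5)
Your argument is correct and is the standard (indeed, the only reasonable) proof of this fact; the paper omits the proof precisely because it is an immediate consequence of propositional truncations being propositions together with closure of $\isProp$ under $\Pi$-types via function extensionality, exactly as you describe.
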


\subsection{Graph Families}
\label{sec:families-of-graphs}

The path graph with \(n\) nodes is the graph
\(P_n :≡ ([n], λ\,u\,v. \mathbb{N}\mbox{-}\mathsf{succ}(\mathsf{to}\mathbb{N}(u)) = \mathsf{to}\mathbb{N}(v))\).
One alternative type of walks is the type of path graphs, where \(n\)
represents the length of the walk. In other words, a \emph{path} in
\(G\) of length \(n\) between nodes \(a\) and \(b\) is a graph
homomorphism from \(P_n\) to \(G\) mapping \(0\) to \(a\) and \(n-1\)
to \(b\). If \(a\) and \(b\) are equal, such a path is \emph{closed}.
A closed path induces a \emph{cycle} graph of size of the same length
of the path. The family of cycle graphs follows
\Cref{def:cycle-graph}.

\begin{definition}\label{def:cycle-graph} An $n$-cycle graph denoted
by $C_n$ is defined by $C_n :≡ ([n], λ\,u\,v.u = \fpred{v})$ if $n≥1$.
Otherwise, $C_0$ is the one-point graph. 
\end{definition}

\begin{center}
\includegraphics[width=0.9\textwidth]{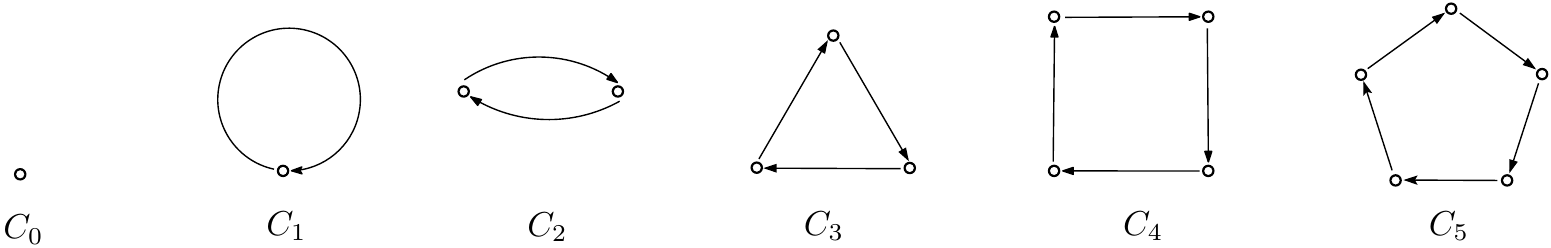}
\end{center}

In the treatment of embeddings of graphs on surfaces, we find out that
bouquet graphs, besides their simple structure, have nontrivial
embeddings, see \Cref{sec:examples-of-combinatorial-maps}.

\begin{definition} The family of \emph{bouquet} graphs $B_n$ consists of
graphs obtained by considering a single point with $n$ self-loops.
\begin{center}
\includegraphics[width=0.9\textwidth]{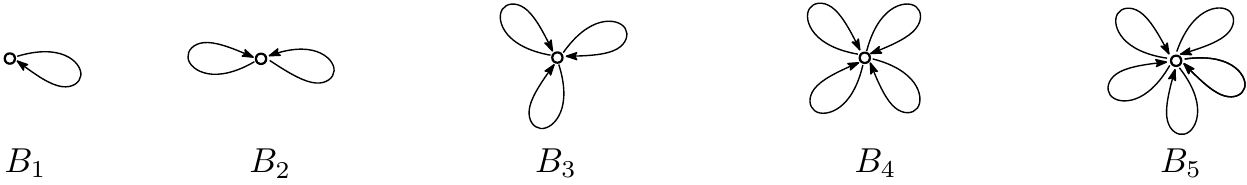}
\end{center}
\end{definition}

\begin{definition} A graph of $n$ nodes is called \emph{complete} when
every pair of distinct nodes is joined by an edge. In particular, we
denote by $K_{n}$, the complete graph with node set $[n]$. For brevity,
we use a double arrow in the pictures below to denote a pair of edges
of opposite directions.

\begin{center}
\includegraphics[width=0.9\textwidth]{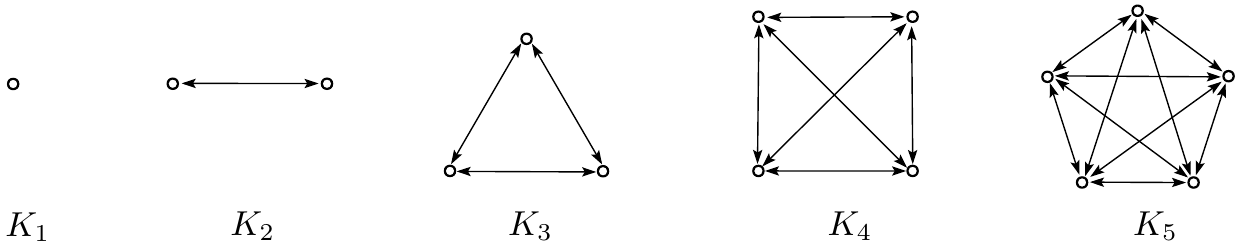}
\end{center}
\end{definition}

\subsection{Other Structures on Graphs}

\subsubsection{Cyclic Graphs}
\begin{wrapfigure}[6]{R}[0pt]{-5pt}
  \centering
\begin{tikzcd}
n-1 \arrow[d, "\mathsf{rot}^{n-i-1}"', dotted, maps to] & 0 \arrow[l, "\mathsf{rot}"', maps to] & 1 \arrow[l, "\mathsf{rot}"', maps to]              \\
i \arrow[rr, "\mathsf{rot}"', maps to]                  &                                       & i-1 \arrow[u, "\mathsf{rot}^{i-2}"', dotted, maps to]
\end{tikzcd}
\end{wrapfigure}

Similarly, as for cyclic types, we introduce a class of graphs with a
cyclic structure. A graph is \emph{cyclic} when it is in the connected
component of an \(n\)-cycle graph along with the corresponding
automorphism, see \Cref{def:connected-component}.

Now, let us consider the homomorphism
\(\mathsf{rot} : \Hom{C_n}{C_n}\) that acts similarly as the function
\(\mathsf{pred}\) in \Cref{def:cyclic-type}. The cyclic structure for
graphs can be defined as the property of preserving the structure in
\(C_n\) induced by the morphism \(\mathsf{rot}\). We will make use of
the same notation as for cyclic sets to refer to cyclic graphs.

\begin{definition}\label{def:cyclic-graph}
A graph $G$ is \emph{cyclic} if it is of type $\mathsf{CyclicGraph}(G)$.
\begin{equation*}
\mathsf{CyclicGraph} (G):≡ ∑_{(φ : \Hom{G}{G})} \sum_{(n : \N)}
\underbrace{\| (G , φ) = (C_{n}, \mathsf{rot})\|}_{\mathsf{iscyclic(G, \varphi,n)}}.
\end{equation*}
\end{definition}

\begin{figure} \centering
  \includegraphics[width=0.14\textwidth]{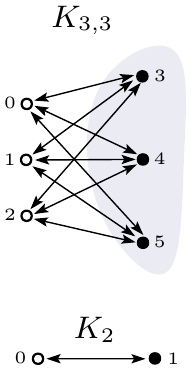}
  \caption{The graph $K_{3,3}$. Each arrow in the picture represents a
  pair of edges, one in each direction.}
\end{figure}

\subsubsection{Graph Colourings}\label{sec:graph-colourings}

A \(n\)-colouring of a graph \(G\) is a homomorphism of type
\(\Hom{G}{K_{n}}\), where each node in \(K_{n}\) represent a different
colour for the nodes in \(G\). Given an \(n\)-colouring of \(G\), we
say that \(G\) is \(n\)-colourable or \(n\)-partite. Thus, a
\emph{bipartite} graph is a graph with a \(2\)-colouring, and a
bipartite complete graph with six nodes is the graph \(K_{3,3}\). The
collection of all \(n\)-colourings of a graph forms a set by
\Cref{lem:hom-is-set}, and the collection of \(n\)-partite graphs
forms a \(1\)-groupoid. Since there are some \(n\)-partite graphs that
are equal up to isomorphism, we have the following distinction. Two
graph colourings of \(G\), namely \(f, g : \Hom{G}{K_n}\) are
\emph{essentially} different if a nontrivial isomorphism
\(\sigma : K_n \cong K_n\) exists and if the functions \(f\) and
\(g \circ \sigma\) are equal. The type of essentially different
colourings of a graph \(G\) is \Cref{def:n-essentially-partite-graph}.

\begin{equation}\label[type]{def:n-essentially-partite-graph}
\mathsf{EssentiallyPartite}(n, G) :≡ \sum_{(A : \Graph)} \Hom{G}{A} \times \trunc{A \cong K_{n}}.
\end{equation}

\begin{example}
We compute the identity type of the essential different colourings of
the path graph $P_3$ in \Cref{calc:essentiall-dif}. As we will see,
there can only be two graph homomorphisms from ${P_3}$ to $K_2$,
namely $\varphi_0$ and $\varphi_1$ as in \Cref{fig:example-n-partite}.
Let $c_1$ and $c_2$ be of type $\mathsf{EssentiallyPartite}(2,P_3)$.

\begin{subequations}\label[calc]{calc:essentiall-dif}
\begin{align}
(c_1 = c_2) &\simeq ((K_2,  \varphi_0, ! ) = (K_2,  \varphi_1, ! ))\\
&\simeq \sum_{(\tau: K_2 = K_2)} (\tr{\,\lambda X.\Hom{P_3}{X}}{\tau}{\varphi_0} = \varphi_1)\label[equiv]{calc:essentiall-dif-2}
\\
&\simeq \sum_{(\tau: K_2 = K_2)} (\tau \circ \varphi_0 = \varphi_1).
\end{align}
\end{subequations}
In \Cref{calc:essentiall-dif-2}, the equality $\tau : K_2 = K_2$ is
one of two alternatives: the trivial path or the path from the
equivalence that swaps the only two nodes in $K_2$. Only the latter
possibility, the equation $\tau \circ \varphi_0 = \varphi_1$ can hold.
\end{example}

\begin{figure}[!ht]
  \begin{center}
  \includegraphics[width=0.3\textwidth]{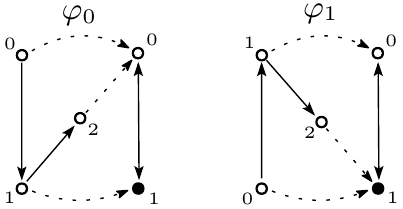}
  \caption{Two graph homomorphisms $\varphi_0$ and $\varphi_1$ from
  $P_{3}$ to $K_{2}$. The dashed arrows represent how $\varphi_0$ and
  $\varphi_1$ map the nodes of $P_3$ into $K_2$. We represent the colours of
  the $2$-coloring of $P_3$ by the nodes black and white in $K_2$.}
  \label{fig:example-n-partite}
  \end{center}
\end{figure}

\subsection{The Identity Type on Graphs}\label{sec:automorphisms}

For any element, \(x\) of a groupoid type, \(X\), the type
\(\mathsf{Aut}_X(x):= (x=x)\) has a group structure given by path
composition. Applying this definition to the groupoid of graphs, the
equivalence principle of \Cref{thm:equivalence-principle} gives that
for any graph \(G\), we identify \(\mathsf{Aut}(G)\) with its
automorphisms, \(G\cong G\). This allows us to compute
\(\mathsf{Aut}(G) = G\cong G\) in the examples which follow.

\begin{enumerate}
  \item $\Aut(B_2)$ is the group of two elements. With only two edges in $B_2$
  and one node, we can only have, besides the identity function, the function
  that swaps the two edges. In general, the identity type $B_n = B_n$ is
  equivalent to the group $S_n$, the group which contains the permutations of $n$
  elements.

  \item $\Aut(K_{3,3})$ is the subgroup $\mathbb{Z}_2\times S_3 \times
  S_3$ in $S_6$, since the nodes of $K_{3,3}$ can be partitioned into
  two sets of three, which can be permuted independently.
  Additionally, the two partitions are interchangeable.


  \item Any isomorphism in $\Aut(C_{n})$ is completely determined by
  how it acts on a fixed node in $C_n$. 
\end{enumerate}

\begin{lemma}\label{lem:automorphisms-of-Cn-equiv} Given the function
$\mathsf{rot}$ as in \Cref{def:cyclic-graph} and $k<n$, the function
$\mathsf{rot}^{k}$ is a bijection between $[n]$ and $(C_{n} \simeq
C_{n})$. Moreover, if $k_i < n$ for $i=1,2$ and
$\mathsf{rot}^{k_1} = \mathsf{rot}^{k_2}$ then $k_1 = k_2$.
\end{lemma}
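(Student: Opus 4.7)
The plan is to define $\Phi : [n] \to (C_n \cong C_n)$ by $\Phi(k) := \mathsf{rot}^k$, and prove $\Phi$ is a bijection. Since the node component of $\mathsf{rot}$ is $\mathsf{pred}$ (an equivalence on $[n]$) and the edge types $u = \mathsf{pred}(v)$ are propositions in a set, each power $\mathsf{rot}^k$ is indeed an automorphism, so $\Phi$ is well-defined. The ``moreover'' clause is exactly injectivity of $\Phi$, so after establishing that, only surjectivity remains.

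For injectivity, suppose $\mathsf{rot}^{k_1} = \mathsf{rot}^{k_2}$ with $k_1, k_2 < n$. Evaluating the node component at $0 \in [n]$ yields $\mathsf{pred}^{k_1}(0) = \mathsf{pred}^{k_2}(0)$. The type $[n]$ is trivially $n$-cyclic with cycle generator $\mathsf{pred}$, so \Cref{lem:cyclic-equation} forces the exponent $k < n$ with $\mathsf{pred}^{k}(0) = \mathsf{pred}^{k_1}(0)$ to be unique; hence $k_1 = k_2$.

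For surjectivity, fix $\sigma : C_n \cong C_n$ and let $j < n$ be the unique index with $\mathsf{pred}^j(0) = \sigma(0)$ given by \Cref{lem:cyclic-equation}; the aim is to show $\sigma = \mathsf{rot}^j$. The decisive step is that any automorphism of $C_n$ commutes with $\mathsf{pred}$ on nodes: for each $v \in [n]$ there is a canonical edge $\mathsf{pred}(v) \to v$ (namely $\refl$), so the edge component of $\sigma$ produces an edge $\sigma(\mathsf{pred}(v)) \to \sigma(v)$, which unfolds to the node equality $\sigma(\mathsf{pred}(v)) = \mathsf{pred}(\sigma(v))$. A short induction on $m$ then gives $\sigma(\mathsf{pred}^m(0)) = \mathsf{pred}^m(\sigma(0)) = \mathsf{pred}^{m+j}(0) = \mathsf{rot}^j(\mathsf{pred}^m(0))$, and since $m \mapsto \mathsf{pred}^m(0)$ surjects onto $[n]$ (by the cyclic structure), $\sigma$ and $\mathsf{rot}^j$ agree on every node. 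Because the edge types of $C_n$ are propositions, two homomorphisms that agree on nodes are automatically equal, so $\sigma = \mathsf{rot}^j$. The main obstacle is precisely this commutativity step: it is where the edge-preservation data of $\sigma$ must be unpacked into a genuine equation of node functions. The rest is routine cyclic-type bookkeeping supplied by \Cref{lem:cyclic-equation} and the fact that propositional edges make node data sufficient to determine a homomorphism.
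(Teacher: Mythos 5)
Your proof is correct. The paper states \Cref{lem:automorphisms-of-Cn-equiv} without supplying a proof, so there is no authorial argument to compare against; but your decomposition is the natural one and all the steps check out. In particular, you identify the crucial step correctly: an automorphism $\sigma=(\alpha,\beta)$ of $C_n$ has, for each $v$, the edge $\refl : \Edge_{C_n}(\fpred{v},v)$, and applying $\beta$ yields an inhabitant of $\Edge_{C_n}(\alpha(\fpred{v}),\alpha(v))$, i.e.\ a proof that $\alpha(\fpred{v})=\fpred{\alpha(v)}$. That is exactly the commutation needed to propagate $\sigma(0)=\mathsf{pred}^j(0)$ to all nodes via $\mathsf{pred}^m(0)$, and \Cref{lem:cyclic-equation} supplies both surjectivity of $m\mapsto\mathsf{pred}^m(0)$ and the uniqueness used for injectivity. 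The closing observation, that propositional edges together with $\isSet{[n]}$ make a graph homomorphism into $C_n$ determined by its node component (and that being an isomorphism is a mere proposition per \Cref{def:isomorphism}), cleanly finishes the surjectivity claim. The only remark worth adding is that the statement implicitly requires $n\geq 1$ (consistent with \Cref{def:cycle-graph}); for $n=0$ the claimed bijection $[0]\simeq(C_0\cong C_0)$ fails since $C_0$ is the one-point graph and so has a unique automorphism. Your proof, like the lemma, silently assumes $n\geq 1$, which is fine given the paper's conventions.
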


\section{Graph Embeddings}\label{sec:graph-embeddings}

Graphs are commonly represented by their drawings on a surface like
the plane. In topology, such a drawing --- also called \emph{graph
embedding} --- can be represented as an embedding of the topological
realization of the abstract graph on some given surface
\citep{Stahl1978}. Not all finite graphs can be drawn in the plane,
but all finite graphs can be drawn on some orientable surface. If
\(G\) denotes the abstract graph, then we denote by \(|G|\) the
topological realization of \(G\).

Given a graph embedding in the surface \(S\), say \(f: |G| ↪ S\), one
can consider the space \(S - f(G)\), the surface with the image of the
graph removed. This space consists of a collection of connected
components. Such a component is called a \emph{face} if it is
homeomorphic to an open disk. If all connected components are faces,
one says that the graph embedding is \emph{cellular}
\citep[§3.1.4]{gross}.

Cellular embeddings are interesting because they can be characterised
combinatorially --- up to isotopy --- by the cyclic order which they
induce on the set of nodes around each node.

\subsection{Locally Finite Graphs}\label{sec:locally-finite}

A graph \(G\) is \emph{locally finite} if the set of \emph{incident}
edges at every node \(x\), also called the star of \(x\) in \(G\), is
a finite set. The \emph{valency} of a star in a locally finite graph
is the cardinal number of the corresponding set of edges. Now if we
consider the graph \(U(G)\) as the symmetrisation of \(G\), with
\(\Node_G\) as the node set, and edges between \(x\) and \(y\), of the
type \(\Edge_{G}(x,y) + \Edge_{G}(y,x)\), then the star at \(x\) is
\Cref{def:star}.

\begin{equation}\label[type]{def:star}
  \mathsf{Star}_{G}(x) :≡ \sum_{(y:\Node_{G})} \mathsf{Edge}_{U(G)}(x,y).
\end{equation}

\begin{lemma}\label{lem:star-is-set}
The stars at each node forms a set.
\end{lemma}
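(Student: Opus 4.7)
The plan is to unfold the definition of $\mathsf{Star}_{G}(x)$ and reduce being a set to two standard closure facts: $\Sigma$-types of sets are sets, and binary coproducts of sets are sets. Both of these facts hold in HoTT and follow from the characterisation of identity types of $\Sigma$ and $+$ (\cite[§2.7, §2.12]{hottbook}), combined with the definition of being a set as having propositional identity types.

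Concretely, first I would unfold $\mathsf{Star}_{G}(x) \equiv \sum_{y : \Node_{G}} (\Edge_{G}(x,y) + \Edge_{G}(y,x))$, using the definition of $U(G)$. The base type $\Node_{G}$ is a set by the component $s_{G} : \isSet{\Node_{G}}$ recorded in \Cref{def:graph}, and for each fixed $y : \Node_{G}$ both $\Edge_{G}(x,y)$ and $\Edge_{G}(y,x)$ are sets by the component $t_{G} : \prod_{u,v} \isSet{\Edge_{G}(u,v)}$. Hence each fibre $\Edge_{G}(x,y) + \Edge_{G}(y,x)$ is a set, and the total space is a set.

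The only step requiring a tiny argument is that $A + B$ is a set when $A$ and $B$ are, which I would cite from \cite[Example 3.1.4, §2.12]{hottbook}; alternatively one can prove it directly by case analysis, observing that $\mathsf{inl}(a) = \mathsf{inl}(a')$ is equivalent to $a = a'$ (a proposition, since $A$ is a set), $\mathsf{inr}(b) = \mathsf{inr}(b')$ is equivalent to $b = b'$, and the mixed identity types are $\mathbb{0}$, hence also propositions. Putting these together through the closure of sets under $\Sigma$ gives $\isSet{\mathsf{Star}_{G}(x)}$.

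I do not expect any real obstacle here; the entire argument is a direct application of homotopy-level closure under type formers $\Sigma$ and $+$, using the set-level data bundled into \Cref{def:graph}. The main care point is merely to remember to feed both directions of edges into the coproduct via the definition of $U(G)$, so that one does not accidentally try to conclude the result from $\Edge_{G}(x,\mbox{-})$ alone.
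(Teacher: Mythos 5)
Your proof is correct and follows essentially the same route as the paper's: unfold $\mathsf{Star}_{G}(x)$ as a $\Sigma$-type over the set $\Node_{G}$ with fibres $\Edge_{G}(x,y)+\Edge_{G}(y,x)$, observe the fibres are coproducts of sets (hence sets, by the HoTT Book exercise you both cite), and conclude by closure of sets under $\Sigma$. If anything your version is stated more carefully: the paper's one-line proof slightly misdescribes the base type as ``the set of edges'' when it is in fact $\Node_{G}$, whereas you correctly identify the base as the node set and the edge data as living in the fibres.
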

\begin{proof}
It follows since the base type of \Cref{def:star} is the set of edges
in the graph, and each of the fibers of the $\Sigma$-type is a
coproduct of sets, which we know they form sets \cite[Exer. 3.2]{hottbook}.
\end{proof}

It is proven that if the graph has at least one node of infinite
valency, such a graph can not have a cellular embedding into any
surface \citep[Proposition §3.2]{mohar1988}. In this work, we
therefore only consider locally finite graphs, and we also assume that
all graphs are connected, see \Cref{def:connected-graph}. One can
prove that if the graph \(G\) is connected, then \(U(G)\) is
connected.

\begin{figure}
  \centering
  \begin{subfigure}[b]{0.33\linewidth}
    \centering\includegraphics{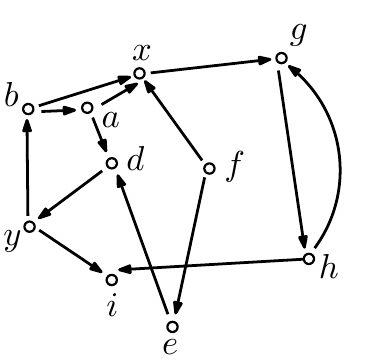}
    \caption{Graph $G$.}
  \end{subfigure}%
  \begin{subfigure}[b]{0.33\linewidth}
    \centering\includegraphics{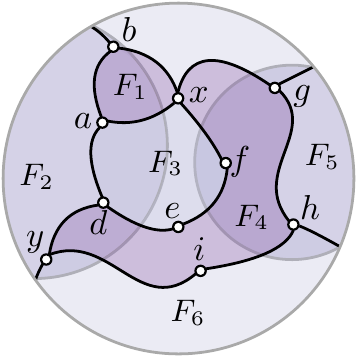}
    \caption{Embedded graph $U(G)$.}
  \end{subfigure}%
  \begin{subfigure}[b]{0.33\linewidth}
    \centering\includegraphics{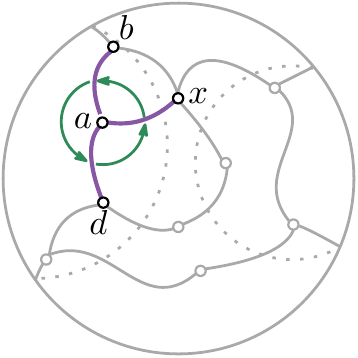}
    \caption{The rotation system at node $a$.}
  \end{subfigure}
  \caption{We show in (a) the drawing of a graph $G$ with edge
  crossings. A representation of the graph $G$ embedded in the sphere
  is shown in (b). The graph embedded $U(G)$ serves as the symmetrisation of 
  the graph $G$. Recall an edge $e$ in $G$ from $x$ to $y$ induces
  an edge in $U(G)$ from $x$ to $y$ and an edge from $y$ to $x$. For
  brevity, we only draw a segment representing such a pair of related
  edges. The corresponding faces of the graph embedding shaded in (b)
  are named $F_i$ for $i$ from $1$ to $6$. It is shown in (c) with
  {\color{color4!60!black} fuchsia} colour the incident edges at the
  node $a$ in $U(G)$. The rotation system at $a$, i.e. the cyclic set
  denoted by $(ba\,ax\,ad)$, is shown in {\color{darkgreen} green}
  colour. The dashed lines represent edges not visible to the view.}
   \label{fig:drawing-graph}
  \end{figure}

\subsection{The Type of Combinatorial Maps}\label{sec:combinatorial-maps}

Cellular embeddings are interesting because they can be characterised
combinatorially --- up to isotopy --- by the cyclic order which they
induce on the set of nodes around each node. We illustrate in
\Cref{fig:drawing-graph} (b) with a graph such a cyclic order. Since
all embeddings in the plane are cellular, we will only work with this
kind of embeddings. We forsake the topological definition of the
embedding to work with their combinatorial characterisation,
\Cref{def:combinatorial-map}.

\begin{definition}\label{def:combinatorial-map} A map for a graph $G$,
  of type $\mathsf{Map}(G)$, is a local \emph{rotation system} at each
  node.
  
  \begin{align*} \mathsf{Map}(G)  &:≡ \prod_{(x : \Node_{G})}
  \mathsf{Cyclic}(\mathsf{Star}_{G}(x)). \end{align*} \end{definition}

\begin{lemma}\label{lem:map-is-set}
The type of maps for a graph forms a set.
\end{lemma}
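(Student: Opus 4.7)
The plan is to reduce the claim to two previously established facts: that dependent products preserve being a set, and that $\mathsf{Cyclic}(A)$ is a set for every type $A$ (\Cref{lem:cyclic-is-set}). Since $\mathsf{Map}(G)$ is defined as the $\Pi$-type $\prod_{x : \Node_G} \mathsf{Cyclic}(\mathsf{Star}_G(x))$, it will suffice to show that each fiber $\mathsf{Cyclic}(\mathsf{Star}_G(x))$ is a set.

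First I would invoke the standard HoTT closure result: if $B : A \to \UU$ is a family such that $\isSet{B(a)}$ for every $a : A$, then $\prod_{a:A} B(a)$ is a set. This follows from function extensionality, since an equality $f = g$ in the $\Pi$-type is equivalent to $\prod_{a:A} (f(a) = g(a))$, which is a product of propositions (each $f(a) = g(a)$ being a proposition by the assumption that $B(a)$ is a set), hence itself a proposition. This reduces the lemma to verifying the set condition pointwise.

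Second, I would fix an arbitrary node $x : \Node_G$ and note that $\mathsf{Cyclic}(\mathsf{Star}_G(x))$ is a set by direct appeal to \Cref{lem:cyclic-is-set}, which asserts that $\mathsf{Cyclic}(A)$ is a set for every type $A$. This step requires no information about $\mathsf{Star}_G(x)$ other than that it is a type in $\UU$; however, \Cref{lem:star-is-set} provides the additional assurance that stars are themselves sets should it be needed when unfolding the argument of \Cref{lem:cyclic-is-set}.

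There is no real obstacle here — the lemma is a routine consequence of the preceding infrastructure. The only subtlety to be careful about is making sure that the truncation inside $\mathsf{Cyclic}$ is what gives the proof of \Cref{lem:cyclic-is-set} its strength, so that invoking it at each fiber is legitimate without further structural assumptions on $\mathsf{Star}_G(x)$.
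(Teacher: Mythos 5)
Your proof is correct and takes the approach the paper clearly intends (the paper itself states \Cref{lem:map-is-set} without a written proof, relying on the same two ingredients you cite: closure of $\Pi$-types under $\isSet{-}$ via function extensionality, and \Cref{lem:cyclic-is-set} applied fiberwise). Your parenthetical caution about \Cref{lem:star-is-set} is harmless but not needed: \Cref{lem:cyclic-is-set} holds for an arbitrary type $A$, precisely because any inhabitant of $\mathsf{Cyclic}(A)$ already forces $A$ to be (merely) equivalent to a finite set, so no extra hypothesis on $\mathsf{Star}_G(x)$ enters the argument.
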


A graph with a map is a locally finite graph. Note that the stars of
any map are finite sets, because being cyclic for a type implies that
the type is a finite set.

\begin{lemma}
The type of maps for a finite graph is a finite set.
\end{lemma}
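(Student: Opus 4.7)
The plan is to decompose $\mathsf{Map}(G) \equiv \prod_{x : \Node_G} \mathsf{Cyclic}(\mathsf{Star}_G(x))$ into finite ingredients, then assemble using the standard closure of finite sets under coproducts, $\Sigma$, and $\Pi$ over finite bases. First I would check that each star $\mathsf{Star}_G(x) \equiv \sum_{y : \Node_G}(\Edge_G(x,y) + \Edge_G(y,x))$ is finite: the base $\Node_G$ is finite by \Cref{def:finite-graph}, and each fiber is a coproduct of two finite edge sets. A $\Sigma$-type of finite types over a finite base is finite, which can be justified by transporting the analogous statement about $[n]$-indexed sums along \Cref{lem:finite-type-prop1}.

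Next I would establish that for any finite set $S$, the type $\mathsf{Cyclic}(S)$ is finite. Unfolding its definition, $\mathsf{Cyclic}(S)$ sits as a subtype of $S \to S$: a cyclic structure consists of a $\varphi : S \to S$ together with a number $n$ (forced to be the cardinality of $S$) and a propositional witness of conjugacy to $\mathsf{pred}$. Since finiteness is a proposition, \Cref{lem:finite-type-prop1} lets me reduce to the case $S = [n]$, where the ambient function type $[n] \to [n]$ is already finite. The idea is then to exhibit the cyclic condition as a decidable predicate on this finite ambient type, so that $\mathsf{Cyclic}([n])$ appears as a decidable subtype of a finite set, and hence is finite.

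Finally I would assemble the product: $\mathsf{Map}(G)$ is a dependent product of the finite sets $\mathsf{Cyclic}(\mathsf{Star}_G(x))$ indexed by the finite set $\Node_G$, and after transport along an equivalence $\Node_G \simeq [k]$ this becomes an iterated Cartesian product of finite sets, hence finite. Combined with \Cref{lem:map-is-set}, which already gives sethood, this completes the proof.

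The main obstacle I anticipate is the decidability step inside the middle stage: one must guarantee that the propositional truncation in the definition of $\mathsf{Cyclic}$ does not obstruct finiteness. The cleanest route I foresee is to observe that the truncated $\Sigma$ quantifies over the finite set of bijections $[n] \simeq [n]$, so mere existence becomes a finite search; decidable equality on $[n]$ then makes the equation $e \circ \varphi = \mathsf{pred} \circ e$ decidable for each candidate $e$, and hence makes the whole condition decidable. An alternative would be to identify $\mathsf{Cyclic}([n])$ bijectively with the set of $n$-cycles in $S_n$, of cardinality $(n-1)!$ for $n \geq 1$, giving an explicit cardinal count.
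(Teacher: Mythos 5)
The paper states this lemma without proof, so there is no paper argument to compare against; you are filling a genuine gap. Your decomposition is sound: each $\mathsf{Star}_G(x)$ is a $\Sigma$-type over the finite base $\Node_G$ with finite (coproduct) fibers, hence finite; for finite $S$, the type $\mathsf{Cyclic}(S)$ is, for each fixed $\varphi : S\to S$, a proposition on the remaining data (the index $n$ is forced, since $P(\varphi,m)$ and $P(\varphi,m')$ both yield $[m]\simeq[m']$, hence $m=m'$), so it is a subtype of the finite set $S\to S$; the subtype is decidable because, after reducing to $S=[n]$, the existential is a search over the finite set $[n]\simeq[n]$ with a decidable body by function extensionality and decidable equality on $[n]$; and finally a $\Pi$-type of finite sets over a finite base is finite. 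Combined with \Cref{lem:map-is-set}, this gives the claim.

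One small imprecision worth repairing: you invoke \Cref{lem:finite-type-prop1} to reduce from an arbitrary finite $S$ to $[n]$, but that lemma as stated in the paper applies to types equipped with a \emph{cyclic} structure $\langle A,f,n\rangle$, not to bare finite types. What you actually want is the more elementary principle that, since $\mathsf{isFinite}$ is a proposition (\Cref{lem:is-finite-is-prop}), any proposition about a finite type $S$ may be proved by eliminating the truncation $\|S\simeq[n]\|$ furnished by \Cref{def:finite-type} and transporting along the resulting equivalence. Once phrased this way the reduction to $[n]$ is unobjectionable, and the rest of your argument goes through.
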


For brevity, we will use from now the variable \(\mathcal{M}\) to
denote a map of the graph \(G\).

\subsection{The Type of Faces}\label{sec:type-of-faces}

Combinatorially, a face consists of a cyclic walk in the embedded
graph where there are no edges on the inside of the cycle, and no node
occur twice. \Cref{def:face} is our attempt to make this intuition
formal. The criterion \say{no edges on the inside} for a face ---
called \emph{map-compatibility} below --- is captured by the fact that
each pair of sequential edges on the face is a successor--predecessor
pair in the cyclic order of the edges around their common node. Also,
note that our graphs are directed, and therefore, the underlying graph
of a face can consist of edges in any direction. Then, faces of a map
of a graph \(G\) are related to the graph \(U(G)\).

\begin{definition}\label{def:face} A \emph{face} of $\mathcal{M}$, of
type $\mathsf{Face}(G, \mathcal{M})$, consists of:
\begin{enumerate}
  \item a cyclic graph $A$, and
  \item a graph homomorphism $h$ given by $(\alpha, \beta)$ of type $\Hom{A}{U(G)}$, which is
  \begin{enumerate}
  \item \emph{edge-injective}, as in \Cref{def:edge-injective},
    \item \emph{star-compatible}, if the condition in \Cref{eq:star-condition} holds, and
    \begin{equation}\label[type]{eq:star-condition}
      \prod_{(x : \Node_{G})} \ \| \mathsf{Star}_{G}(\alpha x) \| \to \|\mathsf{Star}_{A}(x) \|,
    \end{equation}
    \item
    \emph{corner-compatible}, which is the evidence that
  $h$ is compatible with the edge-ordering given by the map $\mathcal{M}$ at the node
  $\alpha(x)$ and the edge-ordering coming from the star at that node
  $x$ in $A$.  The corresponding type is \Cref{eq:map-compatible}.
  \begin{align}\label[type]{eq:map-compatible}
    \begin{split} 
    & \mathcal{M}(α(x))\,\langle\, (α (\mathsf{pred}(x)) , \mathsf{flip}(\beta(\mathsf{pred}(x),x,a))\,\rangle
     \\
     & =_{\mathsf{Star}_{G}(\alpha(x))} (\,\alpha(\mathsf{suc}(x))\,,\,\beta(x, \mathsf{suc}(x),a^{+})\,).
    \end{split}
  \end{align}
  The \emph{previous edge} at $x$ is the
  edge $a : \Edge_{\Node_{A}}(\mathsf{pred}(x), x)$, and the edge
  \emph{after} $a$ is the edge denoted by $a^{+}$ of type
  $\Edge_{\Node_{A}}(x , \mathsf{suc}(x))$, see \Cref{fig:face}.
  \end{enumerate}
\end{enumerate}

\noindent Given a face, we refer to $\mathsf{ismapcomp}(h)(x)$ to the
witness conditions of \Cref{eq:star-condition,eq:map-compatible}. \Cref{fig:face}
illustrates part of the required data to define a face for the map of
$G$ given in \Cref{fig:drawing-graph} (b). \end{definition}

\begin{figure} \centering
  \includegraphics[width=0.6\textwidth]{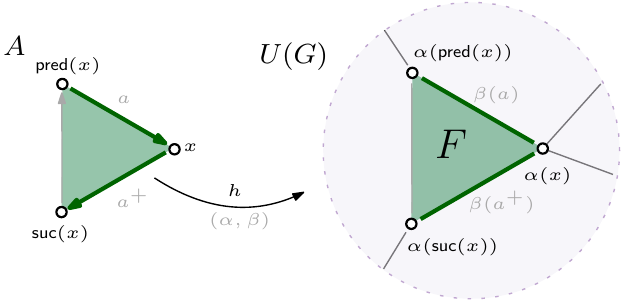} \caption{On the
  right side, we shade the face $F_1$ of the graph $G$ embedded in the
  sphere given in \Cref{fig:drawing-graph}. We have the cycle graph
  $C_3$ and $h:\mathsf{Hom}(C_3,U(G))$ given by $(\alpha, \beta)$ on
  the left side. $C_3$ and $h$ can be used to define the face $F_1$
  using $C_3$ as the graph $A$ in \Cref{def:face}.} \label{fig:face}
  \end{figure}

\begin{definition}\label{def:edge-injective}
  A graph homomorphism $h$ from $G$ to $H$ given by $(\alpha, \beta)$
  is \emph{edge-injective}, denoted by $\mathsf{isedgeinj}(h)$, if the
  function $f$ defined below is an embedding.
    \begin{align*}
        &f :   \sum_{(x,y : \Node_{G})} \Edge_{G}(x,y)
          \to \sum_{(x,y : \Node_{H})} \Edge_{H}(x,y)\\
        &f(x,y,e) :\equiv (\alpha(x), \alpha(y), \beta(x,y,e)).
    \end{align*}
\end{definition}

\begin{lemma}
  For a graph homomorphism,
  \begin{enumerate}
  \item being edge-injective is a proposition and
  \item being map-compatible is a proposition.
  \end{enumerate}
\end{lemma}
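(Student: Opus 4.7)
The plan is to reduce both parts to the standard closure principles for propositions: $\Pi$-types, $\Sigma$-types, and products preserve propositionality when every fiber is a proposition, a function type into a proposition is always a proposition, and any identity type in a set is a proposition.

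For part (1), I would unfold $\mathsf{isedgeinj}(h)$ to the statement that the induced function $f$ on the total types of edges is an embedding. It is a standard result of HoTT \citep[§4.6]{hottbook} that for any function $f : X \to Y$, being an embedding is a proposition, since it is definable as $\prod_{x, x' : X} \mathsf{isEquiv}(\mathsf{ap}_f)$, which is a $\Pi$ of propositions (being an equivalence is a proposition). So (1) follows directly.

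For part (2), I would unfold $\mathsf{ismapcomp}(h)$ and observe it is a $\Pi$ over $x : \Node_G$ of a product of two clauses: the star-compatibility clause of \Cref{eq:star-condition} and the corner-compatibility clause of \Cref{eq:map-compatible} (together with the quantification needed to pull out $a$ and $a^+$). The star-compatibility clause is a function type whose codomain is a propositional truncation, hence a proposition. The corner-compatibility clause is an equation in $\mathsf{Star}_G(\alpha(x))$; by \Cref{lem:star-is-set}, this type is a set, so the equation is a proposition. The implicit quantification over the edges $a$ and $a^+$ of $A$ introduces a $\Pi$ over a set, again preserving propositionality. Combining these with the outer $\Pi$ over $\Node_G$ and the product of the two clauses, we get a proposition.

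The only potentially subtle point is ensuring that the edges $a$ and $a^+$ in clause (2b) are either uniquely determined by $x$ (so no extra quantification is needed) or otherwise handled by a $\Pi$-type; either way, the closure of propositions under these type formers delivers the result, so no genuine obstacle arises. The whole argument is a routine application of the closure principles, and no new ingredient beyond \Cref{lem:star-is-set} and the propositionality of embeddings is required.
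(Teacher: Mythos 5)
The paper states this lemma without supplying a proof, so there is nothing to compare against directly; your argument is the natural one and is correct. Part (1) is precisely the standard fact that being an embedding is a proposition, as you cite. Part (2) decomposes into the star-compatibility clause (a $\Pi$-type with propositionally truncated codomain) and the corner-compatibility clause (an identity in $\mathsf{Star}_G(\alpha(x))$, a set by the preceding lemma), and $\Pi$-types and products preserve propositionality.

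One small point worth making explicit rather than hedging: the quantification over the edges $a$ and $a^+$ of the cyclic graph $A$ is safe because the edge types of $A$ are themselves propositions. This follows from $A$ being in the connected component of some $C_n$ (whose edge type $u = \mathsf{pred}(v)$ is an identity in a set, hence a proposition), together with a transport argument of the kind in \Cref{lem:finite-type-prop1}: being a proposition is a proposition, so it transfers along the mere equality witnessing cyclicity. Once you know these edge types are propositions, whether the definition quantifies over $a, a^+$ or fixes a canonical choice makes no difference, which tightens the concluding remark of your proposal. The rest is, as you say, a routine application of closure principles.
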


We devote the rest of this section to proving that the type of faces
forms a set in \Cref{lem:face-is-set}. This claim rests on the fact
that (i) the type of cyclic graphs forms a set, (ii) the type of graph
homomorphisms forms a set, and (iii) the conditions, edge-injective
and map-compatibe in \Cref{def:face} are mere proposition. One might
suspect this type forms a homotopy 1-groupoid from the previous facts.
However, the edge-injectivity property of the underlying graph
homomorphism of each face suffices to show that the type of faces is a
set.

\begin{lemma}\label{lem:edgeinj-implies-prop} Let $f$ and $g$ be
edge-injective graph homomorphisms from $C_{n}$ to a graph $G$ and
$n>0$. Then the type $\Sigma_{e : C_{n} = C_{n}}~(\tr{\lambda
X.\Hom{X}{G}}{e}{f} = g)$ is a mere proposition.
  \begin{equation*}
    \begin{tikzcd}[column sep=large]
        C_n \arrow[rd, "f"'] \arrow[rr, "e", equals] &   & C_n \arrow[ld, "g"] \\
                                                         & G &
        \end{tikzcd}
  \end{equation*}
\end{lemma}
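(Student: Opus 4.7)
The plan is to show any two elements $(e_1, p_1)$ and $(e_2, p_2)$ of the given $\Sigma$-type are equal. By \Cref{lem:hom-is-set} the type $\Hom{C_n}{G}$ is a set, so for any fixed $e : C_n = C_n$ the equation $\tr{\lambda X.\Hom{X}{G}}{e}{f} = g$ is a proposition. Hence the two pairs are equal as soon as we produce an equality $e_1 = e_2$ in $C_n = C_n$, and the whole argument reduces to exhibiting this base equality.

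The next step is to reparametrise $C_n = C_n$ using the equivalence principle (\Cref{thm:equivalence-principle}) together with \Cref{lem:automorphisms-of-Cn-equiv}: every automorphism of $C_n$ is uniquely of the form $\mathsf{rot}^k$ for some $k<n$. So each $e_i$ determines a unique $k_i<n$ via the automorphism $\mathsf{rot}^{k_i}$, and producing $e_1 = e_2$ is now equivalent to showing $k_1 = k_2$.

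I would then unfold the transport in the style of \Cref{comp-identity-fin-4} to turn each witness $p_i$ into an equation of graph homomorphisms of the form $f \circ (\mathsf{rot}^{k_i})^{-1} = g$. Combining the two witnesses yields $f \circ (\mathsf{rot}^{k_1})^{-1} = f \circ (\mathsf{rot}^{k_2})^{-1}$. Applying edge-injectivity of $f$ --- whose edge component is, by definition, an embedding and hence injective --- the edge maps of $(\mathsf{rot}^{k_1})^{-1}$ and $(\mathsf{rot}^{k_2})^{-1}$ must coincide. Since $n>0$, every node of $C_n$ is the source of an edge, so the underlying node maps are forced to coincide as well, giving $\mathsf{rot}^{k_1} = \mathsf{rot}^{k_2}$ as graph homomorphisms. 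The uniqueness clause of \Cref{lem:automorphisms-of-Cn-equiv} then yields $k_1 = k_2$, and hence $e_1 = e_2$.

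The main obstacle is pinning down the transport formula $\tr{\lambda X.\Hom{X}{G}}{e}{f}$ carefully: one must verify that, along an equality $e$ corresponding to an isomorphism $\phi$ in the family $X \mapsto \Hom{X}{G}$, it reduces to $f \circ \phi^{-1}$ with the correct variance. Once this identification is in place, the reduction from edge equality to node equality in $C_n$, and the final appeal to edge-injectivity, are routine.
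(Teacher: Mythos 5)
Your proposal is correct and follows essentially the same route as the paper: reduce via \Cref{lem:hom-is-set} to an equality of the base terms, reparametrise $C_n = C_n$ by $\mathsf{rot}^k$ using \Cref{lem:automorphisms-of-Cn-equiv}, combine the two witnesses into an equation of composites, invoke edge-injectivity to force the automorphisms to agree (using that every node of $C_n$ is incident to an edge when $n>0$), and then apply the uniqueness clause of \Cref{lem:automorphisms-of-Cn-equiv}. The only inessential variation is that the paper works with $f = g \circ \coe{e}$ and applies edge-injectivity of $g$, whereas you rearrange to $f \circ (\mathsf{rot}^{k_i})^{-1} = g$ and apply edge-injectivity of $f$; since both $f$ and $g$ are assumed edge-injective, either choice works.
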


\begin{proof}

The result follows from proving that the $\Sigma$-type in question is
equivalent to a proposition. The corresponding equivalence is given by
\Cref{lem:Cn-Cn-eq1}, in which we use some known results about
Univalence and \Cref{lem:automorphisms-of-Cn-equiv}, as in the very
last step.

  \begin{subequations}\label[calc]{lem:Cn-Cn-eq1}
    \begin{align}
    \sum_{(e : C_{n} = C_{n})}~(\tr{\lambda X.\Hom{X}{G}}{e}{f} = g) &\simeq \sum_{(e : C_{n} = C_{n})}~(f = g \circ \coe{e})\\
    &\simeq \sum_{(e : C_{n} \simeq C_{n})}~(f = g \circ e) \\
    &\simeq \sum_{(k : [n])} (f = g \circ \mathsf{rot}^{k}).
  \end{align}
\end{subequations}

It remains to show that the last equivalent type is a proposition. Let
$(k_1 , p_1)$, and $(k_2, p_2)$ be of type $\Sigma_{k : [n]} (f = g
\circ \mathsf{rot}^{k})$. We must show that  $(k_1 , p_1)$ is equal to
$(k_2, p_2)$. Since $\Hom{C_n}{G}$ is a set, we only need to prove
that $k_1$ is equal to $k_2$. To show that,
\Cref{lem:automorphisms-of-Cn-equiv} is used in the proof. Let us
consider the equality $(p_1^{-1} \cdot p_2)$ of type $(g \circ
\mathsf{rot}^{k_1} = g \circ \mathsf{rot}^{k_2})$. Then, by computing
the identity type of graph isomorphisms, one can obtain
$(p_1^{-1} \cdot p_2)$ is equivalent to having two equalities, namely
$p :\alpha(g \circ \mathsf{rot}^{k_1}) = \alpha(g \circ
\mathsf{rot}^{k_2})$ and $q$ of type
\begin{equation*}
   \tr{\lambda e.\prod_{(x,y : \Node_{C_n})} \Edge_{C_n}(x,y) → \Edge_{G}(e(x),e(y)) }
    {p}{g \circ \mathsf{rot}^{k_1}} = \beta(g \circ \mathsf{rot}^{k_2}).
\end{equation*}
By the characterisation of $\Sigma$-types and with the previous
equalities, $p$ and $q$, one can get another equality $r$ of
\Cref{lem:Cn-Cn-eq2} for $x,y:\Node_{C_n}$ and $e:\Edge_{C_n}(x,y)$.

\begin{equation}\label[type]{lem:Cn-Cn-eq2}
\begin{array}{lclll}
  &(\alpha (g \circ \mathsf{rot}^{k_{i}})(x)
    , & \alpha(g \circ \mathsf{rot}^{k_{i}})(y)
    , & \beta(g \circ \mathsf{rot}^{k_{i}})(x,y,e))\\
  &= (\alpha(g)(\alpha(\mathsf{rot}^{k_{i}})(x))
    , & \alpha(g)(\alpha(\mathsf{rot}^{k_{i}})(y))
    , & \beta(g)(\beta(\mathsf{rot}^{k_{i}})(x,y,e))
  ).
\end{array}
\end{equation}
Now since the graph homomorphism $g$ is edge-injective, applying
\Cref{def:edge-injective} to the equality $r$, one gets an equality
$r'$ of \Cref{lem:Cn-Cn-eq3}. By applying
\Cref{lem:automorphisms-of-Cn-equiv} to $r'$, we conclude that
$k_1$ is equal to $k_2$ from where the required conclusion follows.

\begin{equation}\label[type]{lem:Cn-Cn-eq3}
  \begin{array}{lcll}
    &(\alpha(\mathsf{rot}^{k_{1}})(x)
    , & \alpha(\mathsf{rot}^{k_{1}})(y)
    , & (\beta(\mathsf{rot}^{k_{1}})(x,y,e)))\\
    &= (\alpha(\mathsf{rot}^{k_{2}})(x)
    , & \alpha(\mathsf{rot}^{k_{2}})(y)
    , & (\beta(\mathsf{rot}^{k_{2}})(x,y,e))).
\end{array}
\end{equation}
\end{proof}

\begin{lemma}\label{lem:face-is-set}
  The faces of a map forms a set.
\end{lemma}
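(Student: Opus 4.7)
The plan is to show that the identity type $(F_1 = F_2)$ is a proposition for any two faces $F_1, F_2 : \Face{G, \mathcal{M}}$. I would unfold \Cref{def:face} and write each $F_i$ as a tuple $(A_i, (\varphi_i, n_i, p_i), h_i, q_i)$, where the middle triple witnesses $\CyclicGraph(A_i)$, $h_i : \Hom{A_i}{U(G)}$, and $q_i$ bundles the three propositional conditions (edge-injectivity, star-compatibility, corner-compatibility). Applying the $\Sigma$-type characterisation of equality, all factors coming from $p_i$, from the components of $q_i$, and from the equality $n_1 = n_2$ contribute contractible pieces, since they live in mere propositions.

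It therefore suffices to show that the residual type
$$\sum_{(\alpha : A_1 = A_2)}\bigl(\tr{\lambda X.\Hom{X}{X}}{\alpha}{\varphi_1} = \varphi_2\bigr) \times \bigl(\tr{\lambda X.\Hom{X}{U(G)}}{\alpha}{h_1} = h_2\bigr)$$
is a proposition. The key move is a reduction to the canonical cyclic graph: by the witnesses $p_1, p_2$, both pairs $(A_i, \varphi_i)$ are merely equal to $(C_n, \mathsf{rot})$ for the same $n$, and hence lie in the same connected component of $\sum_{X : \Graph} \Hom{X}{X}$. Since \say{being a proposition} is itself propositional, by \Cref{lem:connected-share-same-predicates} I may assume without loss of generality that $(A_1, \varphi_1) \equiv (A_2, \varphi_2) \equiv (C_n, \mathsf{rot})$.

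In this special case the first compatibility equation, $\tr{\lambda X.\Hom{X}{X}}{\alpha}{\mathsf{rot}} = \mathsf{rot}$, is automatic: by \Cref{lem:automorphisms-of-Cn-equiv} every self-equality of $C_n$ is a power of $\mathsf{rot}$, and such powers visibly commute with $\mathsf{rot}$. What remains is precisely
$$\sum_{(\alpha : C_n = C_n)}\bigl(\tr{\lambda X.\Hom{X}{U(G)}}{\alpha}{h_1} = h_2\bigr),$$
which is exactly the type shown to be a mere proposition in \Cref{lem:edgeinj-implies-prop}; its edge-injectivity hypothesis on $h_1$ and $h_2$ is supplied directly by the face data $q_1$ and $q_2$. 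The desired conclusion follows.

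The main obstacle I anticipate is the careful transport bookkeeping during the $\Sigma$-type decomposition and justifying the reduction to $(C_n, \mathsf{rot})$ rigorously via \Cref{lem:connected-share-same-predicates}; once that reduction is in place, the substantive work has already been done in \Cref{lem:edgeinj-implies-prop} and the argument concludes immediately.
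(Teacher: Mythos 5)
Your proposal follows essentially the same route as the paper: unfold the face data, discard equalities of propositional witnesses, eliminate the truncated cyclicity witnesses to reduce the underlying cyclic graphs to $C_n$, and hand the remaining work to \Cref{lem:edgeinj-implies-prop}. Your invocation of \Cref{lem:connected-share-same-predicates} is a clean rephrasing of the paper's direct truncation elimination (permissible because \say{being a set} unfolds to a proposition), so that step is sound. One genuine (if small) difference: you argue that the $\varphi$-compatibility factor $\tr{\lambda X.\Hom{X}{X}}{\alpha}{\mathsf{rot}} = \mathsf{rot}$ is \emph{contractible} --- always inhabited, via \Cref{lem:automorphisms-of-Cn-equiv} and the fact that powers of $\mathsf{rot}$ commute, and living in a set --- and so can be dropped outright. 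The paper does not need (or use) this; it keeps that factor and merely observes it is a proposition because $\Hom{C_n}{C_n}$ is a set, which already suffices since a $\Sigma$-type of a proposition with propositional fibers is a proposition. Your version is a valid minor simplification.

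There is one concrete gap. \Cref{lem:edgeinj-implies-prop} is stated only for $n>0$, and your argument applies it unconditionally after the reduction to $C_n$. The paper explicitly performs a case split: when $n=m=0$ the cyclic graph is the one-point graph $C_0$, $C_0 = C_0$ is contractible, the identity is the unique automorphism, and the homomorphism factor lives in a set --- so the $\Sigma$-type is a proposition by a direct argument, not by \Cref{lem:edgeinj-implies-prop}; when both are positive, path induction on $n=m$ brings you to the hypotheses of \Cref{lem:edgeinj-implies-prop}. You need to add the $n=0$ case (and, in passing, observe that a mismatch in cardinality makes the type empty, hence trivially a proposition). Also, your statement that the $n_1 = n_2$ factor \say{contributes a contractible piece since it lives in a mere proposition} conflates \say{proposition} with \say{contractible}: $n_1 = n_2$ is a proposition because $\mathbb{N}$ is a set, but it is contractible only once inhabited, which here must be extracted from the equality of the underlying graphs. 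That step deserves an explicit sentence.
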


\begin{linkproof}[]{https://jonaprieto.github.io/synthetic-graph-theory/lib.graph-embeddings.Map.Face.isSet.html}
Let $F_1$ and $F_2$ be two faces of a map $\mathcal{M}$. We will show
that the type $F_1 = F_2$ is a mere proposition in
\Cref{proof:face-is-set}, with the following conventions.
\begin{itemize}
\item  $\mathcal{A}$ is the cyclic graph related to the face $F_1$, $\mathcal{A} :\equiv (A, (\varphi_{A},
n,\mathsf{iscyclic}(A,\varphi_{A},n)))$, and
\item $\mathcal{B}$ is the cyclic graph related to the face $F_2$, $\mathcal{B} :\equiv (B, (\varphi_{B},
m,\mathsf{iscyclic}(B,\varphi_{B},m)))$.
\end{itemize}

We first unfold definitions of $F_1$ and $F_2$ in
\Cref{proof:face-is-set-step2}, and simplifying propositions in
\Cref{proof:face-is-set-step3}, namely $\mathsf{isedgeinj}$,
$\mathsf{ismapcomp}$, and $\mathsf{iscyclic}$. Then, by expanding the
definitions of $\mathcal{A}$ and $\mathcal{B}$ in
\Cref{proof:face-is-set-step4}, and simplifying the propositions terms
such as being a cyclic graph, one gets \Cref{proof:face-is-set-step5}.
Next, we reorder in \Cref{proof:face-is-set-step5} the tuple
equalities to create an opportunity for path induction towards the
application of \Cref{lem:edgeinj-implies-prop}. Now, since we want to
prove that the type of faces is a set, and that itself is a
proposition, the truncation elimination is applied to the propositions
$\mathsf{iscyclic}(A,\varphi_{A},n)$ and
$\mathsf{iscyclic}(A,\varphi_{A},n)$. Then, the graphs $A$ and $B$
become respectively $C_n$ and $C_m$ in \Cref{proof:face-is-set-step6}.
\Cref{proof:face-is-set-step7} follows from the characterisation of the
identity type between tuples in a nested $\Sigma$-type.

\begin{subequations}
  \label[calc]{proof:face-is-set}
\begin{align}
  & (F_1 = F_2) \nonumber \\
  &\equiv ((\mathcal{A}, f, \mathsf{isedgeinj}(f), \mathsf{ismapcomp}(f))
         =(\mathcal{B}, g, \mathsf{isedgeinj}(g), \mathsf{ismapcomp}(g))
         )  \label[eq]{proof:face-is-set-step2} \\
  &\simeq ((\mathcal{A}, f) = (\mathcal{B} , g))  \label[equiv]{proof:face-is-set-step3} \\
  &\equiv
  (
    ( (A, (\varphi_{A}, n ,\mathsf{iscyclic}(A,\varphi_{A},n)))
    , f)
  = ( (B, (\varphi_{B}, m ,\mathsf{iscyclic}(B,\varphi_{B},m)))
      , g) \label[eq]{proof:face-is-set-step4} \\
  &\simeq
  (
    ( (A, (\varphi_{A}, n))
    , f)
  = ( (B, (\varphi_{B}, m))
      , g)  \label[equiv]{proof:face-is-set-step5} \\
  &\simeq ((n , ((C_{n},f) , \varphi_{C_{n}})) = (m , ((C_{m}, g) , \varphi_{C_{m}}))) \label[equiv]{proof:face-is-set-step6} \\
  &\simeq
    \sum_{(p : n = m)}
      \sum_{(e',-) : \sum_{(e : C_{n} = C_{m})} \tr{\lambda X .\Hom{X}{U(G)}}{e}{f} = g}
        (\tr{\lambda X . \Hom{X}{X}}{e'}{\varphi_{C_{n}}} = \varphi_{C_{m}}).
  \label[equiv]{proof:face-is-set-step7}
\end{align}
\end{subequations}

It only remains to show that \Cref{proof:face-is-set-step7} is a mere
proposition. We show this by proving that each type in
\Cref{proof:face-is-set-step7} is a proposition. First, we unfold the
cyclic graph definition for $C_n$ and $C_m$, using
\Cref{def:cyclic-graph}. Secondly, a case analysis on $n$ and $m$ is
performed. This approach creates four cases, where $n$ and $m$ can be
zero or positive. However, we only keep the cases where $n$ and $m$ are
structurally equal. One can show the other cases are
\href{https://jonaprieto.github.io/synthetic-graph-theory/lib.graph-embeddings.Map.Face.isSet.html\#860}{imposible}
with an equality between $n$ and $m$.

\begin{enumerate}
\item If both, $n$ and $m$, are zero,
then, by definition, $C_n$ and $C_m$ are the one-point graph. In this
case, the conclusion easily follows. The base type $n = m$ of the
total space in \Cref{proof:face-is-set-step7} is a proposition because
$\mathbb{N}$ is a set. The type $C_0 = C_0$ is a proposition since it
is contractible. The identity graph homomorphism is the
\href{https://jonaprieto.github.io/synthetic-graph-theory/lib.graph-embeddings.Map.Face.isSet.html\#2790}{unique}
automorphism of $C_0$. Lastly, because $\Hom{C_n}{C_n}$ is a
set, the remaining type of the $\Sigma$-type is a mere proposition, completing the
proof obligations.

\item If $n$ and $m$ are positive, we reason similarly. The type $n =
m$ is a proposition. By path induction on $p : n = m$, the second base
type of the $\Sigma$-type becomes \Cref{proof:face-is-set-step8}.

\begin{equation}\label[type]{proof:face-is-set-step8}
  \sum_{(e : C_{n} = C_{n})} (\tr{\lambda X .\Hom{X}{U(G)}}{e}{f} = g).
\end{equation}

\Cref{proof:face-is-set-step8} is a proposition by
\Cref{lem:edgeinj-implies-prop}. The remaining type of the
$\Sigma$-type is a mere proposition, because $\Hom{C_n}{C_n}$ is a
set. Therefore, the $\Sigma$-type in \Cref{proof:face-is-set-step7} is a
proposition as required.
 \end{enumerate}

\end{linkproof}

\subsection{Face Boundary}\label{sec:boundary-of-a-face}

Each face \(\mathcal{F}\) of a map \(\mathcal{M}\) given by
\(\langle A , h\rangle\) is bounded by a closed walk in \(U(G)\)
induced by the non-empty cyclic graph \(A\) through \(h\). We refer to
such a walk as the \emph{boundary} of \(\mathcal{F}\), and it is
denoted by \(\partial \mathcal{F}\). The \emph{degree} of
\(\mathcal{F}\) is the length of \(\partial \mathcal{F}\), which is
the number of nodes in \(A\). The boundary \(\partial \mathcal{F}\)
can be walked in two directions with respect to the orientation given
by its map.

As illustrated by \Cref{fig:walks-in-a-face}, given two different
nodes \(x\) and \(y\) in \(\partial \mathcal{F}\), we can connect
\(x\) to \(y\) using the walk in the clockwise direction,
\(\mathsf{cw}_{\mathcal{F}}(x,y)\). Similarly, one can connect \(x\)
to \(y\) using the walk in the counter-clockwise direction,
\(\mathsf{ccw}_{\mathcal{F}}(x,y)\). Such walks are induced by the
walks in the cyclic graph \(A\), see \Cref{lem:walk-in-UCn}.

\begin{lemma} \label{lem:walk-in-Cn}
  Supposing $x,y:\Node_{C_{n}}$, the following claims hold for the cycle graph
  $C_{n}$.
  \begin{enumerate}
    \item The type $\Edge_{C_{n}}(x,y)$ is a proposition.
    \item There exists an edge of type
     $\Edge_{C_{n}}(\mathsf{pred}(x), x)$ and an edge of type
     $\Edge_{C_{n}}(x, \mathsf{succ}(x))$.
    \item There exists a walk going in the clockwise
      direction denoted by $\mathsf{cw}_{C_n}(x,y)$ from $x$ to $y$. 
  \end{enumerate}
\end{lemma}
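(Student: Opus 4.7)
The plan is to handle each claim in turn, exploiting the very concrete definition $\Edge_{C_n}(x,y) \equiv (x = \mathsf{pred}(y))$ from \Cref{def:cycle-graph}, together with properties of the cyclic structure on $[n]$.

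For (1), since $[n]$ is a finite type and hence a set, the identity type $x =_{[n]} \mathsf{pred}(y)$ is automatically a proposition, giving the claim immediately. For (2), the edge $\Edge_{C_n}(\mathsf{pred}(x), x) \equiv (\mathsf{pred}(x) = \mathsf{pred}(x))$ is inhabited by $\refl$. For the edge $\Edge_{C_n}(x, \mathsf{succ}(x)) \equiv (x = \mathsf{pred}(\mathsf{succ}(x)))$, I use that $\mathsf{succ}$ is the inverse of $\mathsf{pred}$, as noted in \Cref{sec:cycle-types}; hence $\mathsf{pred}(\mathsf{succ}(x)) = x$ and reflexivity delivers the desired inhabitant (after a transport along this equality).

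For (3), I build the walk inductively. The case $n = 0$ is trivial, since $\Node_{C_0} \equiv \mathbb{1}$ forces $x$ and $y$ to be the unique point, so the empty walk $\langle x\rangle$ suffices. For $n \geq 1$, I invoke \Cref{lem:cyclic-equation} on the cyclic type $[n]$ (cyclic via $\mathsf{pred}$), obtaining a unique $k < n$ with $\mathsf{pred}^k(y) = x$, equivalently $\mathsf{succ}^k(x) = y$. I then define $\mathsf{cw}_{C_n}(x,y)$ by recursion on $k$: in the base case $k = 0$ I return $\langle x\rangle$ (after transport along $x = y$); in the step case I prepend the single-step edge from $x$ to $\mathsf{succ}(x)$ supplied by part (2), and recursively build the walk from $\mathsf{succ}(x)$ to $y$, whose existence is guaranteed because $\mathsf{succ}^{k-1}(\mathsf{succ}(x)) = y$.

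The main obstacle is the bookkeeping in (3): one must keep track of the dependence on $k$ and arrange the transports so that the walk of length $k$ from $x$ to $\mathsf{succ}^k(x)$ is correctly coerced along the equality $\mathsf{succ}^k(x) = y$ to land in $\mathsf{Walk}(x,y)$. The fact that walks form a set (stated just after \Cref{def:walk}) makes this coercion unambiguous. Neither subclaim involves any deep conceptual step: the argument reduces to iterating the construction of (2) along the cyclic order provided by \Cref{lem:cyclic-equation}.
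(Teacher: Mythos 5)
Your proof is correct, and since the paper states \Cref{lem:walk-in-Cn} without giving an explicit proof, the route you take is the natural one that the definitions suggest. Parts (1) and (2) reduce immediately to the fact that $\Edge_{C_n}(u,v) \equiv (u = \mathsf{pred}(v))$ is an identity type in the set $[n]$, together with $\mathsf{succ}$ being the two-sided inverse of $\mathsf{pred}$; for (2) it would be slightly cleaner to say that symmetry applied to $\mathsf{pred}(\mathsf{succ}(x)) = x$ directly yields the required element of $x = \mathsf{pred}(\mathsf{succ}(x))$ (no transport is actually needed). For (3), extracting the witness $k$ from \Cref{lem:cyclic-equation} and recursing on $k$, prepending the $\mathsf{succ}$-directed edge from part (2) at each step, is exactly what one expects, and it matches the clockwise orientation since the edges of $C_n$ point in the successor direction. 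One small thing worth making explicit: the clean way to organise the recursion is to first build, for every $a:[n]$ and $k:\mathbb{N}$, a walk $W(k,a) : \mathsf{Walk}_{C_n}(a, \mathsf{succ}^k(a))$ by induction on $k$, and only at the end transport $W(k,x)$ along the equality $\mathsf{succ}^k(x) = y$; this avoids having the recursion's target type depend on the (truncation-extracted) data at each step. Your remark that walks forming a set disambiguates the coercion is harmless but unnecessary: the relevant paths live in $[n]$, which is already a set, so the transport is canonical for that reason.
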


\begin{lemma} \label{lem:walk-in-UCn}
  Supposing $x,y:\Node_{C_{n}}$, the following claims hold for the cyclic graph
  $U(C_n)$.
  \begin{enumerate}
  \item If $n > 2$, then the type $\Edge_{U(C_{n})}(x,y)$ is a proposition.
  \item There exists an edge of type $\Edge_{C_{n}}(\mathsf{pred}(x), x)$ and of type
     $\Edge_{C_{n}}(x, \mathsf{succ}(x))$.
    \item There exist two quasi-simple\footnote{In a quasi-simple walk
    no repetitions of nodes occur, except its head, see Definition
    4.7 in \cite{homotopywalks}.} walks from $x$ to $y$:
    \begin{enumerate}
          \item There is a walk in the clockwise direction denoted by $\mathsf{cw}_{U(C_{n})}(x,y)$.
          \item If $x \neq y$, then the walk denoted by
           $\mathsf{ccw}_{U(C_{n})}(x,y)$ is the walk in the
           counter-clockwise direction from $x$ to $y$. Otherwise, if
           $x = y$, the walk $\mathsf{ccw}_{U(C_{n})}(x,y)$ is the
           trivial walk $\langle x \rangle$.
    \end{enumerate}
  \end{enumerate}
  \end{lemma}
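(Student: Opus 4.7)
The plan is to unfold the definition $\Edge_{U(C_n)}(x,y) \equiv \Edge_{C_n}(x,y) + \Edge_{C_n}(y,x) \equiv (x = \fpred{y}) + (y = \fpred{x})$, reducing each claim to known facts about $[n]$ together with Lemma~\ref{lem:walk-in-Cn} and Lemma~\ref{lem:cyclic-equation}.

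For (1), each summand is a proposition since $[n]$ is a set. A coproduct of two propositions is itself a proposition precisely when the two cannot be simultaneously inhabited. Suppose both $x = \fpred{y}$ and $y = \fpred{x}$ hold; a short calculation then yields $\mathsf{pred}^2(x) = x$. By Lemma~\ref{lem:cyclic-equation} applied to the $n$-cyclic type $[n]$, the unique $k < n$ with $\mathsf{pred}^k(x) = x$ is $k = 0$, so $\mathsf{pred}^2 = \id$ forces $n \leq 2$, contradicting the hypothesis $n > 2$. Part (2) is immediate: $\refl$ witnesses $\fpred{x} = \fpred{x}$, giving an edge $\fpred{x} \to x$ in $C_n$, and since $\mathsf{pred}$ and $\mathsf{suc}$ are mutual inverses, $\refl$ also witnesses $x = \fpred{\fsuc{x}}$, giving an edge $x \to \fsuc{x}$.

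For (3)(a), we lift the walk $\mathsf{cw}_{C_n}(x,y)$ supplied by Lemma~\ref{lem:walk-in-Cn}(3) into $U(C_n)$ by applying $\inl$ pointwise to each edge; the sequence of visited nodes is unchanged, so quasi-simplicity is preserved. For (3)(b), when $x \neq y$, Lemma~\ref{lem:cyclic-equation} gives a unique $k$ with $0 < k < n$ such that $\mathsf{pred}^k(x) = y$; we build the walk $x \to \fpred{x} \to \fpred{\fpred{x}} \to \cdots \to \mathsf{pred}^k(x) = y$, using at each step the edge $\inr(\refl) : \Edge_{U(C_n)}(v, \fpred{v})$ arising from the edge $\fpred{v} \to v$ of $C_n$. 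The visited nodes $\mathsf{pred}^i(x)$ for $0 \leq i < k$ are pairwise distinct by the uniqueness clause of Lemma~\ref{lem:cyclic-equation}, so the walk is quasi-simple. When $x = y$, we return the trivial walk $\langle x \rangle$ by definition.

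The main obstacle is (1): one must genuinely exclude the doubly-inhabited coproduct case, and this case does occur for $n \leq 2$ (e.g.\ for $C_2$ both $x = \fpred{y}$ and $y = \fpred{x}$ hold for $x \neq y$). The argument therefore cannot be purely formal and must invoke the arithmetic content encapsulated by Lemma~\ref{lem:cyclic-equation}. The remaining clauses are essentially mechanical transfers along $\inl$/$\inr$ from Lemma~\ref{lem:walk-in-Cn}.
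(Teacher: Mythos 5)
The paper states this lemma without proof, so there is no official argument to compare against; judged on its own terms your proof is sound and follows the route the surrounding definitions suggest. A few places could be tightened. In part (1) you write ``$\mathsf{pred}^2 = \id$ forces $n \leq 2$,'' but all you actually derive is $\mathsf{pred}^2(x) = x$ for one particular $x$; what you really want to invoke is the uniqueness clause of \Cref{lem:cyclic-equation} directly: both $k=0$ and $k=2$ would then satisfy $\mathsf{pred}^k(x)=x$ with $k<n$, contradicting uniqueness when $n>2$. The logic is the same, but phrasing it as a statement about the function $\mathsf{pred}^2$ is stronger than what you have shown. In part (3)(a) you assert that quasi-simplicity of $\mathsf{cw}_{C_n}(x,y)$ carries over under $\mathsf{inl}$, which is fine, but note that \Cref{lem:walk-in-Cn}(3) as stated only gives the \emph{existence} of the clockwise walk, not that it is quasi-simple; you are implicitly using that the walk $x \to \mathsf{succ}(x) \to \cdots \to y$ in $C_n$ visits at most $n$ distinct nodes, which follows from the same uniqueness argument as in (3)(b) and should be flagged. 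Finally, in (3)(b) the visited nodes are $\mathsf{pred}^i(x)$ for $0 \leq i \leq k$ (not $i<k$); the argument for pairwise distinctness — if $\mathsf{pred}^i(x)=\mathsf{pred}^j(x)$ with $i<j\leq k<n$ then $\mathsf{pred}^{j-i}(x)=x$ contradicts uniqueness — still applies to the inclusive range. None of these affect the substance.
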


\begin{lemma}\label{lem:cover-UCn}
   Supposing $x,y:\Node_{C_{n}}$, the following claims hold for the cyclic graph
  $U(C_n)$.
    \begin{enumerate}
      \item If $x \neq y$, then $n$ is the length of each walk in Item (3) in \Cref{lem:walk-in-UCn}.
      \item Otherwise, the walk
      $\mathsf{cw}_{U(C_{n})}(x,x)$ is of a length $n$,
      and  the walk $\mathsf{ccw}_{U(C_{n})}(x,x)$ is of length $0$.
    \end{enumerate}
\end{lemma}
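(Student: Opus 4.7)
The plan is to proceed by case analysis on whether $x = y$, exploiting the explicit constructions from \Cref{lem:walk-in-UCn} together with the counting information supplied by \Cref{lem:cyclic-equation}. Throughout, I would rely on the fact that $\Node_{C_n}$ carries a cyclic structure with $\mathsf{pred}$ playing the role of the generator of the cyclic subgroup, so that iterated application of $\mathsf{pred}$ determines how many steps separate any two nodes.

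For Item (2), where $x = y$, I would unfold the definitions provided by Item (3) of \Cref{lem:walk-in-UCn}. The walk $\mathsf{ccw}_{U(C_n)}(x,x)$ is, by the stipulation in Item 3(b) of that lemma, the trivial walk $\langle x \rangle$, whose length is $0$ by the inductive definition in \Cref{def:walk}. The walk $\mathsf{cw}_{U(C_n)}(x,x)$ is the closed walk obtained by iterating the clockwise edge from Item (2) of \Cref{lem:walk-in-UCn} starting at $x$; the number of iterations needed to return to $x$ is exactly $n$ by \Cref{lem:cyclic-equation} applied to the cyclic structure of $\Node_{C_n}$ with $f = \mathsf{pred}$. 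A direct calculation on the inductive representation of walks from \Cref{def:walk} then yields length $n$.

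For Item (1), where $x \neq y$, I would first apply \Cref{lem:cyclic-equation} to obtain the unique $k < n$ with $\mathsf{pred}^k(x) = y$, noting that $k > 0$ since $x \neq y$. The walks $\mathsf{cw}_{U(C_n)}(x,y)$ and $\mathsf{ccw}_{U(C_n)}(x,y)$ are built in \Cref{lem:walk-in-UCn} by iterated composition of the one-step clockwise and counter-clockwise edges guaranteed by Item (2) of that lemma. I would proceed by induction on the iteration structure used in the construction of each walk, reading off the edge count directly from its inductive decomposition and showing that this count equals $n$ in each case, as claimed.

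The main obstacle is precisely this first item: checking that each of the two walks individually has length $n$ forces a careful re-examination of the exact inductive construction of $\mathsf{cw}_{U(C_n)}(x,y)$ and $\mathsf{ccw}_{U(C_n)}(x,y)$ in \Cref{lem:walk-in-UCn}, together with an appeal to the uniqueness part of \Cref{lem:cyclic-equation} to control the iteration count. Once the concrete shape of those walks is pinned down, the length computation reduces to counting the $\odot$-constructors in the inductive term, and the rest is bookkeeping.
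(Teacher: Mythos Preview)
The paper states this lemma without proof, so there is no reference argument to compare your proposal against. Your treatment of Item~(2) is fine and is exactly what one would expect: unfold the definitions from \Cref{lem:walk-in-UCn} and count constructors.

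The problem is Item~(1). You read the statement as asserting that \emph{each} of $\mathsf{cw}_{U(C_n)}(x,y)$ and $\mathsf{ccw}_{U(C_n)}(x,y)$ individually has length $n$, and you propose to verify this by unfolding the inductive constructions and counting edges. But that claim is false: for $x\neq y$ the clockwise walk has some length $k$ with $0<k<n$ (the $k$ produced by \Cref{lem:cyclic-equation}) and the counter-clockwise walk has length $n-k$; neither equals $n$. So the step where you ``show that this count equals $n$ in each case'' cannot succeed, no matter how carefully you track the $\odot$-constructors.

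The surrounding discussion in the paper --- that $U(C_n)$ is ``completely covered'' by the two walks together --- makes clear that the intended content of Item~(1) is that the \emph{sum} of the two lengths is $n$. Under that reading your overall strategy is sound: apply \Cref{lem:cyclic-equation} to get the unique $k$ with $\mathsf{pred}^k(x)=y$, observe that the clockwise walk is built from $k$ one-step edges and the counter-clockwise walk from $n-k$ one-step edges (via Item~(2) of \Cref{lem:walk-in-UCn}), and add. You should flag the literal wording of the lemma as a misstatement and prove the corrected version instead.
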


\begin{figure}
\centering
\includegraphics[width=0.3\textwidth]{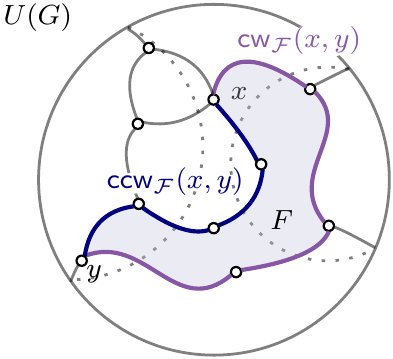}
\caption{It is shown a face $\mathcal{F}$ given by $\langle A ,
f\rangle$ for the graph embedding $U(G)$ given in
\Cref{fig:drawing-graph}. There are two quasi-simple walks in the
underlying cyclic graph $A$ between two different nodes $x$ and $y$.
Such walks are the clockwise and counter-clockwise
closed walks in $U(G)$, denoted by $\mathsf{cw}_{\mathcal{F}}(x,y)$ and
$\mathsf{ccw}_{\mathcal{F}}(x,y)$, respectively.}
\label{fig:walks-in-a-face}
\end{figure}

Additionally, one can prove that \(n\) is the maximum length of a
quasi-simple walk in the cyclic graph \(U(C_n)\), as in Lemma 4.13 in
\citep{homotopywalks}. Lastly, as illustrated in
\Cref{fig:walks-in-a-face} for the face \(\mathcal{F}\), the graph
\(U(C_n)\) is completely covered by the walks
\(\mathsf{ccw}_{U(C_{n})}(x,y)\) and \(\mathsf{cw}_{U(C_{n})}(x,y)\).

\subsection{Examples of Graph Maps}\label{sec:examples-of-combinatorial-maps}

In this subsection, we examine the combinatorial maps for the bouquet
graph \(B_2\) and for the family of cycle graphs \(C_n\) from
\Cref{sec:families-of-graphs}. We also give a map for the graph
\(K_{3,3}\) mentioned in \Cref{sec:graph-colourings}.

Let us introduce some notation for readability. An edge
\(a : \Edge_{G}(x,y)\) induces the edge \(\eout{a}\) in \(U(G)\) from
\(x\) to \(y\). Similarly, the edge \(a : \Edge_{G}(y,x)\) induces the
edge \(\ein{a}\) in \(U(G)\) from \(x\) to \(y\). Lastly, a face of a
map is given by the corresponding cyclic graph in \Cref{def:face}.

\begin{figure}
  \centering
    \includegraphics[width=0.7\textwidth]{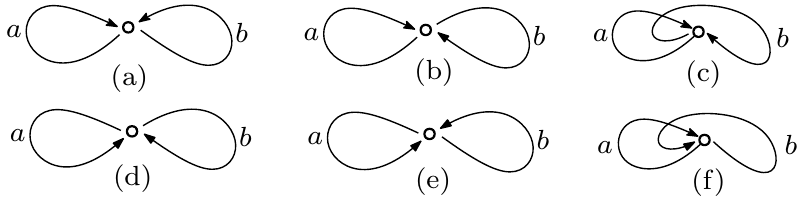}
   \caption{The six possible maps of the bouquet $B_2$.}
  \label{fig:map-bouquet}
\end{figure}
\begin{example}\label{ex:bouquet-maps} In \Cref{fig:map-bouquet}, we
  show the six combinatorial maps for the bouquet graph $B_2$. It is
  important to note that combinatorial maps are equal up to
  isomorphism. The equality of rotation systems at each node is
  given by \Cref{lem:cyclic-identity-type}. Thus, the bouquet $B_2$
  has only three distinct maps called $M_a$, $M_b$, and $M_c$. The
  pairs of equal maps are respectively $(a,d)$, $(b,e)$, and $(c,f)$
  in \Cref{fig:map-bouquet}.
\begin{align*}
  &M_{a}:\equiv (0 \mapsto (\eout{a}\ein{a}\ein{b}\eout{b})).\\
  &M_{b}:\equiv (0 \mapsto (\eout{a}\ein{a}\eout{b}\ein{b})).\\
  &M_{c}:\equiv (0 \mapsto (\eout{a}\eout{b}\ein{a}\ein{b})).
\end{align*}
The surface arising from the maps $M_a$ and $M_b$ is the
two-dimensional plane. For the map $M_c$, the surface is the
topological torus. We recall that the torus is a surface homeomorphic
to the cartesian product of a circle with itself, as the
embeddings $c$ and $f$ in \Cref{fig:map-bouquet}.
\end{example}

\noindent 

\begin{example}\label{ex:map-of-Cn} For cycle graphs $C_n$, only one
combinatorial map exists. To show this, we consider the equivalence
given by the function $f_{x}$ for all $x : C_n$ in
\Cref{func:example-map-4}. Then, any star in $C_n$ consists of two
different edges.

\begin{equation}\label{func:example-map-4}
\begin{split}
    &f_{x} : \mathsf{Star}_{C_n}(x) \to [2]\\
    &f_{x} (y, \mathsf{inl}(p)) :\equiv 0,\\
    &f_{x} (y, \mathsf{inr}(p)) :\equiv 1.
\end{split}
\end{equation}

If we consider the cyclic structures of the two-point type, $c_1
:\equiv \langle [2], \mathsf{pred}, 2\rangle$ and $c_2 :\equiv \langle
[2], \mathsf{succ}, 2\rangle$, then they induce precisely the maps of
$C_n$. In other words, one can obtain a map $\mathcal{M}$ using $c_1$
by \Cref{eq:Star-is-2-in-Cn} and $(\mathsf{pred}, 2, |(\mathsf{ideqv},
\mathsf{refl}_{\mathsf{pred}}) | ) : \mathsf{Cyclic}([2])$. By
\Cref{lem:cyclic-identity-type} implies that the map induced by $c_2$
and the map $\mathcal{M}$ are equal, by function extensionality.

\begin{equation}\label[equiv]{eq:Star-is-2-in-Cn}
  \begin{split}
  \mathsf{Map}(C_n) &\equiv \prod_{(x : [n])} \mathsf{Cyclic}(\mathsf{Star}_{C_{n}}(x))\\
  &\simeq \prod_{(x :[n])} \mathsf{Cyclic}([2]).
  \end{split}
\end{equation}
\end{example}

\begin{figure}
  \centering
  \includegraphics[width=.24\textwidth]{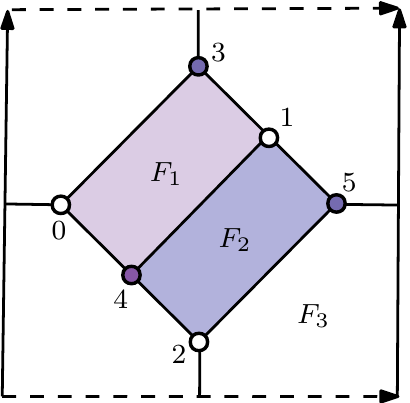}
  \caption{A map for $K_{3,3}$ in the surface of the torus.}
  \label{fig:map-k33}
\end{figure}

\begin{example}\label{ex:k33-map}

We below define for $K_{3,3}$ a map $\mathcal{M}$ and its
faces $F_1, F_2$, and $F_3$. The corresponding surface to
$\mathcal{M}$ is that of the torus, as illustrated by its polygonal
schema in \Cref{fig:map-k33}.

\begin{fleqn}[\parindent]
\begin{equation*}
\begin{split}
  \mathcal{M} :\equiv (
    &0 \mapsto ((03)\,(04)\,(05)) , 1 \mapsto ((13)\,(15)\,(14)), \\
    &2 \mapsto ((24)\,(25)\,(23)) , 3 \mapsto ((32)\,(31)\,(30)), \\
    &4 \mapsto ((40)\,(41)\,(42)), 5 \mapsto ((51)\,(50)\,(52))).\\
  F_{1} :\equiv (& (30)\, (04)\, (41)\, (13)).\\
  F_{2} :\equiv (& (14)\, (42)\, (25)\, (51)).\\
  F_{3} :\equiv (& (03)\, (32)\, (24)\, (40)\,(05)\, (52)\, (23)\, (31)\,(15)\,(50)).\\
\end{split}
\end{equation*}
\end{fleqn}
\end{example}

\section{Planar Embeddings} \label{sec:planar-embeddings}

In this section, we examine the class of graphs with an embedding in
the two-dimensional plane. Such embeddings are called \emph{planar
embeddings} or \emph{planar maps}. A graph is \emph{planar} if it has
a planar embedding and the graph embedded is called a \emph{plane}
graph. To discuss the notion of planar embeddings, we take inspiration
from topological graph theory \citep[§3]{gross}. Then one can work
with combinatorial maps that represent graph embeddings into a surface
---up to isotopy. In the following, we focus on describing embeddings
of graphs in the sphere called spherical maps. These maps are used
later to establish the type of planar embeddings for a given graph.

\subsection{Spherical Maps} \label{sec:spherical-maps}

Any graph embedding gives rise to an implicit surface. For planar
embeddings, this surface is a space homeomorphic to the sphere. In
particular, any embedding in the sphere induces an embedding in the
plane. To see this, for a graph embedded in the sphere, one can
puncture the sphere at some distinguished point, and subsequently,
apply the \emph{stereographic projection} to it.

The sphere in topology has two main invariants: path-connectedness and
simply-connectedness. The former states that a path connects any pair
of points in the sphere, and the latter states that any two paths with
the same endpoints in the sphere can be deformed into one another.

If we now consider a walk as a path in the corresponding space induced
by the map, then the path-connectedness property coincides with being
connected for the graph embedded. However, if we want to address
simply-connectedness for the surface induced by a graph-embedding,
then we need to have an equivalent notion to saying how a pair of
walks can be deformed into one the other. One proposal of such a
notion is \emph{homotopy for walks} in directed multigraphs
\citep{homotopywalks}.

\subsubsection{Homotopy for Walks}\label{sec:homotopy-for-walks}

Given a map \(\mathcal{M}\) for a graph \(G\), we consider the
relation \((\sim_{M})\) on the set of walks defined in
\citep{homotopywalks}. This relation is a congruence relation on the
category of objects induced by the endofunctor \(W\). The relation
\((\sim_{M})\) states in which way one can transform one walk into
another considering adjacent faces. Then the relation strictly depends
on the map \(\mathcal{M}\).

\begin{definition}\label{def:congruence-relation} Let $w₁,w₂$ be two
  walks from $x$ to $y$ in $U(G)$. The expression
  $\HomWalk{\mathcal{M}}{w₁}{w₂}$ denotes that one can \emph{deform}
  $w_1$ into $w_2$ along the faces of $\mathcal{M}$. We acknowledge
  the evidence of this deformation as a walk-homotopy between $w_1$
  and $w_2$, of type $\HomWalk{\mathcal{M}}{w₁}{w₂}$. The relation
  $(\sim_{\mathcal{M}})$ has four constructors as follows. The first
  three constructors are functions to indicate that homotopy for walks
  is an equivalence relation, i.e. $\mathsf{hrefl}$,
  $\mathsf{hsym}$, and $\mathsf{htrans}$. The fourth constructor,
  illustrated in \Cref{fig:constructors-for-homotopic-walks}, is the
  $\mathsf{hcollapse}$ function that establishes the homotopy
  \begin{equation*}\label{eq:collapse} \HomWalk{\mathcal{M}}{(w₁ \cdot
  \mathsf{ccw}_{\mathcal{F}}(a,b) \cdot w₂)} {(w₁ \cdot
  \mathsf{cw}_{\mathcal{F}}(a,b) \cdot w₂)}, \end{equation*} supposing
  one has the following,
  \begin{itemize} \item[(i)] a face
  $\mathcal{F}$ given by $\langle A, f \rangle$ of the map
  $\mathcal{M}$, 
  \item[(ii)] a walk $w₁$ of type
  $\mathsf{W}_{U(G)}(x,f(a))$ for $x : \Node_{G}$ with one node
  $a:\Node_{A}$, and 
  \item[(iii)] a walk $w₂$ of type
  $\mathsf{W}_{U(G)}(f(b),y)$ for $b :\Node_{A}$ and $y : \Node_{G}$.
  \end{itemize}
  \end{definition}

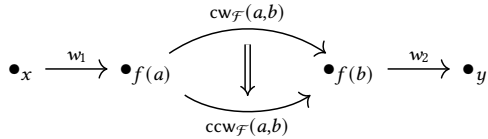
\begin{figure}[!ht]
  \centering
\[\begin{tikzcd}[column sep=normal]
{\bullet_{x}} & {\bullet_{f(a)}} && {\bullet_{f(b)}} & {\bullet_{y}}
\arrow["{w_1}", from=1-1, to=1-2]
\arrow[""{name=0, anchor=center, inner sep=0}, "{\mathsf{ccw}_\mathcal{F}(a,b)}"', curve={height=18pt}, from=1-2, to=1-4]
\arrow["{w_2}", from=1-4, to=1-5]
\arrow[""{name=1, anchor=center, inner sep=0}, "{\mathsf{cw}_\mathcal{F}(a,b)}", curve={height=-18pt}, from=1-2, to=1-4]
\arrow[shorten <=5pt, shorten >=5pt, Rightarrow, from=1, to=0]
\end{tikzcd}\] \caption{Given a face $\mathcal{F}$ of a map
$\mathcal{M}$, we illustrate here $\mathsf{hcollapse}$, one of the
four constructors of the homotopy relation on walks in
\Cref{def:congruence-relation}. The arrow $(\Downarrow)$ represents a
homotopy of walks.}
\label{fig:constructors-for-homotopic-walks}
\end{figure}

One consequence of \Cref{def:congruence-relation} is that, in each
face \(\mathcal{F}\), there is a walk-homotopy between
\(\mathsf{ccw}_{\mathcal{F}}(x,y)\) and
\(\mathsf{cw}_{\mathcal{F}}(x,y)\) using the constructor
\(\mathsf{hcollapse}\).

\subsubsection{The Type of Spherical Maps}\label{sec:type-of-spherical-maps}

As a property of maps, we can now state under which conditions the
surface arising from a map is simply-connected. We call such maps as
\emph{spherical maps}.

\begin{definition}\label{def:spherical-embedding} A map $\mathcal{M}$
  of a graph $G$ is \emph{spherical}, of type
  $\mathsf{Spherical}(\mathcal{M})$, if any pair of walks sharing the
  same endpoints are merely walk-homotopic.
  \begin{equation*}
  \mathsf{Spherical}(\mathcal{M}) :≡ \prod_{(x,y : \Node_{G})}
  \prod_{(w₁, w₂ : \mathsf{Walk}_{U(G)}(x,y))}
∥ \HomWalk{\mathcal{M}}{w₁}{w₂} ∥.
  \end{equation*}
  \end{definition}

\begin{lemma}\label{lem:spherical-is-proposition}
  Being spherical for a map is a proposition.
\end{lemma}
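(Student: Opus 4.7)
The plan is to use the fact that mere propositions are closed under dependent products, together with the defining property of propositional truncation. Concretely, I would observe that
\[
\mathsf{Spherical}(\mathcal{M}) \;\equiv\; \prod_{x,y:\Node_G}\;\prod_{w_1,w_2 : \Walk{U(G)}{(x,y)}} \big\| \HomWalk{\mathcal{M}}{w_1}{w_2} \big\|,
\]
so it is a four-fold dependent product of the type $\| \HomWalk{\mathcal{M}}{w_1}{w_2} \|$.

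The key lemma is that if $B : A \to \UU$ is a family with $\isProp{B(a)}$ for every $a : A$, then $\isProp{\prod_{a:A} B(a)}$; this is a standard consequence of function extensionality (HoTT book, §3.3). Apply this lemma four times (once per binder, with the family in the inner position being already a proposition) to reduce the goal to showing that the innermost type $\| \HomWalk{\mathcal{M}}{w_1}{w_2} \|$ is a proposition for fixed $x$, $y$, $w_1$, $w_2$.

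That remaining obligation is immediate from \Cref{def:propositional-truncation}: the $(-1)$-truncation $\| - \|$ of any type is by definition a proposition. Therefore, unfolding back through the four $\Pi$-closures, $\mathsf{Spherical}(\mathcal{M})$ is itself a proposition.

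There is no real obstacle here; the only subtle point worth flagging is that the use of propositional truncation in the codomain is what makes the proof go through uniformly, without needing any analysis of the relation $(\sim_{\mathcal{M}})$ itself. In particular, we never have to compare distinct walk-homotopies, which would be delicate given the four constructors $\mathsf{hrefl}$, $\mathsf{hsym}$, $\mathsf{htrans}$, and $\mathsf{hcollapse}$ of \Cref{def:congruence-relation}; the truncation discards all such higher data.
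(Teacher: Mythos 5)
Your proof is correct and is exactly the standard argument one expects here: the codomain is a propositional truncation, hence a proposition, and propositions are closed under (iterated) dependent products by function extensionality. The paper states \Cref{lem:spherical-is-proposition} without giving an explicit proof, so there is nothing to compare against, but your argument is the natural one and needs no modification.
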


\begin{lemma}
The collection of all spherical maps for a (finite) graph is a
(finite) set.
\end{lemma}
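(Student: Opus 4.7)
The plan is to recognise the type of spherical maps as a $\Sigma$-type and then apply the closure properties of (finite) sets. More precisely, the type of spherical maps of a graph $G$ is
\[
\mathsf{SphericalMap}(G) \;:\equiv\; \sum_{\mathcal{M} : \mathsf{Map}(G)} \mathsf{Spherical}(\mathcal{M}),
\]
and both factors are objects whose homotopy level we already control.

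For the set part, I would first invoke the previously established lemma that $\mathsf{Map}(G)$ is a set. By \Cref{lem:spherical-is-proposition}, for every $\mathcal{M}:\mathsf{Map}(G)$ the type $\mathsf{Spherical}(\mathcal{M})$ is a proposition, and in particular a set. Then I would appeal to the standard closure property that a $\Sigma$-type whose base is a set and whose fibres are sets is itself a set (Theorem 7.1.8 of the HoTT Book, or the direct level-wise induction from \Cref{def:n-types}). This immediately yields $\isSet{\mathsf{SphericalMap}(G)}$.

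For the finite part, assume $G$ is finite. Then $\mathsf{Map}(G)$ is a finite set by the lemma preceding \Cref{sec:type-of-faces}. To conclude that $\mathsf{SphericalMap}(G)$ is finite, I would use the fact that a $\Sigma$-type over a finite base is finite as soon as each fibre is finite. Each fibre here is a proposition, so it suffices to show that $\mathsf{Spherical}(\mathcal{M})$ is decidable for every $\mathcal{M}$: in that case the fibre is either $\mathbb{0}$ or $\mathbb{1}$, hence finite, and the total space becomes a sub-type of a finite type cut out by a decidable predicate.

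The main obstacle is precisely this decidability. On a finite graph there are only finitely many maps, but the universal quantification over walks in \Cref{def:spherical-embedding} ranges over an a priori infinite set. The cleanest way to settle this is to reduce being spherical to a combinatorial criterion that can be checked on finitely much data, e.g. an Euler-characteristic count $V - E + F = 2$ using the finitely many faces of $\mathcal{M}$; alternatively, one can exhibit a decision procedure by normalising walks modulo $(\sim_{\mathcal{M}})$ using the $\mathsf{hcollapse}$ constructor of \Cref{def:congruence-relation}. Either reduction, combined with the decidability of equality on the finite node- and edge-sets of $G$, makes $\mathsf{Spherical}(\mathcal{M})$ decidable, and the finite case then follows from the $\Sigma$-closure of finite types exactly as in the set case.
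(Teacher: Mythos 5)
Your proof of the set half is correct and follows the only natural route: $\mathsf{Map}(G)$ is a set by \Cref{lem:map-is-set}, $\mathsf{Spherical}(\mathcal{M})$ is a proposition by \Cref{lem:spherical-is-proposition}, and a $\Sigma$-type over a set base with propositional (hence set) fibres is a set. The paper states this lemma without proof, so there is no official argument to compare against for the finite half, but you have correctly isolated the one genuine subtlety there: with $\mathsf{isFinite}$ as in \Cref{def:finite-type}, a subtype of a finite set cut out by a proposition is constructively finite precisely when that proposition is decidable at every point, so the naive ``subset of a finite set'' step does not close without establishing decidability of $\mathsf{Spherical}(\mathcal{M})$.

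That decidability is where your proposal stops being a proof and becomes a plan. Of your two suggested routes, the Euler-characteristic criterion is not available in this development: the authors themselves remark in Section 5.3 that they do not know how to extract the cardinality of $\Face{G,\mathcal{M}}$, which is exactly what that criterion would require. The route via reduction to quasi-simple walks (Lemma 5.8 of \cite{homotopywalks}, already invoked in \Cref{lem:Cn-is-planar}) is the one coherent with the rest of the paper, but as written you assert rather than argue that it ``makes $\mathsf{Spherical}(\mathcal{M})$ decidable.'' To close the gap you would still need (i) that the reduction of the universally quantified condition in \Cref{def:spherical-embedding} to a finite set of quasi-simple walks holds for an arbitrary finite graph, not only $C_n$, and (ii) that walk-homotopy between two fixed walks, $\| \HomWalk{\mathcal{M}}{w_1}{w_2} \|$, is itself a decidable proposition. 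Until both are supplied, the finite half of the lemma is not yet proved, and you should say so explicitly rather than fold it into ``either reduction\ldots makes it decidable.''
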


\subsection{The Type of Planar Maps}\label{sec:type-of-planar-maps}

\begin{definition}\label{def:planar-graph} A planar map
$\mathcal{M}$ of a connected and locally finite graph $G$ is of type
$\mathsf{Planar}(G)$.
\begin{equation*}
\mathsf{Planar}(G) :≡ \sum_{(\mathcal{M} : \Map{G})} \mathsf{Spherical}(\mathcal{M}) ×
\underbrace{\Face{G,\mathcal{M}}}_{\text{outer face}}.
\end{equation*}
\end{definition}

\begin{theorem}\label{thm:being-planar-is-set}
  The type of all planar maps of a graph forms a set.
\end{theorem}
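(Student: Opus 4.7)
The plan is to unfold the definition of $\mathsf{Planar}(G)$ and use the standard closure properties of being a set. Concretely, since $\mathsf{Planar}(G)$ is defined as the $\Sigma$-type
\[
  \sum_{(\mathcal{M}:\Map{G})} \mathsf{Spherical}(\mathcal{M}) \times \Face{G,\mathcal{M}},
\]
it suffices, by the standard fact that a $\Sigma$-type over a set with set-valued fibres is a set, to exhibit (i) that the base $\Map{G}$ is a set, and (ii) that for each $\mathcal{M}:\Map{G}$ the fibre $\mathsf{Spherical}(\mathcal{M}) \times \Face{G,\mathcal{M}}$ is a set.

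For (i), I would invoke \Cref{lem:map-is-set} directly: the type of maps for a graph forms a set. For (ii), the fibre is a cartesian product of two types, so it suffices to show each factor is a set and then use the fact that products of sets are sets. The first factor, $\mathsf{Spherical}(\mathcal{M})$, is a proposition by \Cref{lem:spherical-is-proposition}, and every proposition is in particular a set (propositions correspond to $(-1)$-types, which are automatically $0$-types by \Cref{def:n-types}). The second factor, $\Face{G,\mathcal{M}}$, is a set by \Cref{lem:face-is-set}.

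Combining these observations yields that $\mathsf{Planar}(G)$ is a set, as required. There is no real obstacle in this argument: all the substantive work has already been done in the earlier lemmas, in particular \Cref{lem:face-is-set}, which was the technically involved step (requiring \Cref{lem:edgeinj-implies-prop} and the analysis of equalities of cyclic graphs). The present theorem is essentially a bookkeeping statement that packages those previous results together through the closure of the class of sets under $\Sigma$-types and products.
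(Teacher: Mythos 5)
Your proposal is correct and follows essentially the same route as the paper's proof: decompose $\mathsf{Planar}(G)$ as a $\Sigma$-type, cite \Cref{lem:map-is-set} for the base, \Cref{lem:spherical-is-proposition} and \Cref{lem:face-is-set} for the fibre, and close under $\Sigma$ and products. You merely spell out the cumulativity of homotopy levels (propositions are sets) a bit more explicitly than the paper does.
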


\begin{proof}
The type of planar embeddings in \Cref{def:planar-graph} is not a
proposition. It encompasses two sets: the set of combinatorial maps,
see \Cref{lem:map-is-set}, and the set of faces, see
\Cref{lem:face-is-set}. Since being spherical for a map is a mere
proposition, one concludes that the $\Sigma$-type collecting all
planar maps of a graph forms a set.
\end{proof}

\begin{example}\label{lem:Cn-is-planar} Let us prove that there exists
a planar map for $C_n$ with $n > 0$. Consequently, there exists a
planar map for every cyclic graph. Beside their simple structure,
cyclic graphs are building blocks in a few relevant constructions in
formal systems related to the study of planarity of graphs, as planar
triangulations using $C_3$, or a characterisation of all $2$-connected
planar graphs.

The graph $C_n$ is connected and locally finite, which mostly follows
from \Cref{lem:walk-in-Cn}. We must then show that $C_n$ has at least
one spherical embedding and one outer face. As described in
\Cref{ex:map-of-Cn}, there is only one such map that we denote here by
$\mathcal{M}$. This map gives rise to two faces, $F_1$ and $F_2$, the
inner face and the outer face, respectively. As the cycle graph $C_n$
is a finite graph, it is only required to consider the finite set of
quasi-simple walks to show that $\mathcal{M}$ is spherical, see Lemma
5.8 in \cite{homotopywalks}. The set of such walks is precisely given
in \Cref{lem:walk-in-UCn}. We must now show that any pair of such
walks are homotopic, from where one can conclude that the map
$\mathcal{M}$ is spherical, and consequently planar with outer face
$F_2$.

  \begin{enumerate}
  \item If $n=1$, the only walk to consider is the trivial walk, which is
    homotopic to itself.
  \item If $n>1$ and $x \neq y$, then we only need to consider the
  quasi-simple walks $\mathsf{ccw}_{U(C_{n})}(x,y)$ and
  $\mathsf{cw}_{U(C_{n})}(x,y)$. Such walks are homotopic by
  $\mathsf{hcollapse}(F_1, x,y,x,y,\langle x \rangle, \langle y
  \rangle)$.
  \item Otherwise, if $n>1$ and $x = y$, the only walks to consider
  are the trivial walk at $x$ and $\mathsf{cw}_{U(C_{n})}(x,x)$.
  Remember that the $\mathsf{ccw}_{U(C_{n})}(x,y)$ is by definition
  $\langle x \rangle$. Similarly, as in the previous case, these two
  walks are homotopic by the constructor $\mathsf{hcollapse}$.
  \end{enumerate}
\end{example}

\subsection{Planar Extensions}\label{sec:planar-extensions}

In this subsection, we describe how to construct a planar map from
another planar map. The characterisation of \(2\)-connected graphs
\citep{Whitney1932}, ear decompositions \citep[§5.3]{BangJensen2009},
reliable networks, and planar graph extensions for undirected graphs
\citep[§5.2,7.3]{Gross2018} are related constructions. In the current
section, \(G\) is a locally connected finite graph with decidable
equality on the set of nodes. For brevity, the variables \(p\) and
\(p_i\) will represent finite path graphs of a positive length.

\subsubsection{Path Additions}\label{sec:path-additions}

An \emph{internal node} of a path is any node that is not an endpoint
of the path. A \emph{simple path addition} to \(G\) is the graph
formed by adding \(p\) to \(G\) between two existing nodes of \(G\),
such that the edges and internal nodes of \(p\) are not in \(G\). A
\emph{simple cyclic
addition} is the addition of a cyclic graph to \(G\) with exactly one
node in common with \(G\). A \emph{non-simple} path addition is the
path addition of the graph \(U(q)\) to \(G\) for a path graph \(q\).
Similarly, one can define \emph{non-simple} cyclic additions. The
simple and non-simple addition of \(p\) to \(G\) are denoted by
\(G \bullet p\) and \(G \bullet \overline{p}\), respectively, and are
referred as \emph{graph extensions}. The operator \((\bullet)\) is
regarded as a left-associative operator.

Hereinafter, the path \(p\) in the addition \(G\bullet p\) proceeds
from \(u\) to \(v\) and its length is \(n+1\). The construction of
\(G\bullet p\) is equivalent to adding a path \(p'\) with two
distinguished edges \(\hat{u}\) and \(\hat{v}\), as illustrated in
\Cref{fig:planar-additions}b. If \(n=0\), the edges \(\hat{u}\) and
\(\hat{v}\) are equal. Or else, we have one edge from \(u: \Node_{G}\)
to \(0 : \Node_{p'}\), and another edge from \(n-1 : \Node_{p'}\) to
\(v\). The graph \(G\bullet p\) is formed by the set of nodes
\(\Node_{G} + \Node_{p'}\), and the corresponding edges, i.e.~the set
of edges in \(G\), \(\hat{u}\), \(\hat{v}\), and the set of edges in
\(p'\).

\begin{figure}[!ht]
  \centering
    \begin{subfigure}[b]{0.36\linewidth}
      \centering\includegraphics{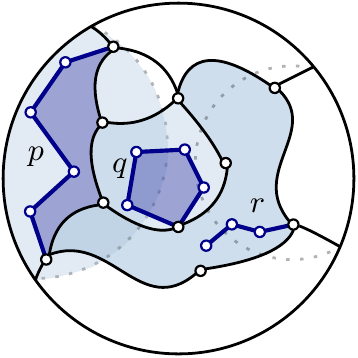}
      \caption{The embedded graph $U(G \bullet p \bullet q \bullet r)$.}
    \end{subfigure}%
    \begin{subfigure}[b]{0.36\linewidth}
      \centering
      \includegraphics{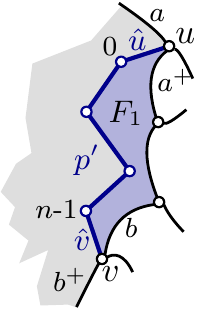}
      \caption{Addition of $p$ to $G$.}
    \end{subfigure}%
    \caption{The figure (a) shows the planar map for $G$ given in
    \Cref{fig:drawing-graph} (b) with three different graph
    extensions: a path addition of $p$, a cyclic addition of $q$, and
    a spike addition of $r$. The additions of $p$ and $q$
    replace/divide the faces $F_2$ and $F_3$, we have in
    \Cref{fig:drawing-graph}, by two new faces for each addition. The
    addition of $r$ replaces $F_4$ with a face of a greater degree.
    The figure (b) shows the path addition discussed in the proof of
    \Cref{lem:face-division}.}
    \label{fig:planar-additions}
\end{figure}

\begin{lemma}\label{lem:path addition-preserves-connectedness} If $G$
is connected, then $G\bullet p$ and $G\bullet \overline{p}$ are
connected.
\end{lemma}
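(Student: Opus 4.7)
The plan is to unfold the definition of connectedness from \Cref{def:connected-graph}, which requires producing a merely existing walk in $U(G \bullet p)$ between any two nodes. I would then proceed by case analysis on the position of the endpoints in the partitioned node set $\Node_G + \Node_{p'}$, exhibiting an actual walk in each case (from which the mere-existence statement follows).

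First, I would establish two auxiliary facts. (i) Since $G$ is connected, the inclusion of $G$-walks into $U(G \bullet p)$-walks via $\inl$ gives walks between any pair of $G$-nodes; this is immediate because every edge of $G$ remains an edge of $G \bullet p$, and similarly for $U(G)$ edges in $U(G \bullet p)$. (ii) The internal path $p'$ is itself connected when viewed inside $U(p')$: one can walk from any node $i : \Node_{p'}$ to any other node $j$ by traversing the consecutive edges of $p'$ (possibly against their orientation, which is permitted in $U(\cdot)$). This yields, after inclusion via $\inr$, walks in $U(G\bullet p)$ between any pair of $p'$-nodes.

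Next, for arbitrary $x, y : \Node_{G \bullet p}$, I would split into four cases. If $x = \inl x_G$ and $y = \inl y_G$, apply (i). If $x = \inr x_p$ and $y = \inr y_p$, apply (ii). For the mixed case, say $x = \inl x_G$ and $y = \inr y_p$, I would concatenate: a walk from $x_G$ to $u$ in $U(G)$ (by connectedness of $G$), followed by the bridge edge $\hat{u}$ from $u$ to $0 : \Node_{p'}$, followed by a walk from $0$ to $y_p$ along $U(p')$ as in (ii). The symmetric mixed case uses $\hat{v}$ analogously. The edge case $n = 0$ (where $p'$ is empty and $\hat{u} = \hat{v}$ is a single edge directly from $u$ to $v$) is even simpler: only the all-$G$ case arises, since $\Node_{p'}$ is empty. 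For $G \bullet \overline{p}$, the identical construction applies since $U(p)$ has the same node set as $p$ and additional edges only improve connectivity.

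The main obstacle is purely bookkeeping: carefully tracking the $\inl/\inr$ tags when constructing composite walks, and ensuring the bridge edges $\hat{u}, \hat{v}$ are correctly applied as $\eout{(\cdot)}$ edges in $U(G \bullet p)$. No deep argument is needed, since the closure of walks under concatenation together with the explicit bridge edges handles every case. Finally, applying propositional truncation to the constructed walks gives the required mere existence, completing both parts of the claim.
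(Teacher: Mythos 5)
The paper does not print a proof of this lemma, so you are being judged against what the statement actually requires. Your high-level plan---case analysis on the coproduct $\Node_G + \Node_{p'}$, reuse connectedness of $G$, and splice through the bridge edges $\hat{u},\hat{v}$---is the right shape. But there is a genuine gap: you have changed the statement you are proving.

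\Cref{def:connected-graph} defines $\Connected(G)$ as the mere existence of a \emph{directed} walk in $G$ (an element of $\Edge_{W(G)}(x,y)$) between every ordered pair of nodes; the paper even labels this ``(strongly) connected.'' The conclusion you must reach is therefore $\Connected(G\bullet p)$, i.e.\ directed walks in $G\bullet p$ itself. Your proposal instead constructs walks in $U(G\bullet p)$, and your auxiliary fact~(ii) explicitly leans on this: you say one may traverse the edges of $p'$ ``possibly against their orientation, which is permitted in $U(\cdot)$.'' That is exactly what is \emph{not} available in $G\bullet p$, whose new edges $\hat{u}:u\to 0$, $e_i:i\to i{+}1$, $\hat{v}:n{-}1\to v$ all point one way. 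In particular, the internal path graph $p'$ is \emph{not} strongly connected on its own, so your case ``both endpoints in $\Node_{p'}$'' breaks when you need to reach a smaller index from a larger one.

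The lemma is still true, and your argument is salvageable with one extra idea: to go ``backwards'' inside $p'$ (from $i$ to $j$ with $j<i$), leave the path at $v$ via $i\to\cdots\to n{-}1\xrightarrow{\hat v}v$, then use strong connectedness of $G$ to walk from $v$ to $u$, then re-enter via $u\xrightarrow{\hat u}0\to\cdots\to j$. The same looping trick is what actually powers the mixed cases too: e.g.\ from $i:\Node_{p'}$ to $y_G:\Node_G$ you must exit forward through $\hat v$ and then walk in $G$ from $v$ to $y_G$; you cannot simply ``go to whichever of $u,v$ is nearer.'' For $G\bullet\overline{p}$ your argument \emph{as written} does go through, because there the added subgraph is $U(q)$, which genuinely contains both orientations of each edge. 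So the fix is only needed for the simple addition $G\bullet p$; once you replace appeals to $U(G\bullet p)$-walks with the directed re-routing through $G$ described above, and apply propositional truncation at the end, the proof is complete.
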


From now on, we will assume the existence of a planar map
\(\mathcal{M}\) of \(G\). We will only consider the addition of \(p\)
to \(G\) in a fixed face \(\mathcal{F}\) of \(\mathcal{M}\) if the
endpoints of \(p\), \(u\) and \(v\), belong to the boundary walk of
\(\mathcal{F}\). Under these considerations, one can prove that the
addition of \(p\) to \(G\) has a planar map.

\begin{lemma}\label{lem:face-division} There exists an extended planar
map of $\mathcal{M}$ for $G\bullet p$.
\end{lemma}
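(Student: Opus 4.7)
The plan is to extend $\mathcal{M}$ to a map $\mathcal{M}'$ on $G\bullet p$, describe how the faces change, lift sphericality, and retain an outer face. The only nodes whose stars change in $G\bullet p$ are $u$, $v$, and the internal nodes of $p'$. At $u$, the new star is $\mathsf{Star}_{G}(u) + \{\hat u\}$; I insert $\hat u$ into the rotation $\mathcal{M}(u)$ at the position singled out by the corner of $\mathcal{F}$ at $u$ --- the pair of adjacent edges in $\mathcal{M}(u)$ that lie on $\partial \mathcal{F}$. Do the same at $v$ with $\hat v$. At each internal node of $p'$ the star is a two-element set, which carries the unique cyclic structure of \Cref{ex:map-of-Cn}. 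At every other node of $G$, set $\mathcal{M}'(x) :\equiv \mathcal{M}(x)$. That each local insertion yields a cyclic structure in the sense of \Cref{def:cyclic-type} is routine.

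Each face of $\mathcal{M}$ other than $\mathcal{F}$ lifts verbatim to a face of $\mathcal{M}'$, since its boundary walk avoids the new edges and its corners are unchanged. The face $\mathcal{F}$ is replaced by two new faces $F'_1$ and $F'_2$ whose underlying cyclic graphs arise by concatenating $\mathsf{cw}_{\mathcal{F}}(u,v)$ (respectively $\mathsf{ccw}_{\mathcal{F}}(u,v)$) with $p'$ traversed in the opposite direction. Edge-injectivity follows from that of $\mathcal{F}$ and freshness of the interior edges of $p'$; corner-compatibility at $u$ and $v$ is forced by the insertions of $\hat u$ and $\hat v$, and at internal nodes of $p'$ it is immediate. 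Take as the outer face of $\mathcal{M}'$ the original outer face of $\mathcal{M}$ if it is not $\mathcal{F}$, and otherwise $F'_1$.

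The main obstacle is verifying that $\mathcal{M}'$ is spherical. By the quasi-simple reduction used in \Cref{lem:Cn-is-planar}, it suffices to $\sim_{\mathcal{M}'}$-relate any two quasi-simple walks in $U(G\bullet p)$ with the same endpoints. Using $\mathsf{hcollapse}$ at $F'_1$, any traversal of $p'$ can be swapped for the complementary portion of $\partial \mathcal{F}$ lying in $U(G)$; so each such walk is $\mathcal{M}'$-homotopic to one in $U(G)$. Once both walks lie in $U(G)$, the sphericality of $\mathcal{M}$ supplies an $\mathcal{M}$-homotopy between them, which lifts to $\mathcal{M}'$: every $\mathsf{hcollapse}$ at a face $\mathcal{F}'\neq\mathcal{F}$ is still a constructor for $\sim_{\mathcal{M}'}$, and every $\mathsf{hcollapse}$ at $\mathcal{F}$ is simulated by composing two $\mathsf{hcollapse}$ steps at $F'_1$ and $F'_2$ joined along $p'$. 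The delicate part is ensuring the $p'$-pieces cancel when the two collapses are composed, but this is forced by the construction of $F'_1$ and $F'_2$ from $\partial \mathcal{F}$.
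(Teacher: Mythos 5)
Your proposal follows essentially the same route as the paper's proof: insert $\hat u$ and $\hat v$ at the corners of $\mathcal{F}$ at $u$ and $v$, give each interior node of $p'$ the unique cyclic structure on a two-element star, keep $\mathcal{M}$ elsewhere; replace $\mathcal{F}$ by the two faces bounded by $\mathsf{cw}_{\mathcal{F}}(u,v)\cdot U(p)$ and $\mathsf{ccw}_{\mathcal{F}}(u,v)\cdot U(p)$; and establish sphericality by first using $\mathsf{hcollapse}$ to push any traversal of $p'$ back into $U(G)$, then invoking sphericality of $\mathcal{M}$ and lifting, with the collapse at $\mathcal{F}$ simulated by a vertical composition of collapses at the two new faces. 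The only cosmetic deviation is your case split on the outer face, where the paper simply designates $F_1$; since the statement asserts existence of a planar map, either choice suffices.
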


Given a planar map \(\mathcal{M}\) for \(G\), we label
\(E(\mathcal{M}, p)\) for the map given by \Cref{lem:face-division}.
The extended map \(E(\mathcal{M}, p)\) is called the
\emph{face division} of the face \(\mathcal{F}\) by \(p\), assuming
that \(p\) is embedded in \(\mathcal{F}\). To outline, the proof of
\Cref{lem:face-division} contains the following stages. First, one
should define a map that extends \(\mathcal{M}\) for the nodes in
\(p\). Second, as illustrated in \Cref{fig:planar-additions}, one
should define two faces, both induced by collocating \(p\) on
\(\mathcal{F}\). Finally, considering the new walks in
\(U(G \bullet p)\), given by walk-compositions with \(p\), we can
prove that \(E(\mathcal{M}, p)\) is planar.

\begin{proof}[Proof of \Cref{lem:face-division}]
For brevity, let $H$ be the graph $G \bullet p$ as constructed above,
and $p$ be the walk $\hat{u}\cdot p' \cdot \hat{v}$; a walk from $u$
to $v$. Let $\mathcal{F}$ be a face such that its boundary contains
$u$ and $v$. We will define a specific map $\mathcal{M'}$ for $H$ that
extends the given planar map $\mathcal{M}$ of $G$. In this way, one
embeds $p$ in $\mathcal{F}$. By \Cref{def:face}, in
$\partial\mathcal{F}$, one has the previous edge at $u$, denoted by
$a:\Edge_{G}(\mathsf{pred}(u),u)$, and an edge after $a$, denoted by
$a^{+}:\Edge_{G}(u, \mathsf{succ}(u))$. Similarly, for $v$, we have
$b:\Edge_{G}(\mathsf{pred}(v),v)$ and $b^{+}:\Edge_{G}(v,
\mathsf{succ}(v))$, as illustrated by \Cref{fig:planar-additions}b.

If $x = u$ then $\mathcal{M'}(x)$ is the cycle $\mathcal{M}(x)$ with
the insertion of $\hat{u}: \Edge_{H}(u, 0)$ between $a$ and $a^{+}$,
i.e. $\mathcal{M'}(u)$ is $(\cdots a\,\hat{u}\,a^{+}\cdots)$.
Similarly, if $x = v$, the cycle $\mathcal{M'}(v)$ is $(\cdots
b\,\hat{v}\,b^{+}\cdots)$. If $x$ is a node in $p'$, i.e. $x = i$ for
$i$ from $0$ to $n-1$, then $\mathcal{M'}(i)$ is $(e_{i}e_{i+1})$,
where $e_i:\Edge_{p'}(i,i+1)$. Otherwise, $x$ is in $G$, and
$\mathcal{M'}(x)$ is $\mathcal{M}(x)$. 

The path $p$ splits $\mathcal{F}$ in two faces, $F_1$ and $F_2$. Let
$\mathcal{F}$ be given by $(A, h)$ of degree $m$, as in
\Cref{def:face}, and $k$ be the length of the walk
$\mathsf{cw}_{\mathcal{F}}(u,v)$ from $u$ to $v$ in
$\partial\mathcal{F}$. For brevity, let $n_1,n_2$ be $k+(n+1)$ and
$(m-k)+(n+1)$, respectively. Let $F_1, F_2$ be the faces $(C_{n_1},
h_1)$ and $(C_{n_2}, h_2)$, where $h_1$ and $h_2$ are of type
$\mathsf{Hom}(C_{n_i}, U(H))$ for $i=1,2$. We will define $h_1$ and
$h_2$ in a way that their boundary walks are
$\mathsf{cw}_{\mathcal{F}}(u,v)\cdot U(p)$ and
$\mathsf{ccw}_{\mathcal{F}}(u,v)\cdot U(p)$, respectively.

Let $h_1$ be $(\alpha_1,\beta_1)$. If $i<k$ for $i : \Node_{C_{n_1}}$,
one puts the node $i$ in $\mathsf{cw}_{\mathcal{F}}(u,v)$, i.e.
$\alpha_1(i)$ is $\alpha(i)$ and consequently, $\beta_1(i,i+1,e)$ is
$\beta(i,i+1,e)$ for $e :\Edge_{C_{n_1}}(i,i+1)$. Otherwise, if $k\leq
i \leq n_1$, then one puts the node $i$ in $U(p)$, i.e. $\alpha_1(i)$
is $n-i$, and consequently, $\beta_1(i,i+1,e)$ is the edge
$\mathsf{inl}(e_{i})$ in $U(H)$. It is clear that $h_1$, and
similarly, $h_2$, is an edge-injective and map-compatible graph
homomorphism with $\mathcal{M'}$, since their properties are inherited from
$h$ and $p$.

Let us now prove that $\mathcal{M'}$ is spherical. Since this is a
proposition by \Cref{lem:spherical-is-proposition}, we can use the
elimination of the propositional truncation applied to the evidence
that $\mathcal{M}$ is spherical. This permits us to freely obtain a
walk homotopy for any pair of walks sharing endpoints in $U(G)$. On
the other hand, one must observe that a pair of homotopic walks in
$U(G)$ that deforms along different faces than $\mathcal{F}$, remain
homotopic in $U(H)$. Hence, we only need to consider for our goal, (i)
the set of walks in $U(G)$ deforming along $\mathcal{F}$ and (ii) the
set of walks created by compositions of $p$ with existing walks in
$U(G)$. 

If both walks belong to the former set, then we know that they are
homotopic by vertical composition of the homotopies along $F_1$ and
$F_2$ \cite[§5]{homotopywalks}.

\begin{figure}[!ht]
  \centering
  \[\begin{tikzcd}[row sep=large]
    {\bullet_{x}} & {\bullet_{u}} & \textcolor{darkblue}{\bullet_{y}} & {\bullet_{v}} & {\bullet_{z}}
    \arrow["{\delta_1}", no head, from=1-1, to=1-2]
    \arrow[""{name=0, anchor=center, inner sep=0}, "{p_2}", color={darkblue}, no head, from=1-3, to=1-4]
    \arrow[""{name=1, anchor=center, inner sep=0}, "{\mathsf{cw}_{\mathcal{F}}(u,v)}", curve={height=-30pt}, no head, from=1-2, to=1-4]
    \arrow["{\delta_2}", no head, from=1-4, to=1-5]
    \arrow[""{name=2, anchor=center, inner sep=0}, "{p_1}", color={darkblue}, no head, from=1-2, to=1-3]
    \arrow[""{name=3, anchor=center, inner sep=0}, "{\mathsf{ccw}_{\mathcal{F}}(u,v)}"', curve={height=30pt}, no head, from=1-2, to=1-4]
    \arrow["\,{h_{F_2}}"', shorten <=3pt, Rightarrow, 2tail reversed, no head, from=3, to=1-3]
    \arrow["\,{h_{F_1}}", shorten <=3pt, Rightarrow, 2tail reversed, no head, from=1, to=1-3]
  \end{tikzcd}\] \caption{The figure shows a part of the graph
  $U(G\bullet p)$ embedded in the sphere. As constructed in the proof
  of \Cref{lem:face-division}, the faces, $F_1$ and $F_2$, of the map
  $\mathcal{M'}$ are given by a face division of $\mathcal{F}$ by the
  path $p$. Such gives rise to new walk homotopies, as $h_{F_1}$ and
  $h_{F_2}$ in the picture.  The walk $U(p)$ from $u$ to $v$ is the
  walk composition of $p_1$, a walk from $u$ to $y$, and $p_2$, a walk
  from $y$ to $v$. The walks $\delta_1$ and $\delta_2$ are
  walks in $U(G)$ from $x$ to $z$.}
  \label{fig:lemma-path addition}
\end{figure}
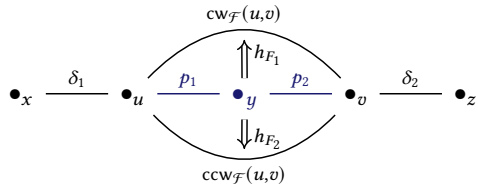

Otherwise, if both walks, $w_1$ and $w_2$ from $x$ to $z$, belong to
the latter set, then we proceed by case analysis to show that
$w_1\sim_{\mathcal{M'}} w_2$. For $w_1$ and $w_2$, it is only required
to consider walks without inner loops, by Lemma $5.8$ in
\cite{homotopywalks}. In the following, the variables $p_1,p_2,
\delta_1$, and $\delta_2$ denote walks, as in \Cref{fig:lemma-path
addition}.

\begin{itemize}
\item If $w_1$ is $\delta_1 \cdot U(p) \cdot \delta_2$, and $p$ is not
a subwalk of $w_2$, then one can obtain the following walk homotopy.
\begin{equation}\label{eq:p-reduces-w2}
  \begin{array}{ll}
      w_1 \equiv \delta_1 \cdot U(p) \cdot \delta_2 & \\
      {\color{white} w_1} \equiv \delta_1 \cdot \mathsf{ccw}_{F_1}(u,v) \cdot \delta_2  &(\mbox{By the construction of $F_1$}) \\
      {\color{white} w_1} \sim_{\mathcal{M'}} \delta_1 \cdot \mathsf{cw}_{F_1}(u,v) \cdot \delta_2  &(\mbox{By the constructor $\mathsf{hcollapse}$ with $F_1$, $\delta_1$, and $\delta_2$}) \\
      {\color{white} w_1} \equiv \delta_1 \cdot \mathsf{cw}_{\mathcal{F}}(u,v) \cdot \delta_2  &(\mbox{By the construction of $F_1$}) \\
      {\color{white} w_1} \sim_\mathcal{M} w_2 &(\mbox{By the spherical map $\mathcal{M}$ applied to walks from $x$ to $z$}).
  \end{array}
\end{equation}
\item If $w_1$ is $\delta_1 \cdot p_1$ and $w_2$ is $\delta \cdot p_1$
for a walk $\delta$ from $x$ to $u$, then, by right whiskering of walk
homotopies, one gets that $\delta_1 \cdot p_1 \sim_{\mathcal{M'}}
\delta \cdot p_1$. By assumption, $\mathcal{M}$ is spherical, and then
$\delta_1 \sim_{\mathcal{M}} \delta$, which implies by definition of
$\mathcal{M'}$ that $\delta_1 \sim_{\mathcal{M'}} \delta$. Similarly,
by left whiskering, one can also prove that if $w_1$ is $p_2\cdot
\delta_2$ and $w_2$ is $p_2 \cdot \delta$ where, $\delta$ is a walk
from $v$ to $z$, then there is a walk homotopy such that $p_2\cdot
\delta_2 \sim_{\mathcal{M'}} p_2 \cdot \delta$.
\end{itemize}

For the remainder cases of $w_1$ and $w_2$, one can similarly
construct the required walk homotopy. Therefore, $\mathcal{M'}$ is
spherical, and is also a planar map of $H$ with the outer face $F_1$.
\end{proof}

If \(G\) is a finite graph with a map \(\mathcal{M}\), then the
Euler's characteristic of \(G\) by \(\mathcal{M}\), denoted by
\(\chi_{\mathcal{M}}\), is the number relating the cardinal of the set
of nodes (\(n\)), edges (\(e\)), and faces (\(f\)).

\begin{equation}\label{eq:euler}
\chi_{\mathcal{M}} :\equiv n - e + f.
\end{equation}

One can prove that if \(\mathcal{M}\) is a planar map then
\(\chi_{\mathcal{M}}\) and \(\chi_{E(\mathcal{M},G)}\) are equal. As
described in the proof of \Cref{lem:face-division} to construct
\(E(\mathcal{M},G)\), a path addition of \(p\) of length \(k+1\) to
\(G\) increases \(n\) by \(k\), \(e\) by \(k+1\), and \(f\) by one. A
major result is the characterisation of connected and finite planar
graphs by Euler's formula, which states that \(\chi_{\mathcal{M}}\) is
two. Using the development in this section, one could show Euler's
formula for graph extensions, and for the class of biconnected graphs
as described in \Cref{sec:biconnected-graph}. However, it is still
unclear how to verify Euler's formula, when it is not given the
cardinal of the set of faces, i.e.~elements of type
\(\mathsf{Face}(G,\mathcal{M})\), see \Cref{def:face}.

\begin{equation*}
\chi_{E(\mathcal{M},p)} :\equiv (n + k) - (e + k + 1) + (f + 1) = \chi_{\mathcal{M}}.
\end{equation*}

\begin{figure}[!htb]
  \centering
  \includegraphics[width=0.68\textwidth]{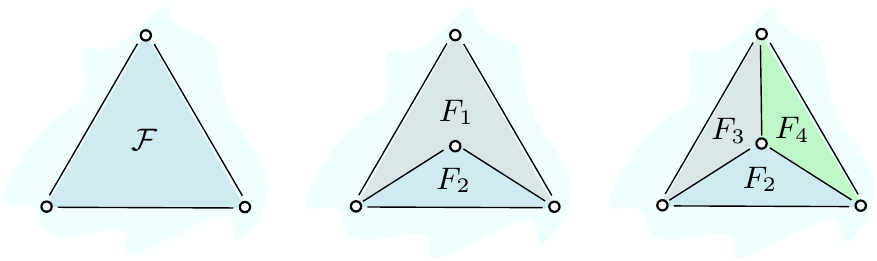}
  \caption{The figure is a planar synthesis of the construction of a
  planar map for $K_4$ from a planar map of $C_3$. One first divides
  the face $\mathcal{F}$ into $F_1$ and $F_2$. Then one splits $F_1$
  into $F_3$ and $F_4$.}
  \label{fig:k4-is-planar}
\end{figure}

There are several methods to construct graphs inductively, as the
construction of \(K_4\) in \Cref{fig:k4-is-planar}. Whitney-Robbins
synthesis and an ear decomposition of a graph are some related
methods. Inspired by these constructions and \Cref{lem:face-division},
we define the construction of larger planar graphs using graph
extensions, in a way that we never leave the class of planar graphs.

\begin{definition}\label{def:whitney-synthesis} A \emph{synthesis} of
  a graph $G$ from a graph $H$ is a sequence of graphs $G_0,G_1,
  \cdots, G_n$ where $G_0$ is $H$, $G_n$ is $G$, and $G_i$ is the
  addition of $p_{i}$ to $G_{i-1}$ for $i$ from $1$ to $n$. If the
  sequence only contains simple additions, then it is called a
  \emph{simple synthesis}. Else, if the sequence only contains
  non-simple additions, the sequence is called a \emph{non-simple
  synthesis}. If the graph $G$ is only obtained by a sequence of path
  additions, then the sequence is called a \emph{Whitney synthesis}. 
\end{definition}

\begin{lemma} \label{lem:whitney-is-connected} In a synthesis from a
connected graph, every graph in the sequence is connected.
\end{lemma}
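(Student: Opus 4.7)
The proof is essentially a straightforward induction on the length of the synthesis, whose inductive step is already handled by \Cref{lem:path addition-preserves-connectedness}. Concretely, given a synthesis $G_0,G_1,\ldots,G_n$ from a connected graph $H$, I would prove by induction on $i$ that $\Connected(G_i)$ holds for every $i$ with $0\le i\le n$. Since $\Connected(G_i)$ is a proposition by \Cref{lem:connected-is-proposition}, no coherence obligations arise and ordinary natural-number induction suffices.

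The base case is immediate: $G_0 \equiv H$ is connected by hypothesis. For the inductive step, assume $G_{i-1}$ is connected. By \Cref{def:whitney-synthesis}, $G_i$ is the addition of some $p_i$ to $G_{i-1}$, which by definition means one of the two forms $G_{i-1}\bullet p_i$ or $G_{i-1}\bullet \overline{p_i}$. In either case, \Cref{lem:path addition-preserves-connectedness} directly yields that $G_i$ is connected. Specialising to the simple/non-simple case is a routine case analysis on which of $(\bullet)$ and $(\bullet\,\overline{(-)})$ was used to form $G_i$, and in the Agda formalisation this amounts to pattern matching on the constructor of the addition.

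I expect no serious obstacle: the only subtle point is ensuring that the predicate \say{every graph in the sequence is connected} is phrased so that induction applies uniformly. A convenient way is to state the goal as $\prod_{i:\mathsf{Fin}_{n+1}} \Connected(G_i)$ and convert it via standard finite-type induction, which is justified because $\Connected$ lands in $\hProp$ and hence commutes with propositional truncation used in the underlying definition of connectedness. After this the proof reduces to a one-line appeal to \Cref{lem:path addition-preserves-connectedness} at each successor step.
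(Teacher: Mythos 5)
Your proof is correct and is essentially the argument the paper intends (the paper omits the proof as routine): a straightforward induction on the synthesis length, with the inductive step handled by \Cref{lem:path addition-preserves-connectedness} in both the simple and non-simple cases, and with \Cref{lem:connected-is-proposition} ensuring the induction is unproblematic.
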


\begin{definition}\label{def:planar-synthesis} Given a planar map
  $\mathcal{M}$ of the graph $G$, a \emph{planar synthesis} of $H$
  from a graph $G$ is a sequence $(G_0, \mathcal{M}_0), (G_1,
  \mathcal{M}_1) \cdots, (G_n, \mathcal{M}_n),$ where $n$ represents
  the length of the synthesis, $(G_0, \mathcal{M}_0)$ is $(G ,
  \mathcal{M})$, and $(G_n, \mathcal{M}_{n})$ is $(H,
  E(\mathcal{M}_{n-1},p_{n-1}))$. The graph $G_i$, for $i$ from $1$ to
  $n$, is the graph $G \bullet p_i$ and the map $\mathcal{M}_i$ is
  $E(\mathcal{M}_{i-1},p_{i-1})$, for $i$ from $1$ to $n$.
\end{definition}

\begin{lemma}\label{lem:planar-synthesis-have-a-planar-map} In a
planar synthesis, every graph in the sequence is planar.
\end{lemma}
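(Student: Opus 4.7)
The plan is to proceed by induction on the position $i$ in the sequence $(G_0, \mathcal{M}_0), (G_1, \mathcal{M}_1), \ldots, (G_n, \mathcal{M}_n)$. The base case, $i = 0$, is immediate: by \Cref{def:planar-synthesis}, $(G_0, \mathcal{M}_0) \equiv (G, \mathcal{M})$, and $\mathcal{M}$ is a planar map of $G$ by hypothesis.

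For the inductive step, assume $\mathcal{M}_{i-1}$ is a planar map of $G_{i-1}$. By definition of the planar synthesis, $G_i$ is obtained by the path addition $G_{i-1} \bullet p_i$, and $\mathcal{M}_i$ is taken to be the extension $E(\mathcal{M}_{i-1}, p_i)$ produced by \Cref{lem:face-division}. To apply that lemma, I need to know that (i) $G_{i-1}$ is connected and locally finite, (ii) $G_{i-1}$ has decidable equality on its node set, and (iii) the endpoints of $p_i$ lie on the boundary of some face of $\mathcal{M}_{i-1}$. Condition (iii) is what makes the synthesis well-formed, so it comes from the definition of \Cref{def:planar-synthesis} itself. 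Condition (i) follows from \Cref{lem:whitney-is-connected} for connectedness, together with local finiteness, which is preserved by path additions since only finitely many new nodes and edges are introduced at each step, and each new star is finite. Condition (ii) is preserved because a coproduct of two sets with decidable equality again has decidable equality, and $\Node_{G_i} \equiv \Node_{G_{i-1}} + \Node_{p_i'}$ where $\Node_{p_i'}$ is a finite type with decidable equality.

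With these preconditions in hand, \Cref{lem:face-division} gives a planar map $E(\mathcal{M}_{i-1}, p_i)$ for $G_{i-1} \bullet p_i$, namely $\mathcal{M}_i$, which completes the inductive step. The conclusion is that every $\mathcal{M}_i$ in the sequence is a planar map of $G_i$, so each $G_i$ is planar.

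The main obstacle is not combinatorial but bookkeeping: ensuring that the structural hypotheses required by \Cref{lem:face-division} (connectedness, local finiteness, decidable equality, and the face-compatibility of the endpoints of the attached path) propagate through the induction. Each of these properties is preserved by a single path addition, but this must be stated explicitly as part of the induction hypothesis rather than only carrying along the existence of a planar map. Once the induction invariant is strengthened in this way, each inductive step reduces directly to a single invocation of \Cref{lem:face-division}.
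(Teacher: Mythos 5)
Your proof is correct and follows exactly the paper's approach: induction on the length of the synthesis, with the inductive step being a single application of \Cref{lem:face-division}. The paper's own proof is a one-line sketch saying precisely this; your version simply fills in the bookkeeping of checking that the hypotheses of \Cref{lem:face-division} (connectedness, local finiteness, decidable equality, endpoints on a face boundary) are preserved along the sequence, which is a sensible expansion rather than a deviation.
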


\begin{proof}
By induction on the synthesis length and \Cref{lem:face-division}.
\end{proof}

\Cref{lem:face-division} can be further extended to consider
non-simple additions, and consequently, one could extend
\Cref{lem:planar-synthesis-have-a-planar-map} to define
\emph{non-simple planar syntheses}. Given a map \(\mathcal{M}\) for
\(G\), the corresponding planar map for \(G \bullet \overline{p}\) is
denoted by \(E(\mathcal{M},\overline{p})\). Similarly, as with path
additions, by extending the map by non-simple additions, new faces
show up. Let \(k+1\) be the length of the path added to \(G\). Then
the map \(E(\mathcal{M},\overline{p})\) induce \(k+2\) new faces, and
one gets that \(\chi_{E(\mathcal{M},\overline{p})}\) and
\(\chi_{\mathcal{M}}\) are equal.

\begin{equation*}
  \chi_{E(\mathcal{M},\overline{p})} :\equiv (n + k) - (e + 2\cdot(k + 1)) + (f+(k+2)) = \chi_{\mathcal{M}}.
\end{equation*}

As illustrated by \Cref{fig:planar-additions}, using spike additions,
larger planar graphs can be constructed. A \emph{spike addition} to
\(G\) is the addition of a path that only has one node in common with
\(G\). Given a map for \(G\), a simple addition of a spike \(p\) to
\(G\) induces a new face of a greater degree than the face where the
spike is inserted. In contrast to simple additions, the number of
faces of a map extended by non-simple spike additions vary as new
faces arise between pairs of edges that share their endpoints.

\subsubsection{Biconnected Planar Graphs}\label{sec:biconnected-graph}

One can look at how much a graph is connected by examining its
node-connectivity or edge-connectivity. Some graphs, yet after
removing parts of them, preserve one or both connectivity measures. In
this subsection, we want to characterise how to construct the class of
\(2\)-connected planar graphs. A graph is \(k\)-\emph{connected} if it
cannot be disconnected by removing less than \(k\) nodes. There exist,
depending on \(k\), different ways to construct the class of
\(k\)-connected graphs. For example, it is known that one can
construct any undirected (2)-connected graph, if one applies path
additions to a proper cyclic graph \citep[§3]{diestel}.

\begin{definition}\label{def:biconnected-graph} A graph $G$ is
  $2$-\emph{connected} or biconnected if the graph formed by removing
  from $G$ a node $x$, denoted by $G - x$, is connected. If $G$ is
  $2$-connected then the proposition $\mathsf{Biconnected}(G)$ holds.
  Precisely, $G - x$ is the graph formed by the set of nodes,
  $\Sigma_{y : \Node_{G}} (x \neq y)$, and the corresponding edges in
  $G$.
\begin{equation*}
\mathsf{Biconnected}(G):\equiv \prod_{x:\Node_{G}}\, \mathsf{Connected}(G - x).
\end{equation*}
\end{definition}

\begin{lemma}\label{lem:u-cyclic-is-biconnected}
  If $G$ is a cyclic graph, then $U(G)$ is $2$-connected.
\end{lemma}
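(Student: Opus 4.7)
The plan is to first observe that biconnectedness is a proposition: $\mathsf{Biconnected}(G)$ is a $\prod$-type of instances of $\mathsf{Connected}(G - x)$, each of which is a proposition by \Cref{lem:connected-is-proposition}. This is crucial because the hypothesis that $G$ is cyclic gives only a propositionally-truncated equality $\|(G,\varphi)=(C_n,\mathsf{rot})\|$, and we want to eliminate the truncation to assume $G \equiv C_n$. Formally I would apply \Cref{lem:finite-type-prop2} (or equivalently \Cref{leibniz-principle} / \Cref{equivalence-induction} together with the fact that $U(\mbox{-})$ respects graph equality) to transfer the biconnectedness statement from $U(C_n)$ to $U(G)$.

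So it suffices to prove that $U(C_n)$ is biconnected for every $n$. Fix $x:\Node_{C_n}$; the goal is to show $U(C_n)-x$ is connected. I would split by cases on $n$. If $n\leq 1$, then $U(C_n)-x$ has no nodes (or only the trivial walk is required on the one remaining node), and $\mathsf{Connected}$ is vacuously satisfied. If $n=2$, removing one node leaves a single node whose only required walk is the trivial one $\langle y\rangle$. The interesting case is $n\geq 3$, which will use the geometric fact that removing a single point from the cycle yields a path.

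For the main case $n\geq 3$, let $y,z:\Node_{C_n}$ be two nodes distinct from $x$; I must exhibit a walk from $y$ to $z$ in $U(C_n)-x$. Using \Cref{lem:walk-in-UCn}, the graph $U(C_n)$ admits two quasi-simple walks between $y$ and $z$, namely $\mathsf{cw}_{U(C_n)}(y,z)$ and $\mathsf{ccw}_{U(C_n)}(y,z)$, and by \Cref{lem:cover-UCn} together they visit every node of $C_n$. Because these two walks are quasi-simple and share only their endpoints $y$ and $z$ as common interior nodes, and since $x\neq y$ and $x\neq z$, the node $x$ lies in the interior of \emph{at most one} of the two walks. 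Pick the walk that does not contain $x$; this is possible constructively since $C_n$ has decidable node equality by \Cref{def:finite-graph}, so we can test each interior node against $x$. The chosen walk lifts verbatim to a walk in $U(C_n)-x$, whose nodes are $\Sigma_{y:\Node_{C_n}}(x\neq y)$, by pairing each interior node with its proof of distinctness from $x$ and reusing the same edges.

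The main technical obstacle is the final lifting step: formally, a walk in $U(C_n)-x$ is not literally a walk in $U(C_n)$, so the construction above must be re-packaged by induction on the length of the chosen walk, producing at each step a witness $x\neq v$ for the current intermediate node $v$. The decidable-equality assumption and the quasi-simplicity ensured by \Cref{lem:walk-in-UCn} make this bookkeeping straightforward but slightly tedious; everything else is a direct consequence of the already-established covering lemmas for $U(C_n)$ and the propositional character of biconnectedness.
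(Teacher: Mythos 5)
The paper itself states this lemma without proof, so there is no paper argument to compare against directly; your proposal stands or falls on its own merits, and it is correct. The reduction is sound: biconnectedness is a $\Pi$-type over $\Connected(G-x)$, each a proposition by \Cref{lem:connected-is-proposition}, so $\mathsf{Biconnected}(U(G))$ is propositional and you may eliminate the truncation in $\mathsf{iscyclic}$ to assume $G$ is literally $C_n$, and then push through $U(\mbox{-})$ by $\mathsf{ap}$. The case analysis on $n$ is complete, and the main case $n \geq 3$ correctly uses the covering property of $\mathsf{cw}_{U(C_n)}$ and $\mathsf{ccw}_{U(C_n)}$ from \Cref{lem:walk-in-UCn,lem:cover-UCn} together with decidable node equality of $[n]$ to select the walk avoiding $x$. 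You are also right to flag the lifting of that walk into $U(C_n) - x$ as the only genuinely fiddly part: each intermediate node must be paired with a witness of apartness from $x$, and the decidable-equality test you invoke at each step is exactly what makes this constructive.

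One small point worth tightening: you attribute the disjointness of the interiors of the cw and ccw walks partly to their quasi-simplicity. Quasi-simplicity of each walk separately does not imply the two interiors are disjoint; that disjointness is a structural fact about $C_n$ which should be extracted from \Cref{lem:cover-UCn} (whose wording — "$n$ is the length of each walk" — reads most sensibly as asserting that the two walk lengths sum to $n$, matching case $(2)$ where the lengths are $n$ and $0$). With the interiors partitioning $[n]\setminus\{y,z\}$, the node $x$ lies in exactly one interior and the argument closes. Everything else is fine.
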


The \(2\)-connectedness of a graph is not preserved by simple path
additions. Clearly, removing a node from the added path \(p\)
disconnects \(G \bullet p\). However, using non-simple path additions,
we can preserve and enlarge \(2\)-connected graphs.

\begin{lemma}\label{lem:2-connected}
Suppose $G$ is a $2$-connected graph, then the following claims hold.
\begin{enumerate}
  \item Every node in $G$ has degree of minimum two.
  \item There exists a cyclic graph $H$ and an injective morphism from
  $U(H)$ to $G$.
  \item The graphs $G \bullet \overline{p}$, $U(G \bullet p)$, and
  $U(G) \bullet \overline{p}$ are all $2$-connected.
\end{enumerate}
\end{lemma}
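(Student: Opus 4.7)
For part (1), I would argue by contradiction. Suppose some node $x$ has valency at most one in $U(G)$. If the valency is zero then $x$ is isolated, so when $G$ has more than one node it fails to be connected, contradicting the connectedness implicit in biconnectedness. If the valency is one, let $(y,e)$ be the unique inhabitant of $\mathsf{Star}_{G}(x)$. Then in $G - y$ the star at $x$ is empty, so $x$ is isolated in $G - y$, and $G - y$ is disconnected as soon as it has more than one node; this contradicts $\mathsf{Biconnected}(G)$. The degenerate cases of very small $G$ are handled directly.

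For part (2), I would combine (1) with the finiteness of $\Node_{G}$. Because every star in $U(G)$ has at least two elements, one can inductively construct a walk $x_0, x_1, x_2, \ldots$ in $U(G)$ that, at each step, leaves $x_i$ by an edge different from the one just traversed. By finiteness and pigeonhole, some node must repeat; let $j < k$ be the least pair with $x_j = x_k$. The subwalk $x_j, x_{j+1}, \ldots, x_k$ has no internal repetitions, so orienting it along the walk direction yields a cyclic graph $H$ of $n = k - j$ nodes together with an injective homomorphism $U(H) \to G$ sending each $x_i$ and each traversed edge to its counterpart in $G$.

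For part (3), the plan is to first note that $G$ biconnected implies $U(G)$ biconnected, because $U(G) - x = U(G - x)$ and symmetrisation preserves connectedness. Then (b) reduces to the non-simple case since $U(G \bullet p)$ is canonically isomorphic to $U(G) \bullet \overline{p}$, and (c) is itself the instance of (a) for $U(G)$ in place of $G$. It remains to prove (a). Fix $x : \Node_{G \bullet \overline{p}}$. If $x$ lies in $G$, then $G - x$ is connected by hypothesis, and the internal nodes of $\overline{p}$ remain attached via whichever endpoint among $u, v$ differs from $x$ (at least one does). If $x$ is internal to $\overline{p}$, then removing $x$ splits $\overline{p}$ into two stubs, each still attached to one of $u, v \in G$; connectedness of $G$ welds everything together.

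The main obstacle will be (2): formally building the walk and extracting the first repetition requires decidable equality on nodes (available by finiteness), and the resulting simple closed walk must be packaged explicitly as a cyclic graph in the sense of \Cref{def:cyclic-graph}, together with a proof that the induced homomorphism satisfies edge-injectivity as in \Cref{def:edge-injective}. The reductions in (3) are routine case analyses; the only delicate point is the canonical identification of $U(G \bullet p)$ with $U(G) \bullet \overline{p}$ as graphs, which must be stated precisely enough to transport biconnectedness along it.
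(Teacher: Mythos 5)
The paper does not give a proof of this lemma, so I am assessing only the soundness of your outline. Parts (1) and (3) are essentially correct under the paper's standing conventions ($G$ connected, finite, decidable node equality). One caution on (1): the ``degenerate cases'' you propose to ``handle directly'' are genuine counterexamples to the literal statement — a graph with a single directed edge between two nodes, or a single loopless node, satisfies \Cref{def:biconnected-graph} while having a node of degree $<2$ — so these need to be excluded by hypothesis, not dispatched. Your reduction in (3), including the isomorphism $U(G \bullet p) \cong U(G) \bullet \overline{p}$ and the identity $U(G) - x = U(G - x)$, is sound, and the case split on whether the removed node lies in $G$ or inside the added path is the right argument.

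The real gap is in (2). The walk you construct lives in $U(G)$: each traversed edge is an inhabitant of $\Edge_{G}(x_i,x_{i+1}) + \Edge_{G}(x_{i+1},x_i)$, which witnesses an edge of $G$ in one direction only. Orienting the extracted cycle therefore yields an edge-injective homomorphism $H \to U(G)$ — equivalently $U(H) \hookrightarrow U(G)$ by symmetrising both sides — but it does not in general yield a homomorphism $U(H) \to G$: $U(H)$ carries every cycle edge in both orientations, while $G$ need only contain the one orientation that your walk happened to traverse. The intended target in the paper's statement is almost certainly $U(G)$ rather than $G$ (this is what the walk-homotopy, face, and synthesis machinery of \Cref{sec:biconnected-graph} operate on, and it is what \Cref{lem:u-cyclic-is-biconnected} sets up), so you should either prove the $U(G)$-version and note that the literal ``$\to G$'' would force $G$ to be symmetric, or explain why the walk can be chosen to traverse only $\inl$-edges of $U(G)$ — which the degree-two hypothesis alone does not guarantee.
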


\begin{lemma}\label{lem:2-connected-planar-in}
In a non-simple Whitney synthesis of $G$ of length $n$ from a
$2$-connected cyclic graph $H$, every graph $G_i$ in the sequence is a
$2$-connected planar graph.
\end{lemma}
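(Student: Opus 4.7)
The plan is to proceed by induction on the length $i$ of the synthesis, showing that each $G_i$ is simultaneously $2$-connected and planar. These two properties are treated in parallel because only the combination is preserved under non-simple path additions: $2$-connectedness is needed to apply \Cref{lem:2-connected}(3) in the inductive step, while planarity relies on the extension of \Cref{lem:face-division}.

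For the base case $G_0 = H$, the hypothesis gives $2$-connectedness directly. For planarity, I would use that $H$ is cyclic: by \Cref{def:cyclic-graph}, there merely exists $m:\N$ with $(H,\varphi_{H}) = (C_{m}, \mathsf{rot})$ in the groupoid of graphs-with-endomorphism. Projecting to the graph component gives $H = C_m$, so by the Leibniz principle (\Cref{leibniz-principle}) applied to the property $\mathsf{Planar}$, it suffices to show $\mathsf{Planar}(C_m)$, which is \Cref{lem:Cn-is-planar}. (Since $\mathsf{Planar}$ is a set rather than a proposition, one technically invokes \Cref{lem:connected-share-same-predicates} on the underlying truncated proposition $\|\mathsf{Planar}(\_)\|$, but then uses that the planar map is actually produced constructively from the cyclic structure.)

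For the inductive step, assume $G_i$ is $2$-connected and planar, with planar map $\mathcal{M}_i$. By \Cref{def:whitney-synthesis}, $G_{i+1} = G_i \bullet \overline{p_{i+1}}$ for some path $p_{i+1}$ with endpoints in $G_i$. Applying \Cref{lem:2-connected}(3) with the $2$-connected graph $G_i$ yields that $G_{i+1}$ is again $2$-connected. For planarity, I would invoke the non-simple extension of \Cref{lem:face-division} mentioned in the remarks following \Cref{lem:planar-synthesis-have-a-planar-map}: given the planar map $\mathcal{M}_i$ and a face $\mathcal{F}_i$ of $\mathcal{M}_i$ whose boundary contains the endpoints of $p_{i+1}$ (which exists because the endpoints are in $G_i$ and $\mathcal{M}_i$ is spherical), one obtains the planar map $E(\mathcal{M}_i, \overline{p_{i+1}})$ for $G_{i+1}$, together with an outer face inherited from the face-division construction.

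The main obstacle is really hidden in the non-simple version of \Cref{lem:face-division}, which is only sketched in the paper: whereas a simple path addition replaces one face by two, a non-simple addition of $\overline{p_{i+1}}$ of length $k+1$ introduces $k+2$ new faces (the \say{inner} two-edge faces between each pair of antiparallel edges in $U(p_{i+1})$, together with the two big faces bounding the inserted strip). The rotation system at each new node must be defined so that these small faces are corner-compatible, and the spherical-map verification must be carried out by handling walks that traverse any subset of these newly created edges — the case analysis is a strict generalisation of the one done explicitly in the proof of \Cref{lem:face-division}, combined with repeated applications of $\mathsf{hcollapse}$ on the two-edge faces to reduce any $U(p_{i+1})$-traversal to a traversal along a single side.
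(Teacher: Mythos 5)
Your overall structure matches the paper's proof exactly: induction on the synthesis length, the base case from the cyclic graph, then \Cref{lem:2-connected}(3) for $2$-connectedness and a non-simple variant of \Cref{lem:face-division} for planarity in the inductive step. Two of the extra justifications you supply, however, are problematic.

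First, the parenthetical claim that a face of $\mathcal{M}_i$ whose boundary contains both endpoints of $p_{i+1}$ ``exists because the endpoints are in $G_i$ and $\mathcal{M}_i$ is spherical'' is false. Sphericality says any two walks with the same endpoints are merely walk-homotopic; it does \emph{not} say any two nodes lie on a common face boundary. A planar map of the octahedron (or of any sufficiently large $2$-connected plane graph) has pairs of nodes appearing on no common face. The paper does not attempt this justification; the compatible face must be supplied as part of the synthesis data (the paper is admittedly implicit about this, but at least it does not offer a wrong reason).

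Second, for the base case you yourself notice the real obstacle: $\mathsf{Planar}(H)$ is a set, not a proposition, so you cannot eliminate the merely-known identification $\|(H,\varphi_H) = (C_m, \mathsf{rot})\|$ into it, and invoking \Cref{lem:connected-share-same-predicates} on $\|\mathsf{Planar}(\text{--})\|$ only yields \emph{mere} planarity. Your remark that ``the planar map is actually produced constructively from the cyclic structure'' is the correct resolution, but it abandons the transport argument entirely — that is precisely the paper's ``similar construction as in \Cref{lem:Cn-is-planar}'': the rotation system is built directly from the endomap $\varphi_H$ and the finiteness of the stars, without ever discharging the truncation. As written, the transport-based step is a non-step. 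Finally, a small bookkeeping slip: your list of new faces (the $k+1$ digons between antiparallel pairs plus the two bounding faces) has $k+3$ entries; the figure $k+2$ that you quote, and that the paper uses in the Euler computation, is the \emph{net} increase after removing the subdivided face, not the number of brand-new faces.
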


\begin{proof} By induction on $n$. If $n = 0$, the graph $H$ is
$2$-connected by hypothesis and is planar by a similar construction as
in \Cref{lem:Cn-is-planar}. Assuming that the claim holds for a
sequence of length $n$, then $G_{n}$ is a $2$-connected planar graph.
By Item (iii) in \Cref{lem:2-connected}, we get that $G_{n} \bullet
\overline{p_i}$ is $2$-connected. Using a similar construction as in
the proof of \Cref{lem:face-division}, one defines a planar map for
$G_{n} \bullet \overline{p_i}$, from where the conclusion follows.
\end{proof}

The converse of \Cref{lem:2-connected-planar-in} can be proved by
closely following the informal proof of Lemma \(3\) and Proposition
\(4\) for undirected \(2\)-connected planar graphs in
\citep{yamamoto}. We must formalise several notions before considering
such a proof in our formalism, including, the notion of maximal sub
graphs, adjacent faces, and deletion of edge sequences. One
understands, therefore, that the class of \(2\)-connected planar
graphs is completely determined by all non-simple Whitney syntheses
\citep[§3]{diestel}. Any planar graph, in the sense of
\Cref{def:planar-graph}, and \(2\)-connected, as in
\Cref{def:biconnected-graph}, can be inductively generated from a
cycle graph and iterative additions of non-simple paths.

Further investigation to study of other graph extensions to generate
planar graphs, as graph amalgamations, graph appendages, deletions,
contractions and subdivisions should be considered
\citep[§7.3]{Gross2018}.

\section{Related Work}\label{sec:related-work}

One can find the study of planar graphs and more general graph
theoretic topics in relevant projects and big libraries formalised in
Coq \citep{doczkal2020} and Isabelle/HOL \citep{Noschinski2015}. For
example, the formal proof of the Four-Colour Theorem (FCT) in Coq by
Gonthier \citep{Gonthier2008}, the proof of the discrete form of the
Jordan Curve Theorem in Coq by Dufourd \citep{Dufourd2000}, and the
proof of the Kepler's Conjecture in HOL by Bauer et al.
\citep{hales-kepler} are a few of such notable projects in the
subject.

Different approaches have been proposed to address planarity of graphs
in formal systems. These works use different mathematical objects
depending on the system. We use combinatorial maps in this work, but
other related constructions are, for example, root maps defined in
terms of permutations by Dubois et al. \citep{CatherineDubois2016},
and hypermaps by Dufourd and Gonthier
\citep{Dufourd2000, dufourd2009, Gonthier2008}, among others. In
particular, one can see that the notion of a \emph{hypermap} is a
generalisation of a combinatorial map for undirected finite graphs.
Such a concept is one fundamental construction to formalise
mathematics of graph embeddings amongst in theorem provers, along with
the computer-checked proof of FCT. Additionally, Dufourd states and
proves the Euler's polyhedral formula and the Jordan Curve Theorem
using an inductive characterisation of hypermaps
\citep{Dufourd2000, dufourd2009}. Recently, for a more standard
representation of finite graphs, Doczkal proved that, according to his
notion of a plane map based on hypermaps, every \(K_{3,3}\)-free graph
and \(K_{5}\)-free graph without isolating vertices is planar, a
direction of Wagner's theorem \citep{doczkal2021}.

An alternative approach for planarity using combinatorial maps is the
iterative construction of certain kind of planar graphs. For example,
Yamamoto et al. \citep{yamamoto} showed that every biconnected and
finite planar graph can be decomposed as a finite set of cycle graphs,
where every face is the region bounded by a closed walk \citep[§5.2,
§7.3]{Gross2018}. Such construction defines an inductive data type
that begins with a cycle graph \(C_n\) serving as the base case, and
by repeatedly merging new instances of cycle graphs, one gets the
final planar graph. Bauer formalises in Isabelle/HOL a similar
construction of planar graphs from a set of faces
\citep{Bauer, BauerN02}. A related approach in our setting is of the
treatment of planar graph extensions, as described in
\Cref{sec:planar-extensions}.

However, to the best of our knowledge, in type theory, related work to
the planarity of graphs has been done in a different formal system and
for different classes of graphs. These studies mostly define planarity
for undirected finite graphs, in contrast, our definition considers
the more general class of connected and locally finite directed
multigraphs. Our work is closer to the foundations of mathematics,
specifically, to the formalisation of mathematics in HoTT, than to
more practical aspects of graph theory. This approach forces us to
propose new constructions, even sometimes, for the most fundamental
and basic concepts in the theory.

\section{Concluding Remarks}\label{sec:conclusions}

This document is a case study of graph-theoretic concepts in
constructive mathematics using homotopy type theory. An elementary
characterisation of planarity of connected and locally finite directed
multigraphs is presented in \Cref{sec:type-of-planar-maps}. We
collected all the maps of a graph in the two-dimensional plane
---identified up to isotopy--- in a homotopy set, see
\Cref{thm:being-planar-is-set}. The type of these planar maps displays
some of our main contributions, e.g.~the type of spherical maps stated
in \Cref{def:combinatorial-map} and the type of faces for a given map
in \Cref{def:face}. As far as we know, the presentation of these types
in a dependent type theory like HoTT is novel. For example, besides
its rather technical definition, we believe the type of faces encodes
in a better combinatorial way the essence of the topological intuition
behind it, rather than, being defined as simply cyclic lists of nodes,
as by other authors \citep{Gonthier2008, yamamoto, Bauer}, see
\Cref{sec:type-of-faces}.

Additionally, as a way to construct planar graphs inductively, we
presented extensions for planar maps. We demonstrated that any cycle
graph is planar, and by means of planar extensions like path
additions, one can construct larger planar graphs, e.g.~to illustrate
this approach, a planar map for \(K_4\) using simply path additions
from a planar map of \(C_3\) is illustrated in
\Cref{fig:k4-is-planar}. Other relevant notions to this work are
cyclic types, cyclic graphs, homotopy for walks \citep{homotopywalks},
and spherical maps.

We chose HoTT as the reasoning framework to directly study the
symmetry of our mathematical constructions. Many of the proofs
supporting our development could only be constructed by adopting the
Univalence Axiom, a main principle in HoTT. A primary example of using
Univalence in this paper is the structural identity principle for
graphs, as stated in \Cref{thm:equivalence-principle}.

Another contribution of this work include the (computer-checked)
proofs. The major results in this document have been formalised in the
proof assistant Agda, in a development fully self-contained way, which
does not depend on any library \citep{agdaformalisation}. However, for
technical reasons, the formalisation of \Cref{lem:Cn-is-planar} and
further in depth studies on the main results in
\Cref{sec:planar-extensions} like
\Cref{lem:face-division,lem:2-connected-planar-in} will be conducted
in future.

This work can serve as a starting point for further developments of
graph theory in HoTT or related dependent type theories. We expect
further research to provide other interesting results as the
equivalences between different characterisations of planarity for
graphs, e.g., the Kuratowski's and Wagner's characterisations for
planar graphs.

\begin{acks}
We thank the organisers of the conference TYPES2019, especially
Marc Bezem and the referees, for their constructive criticism on a
preliminary presentation of this work. The first author wants to
thank Noam Zeilberger for interesting discussions related to this
work and the references on rooted maps and planarity criteria.
Thanks to the organisers of Midlands Graduate School 2019.
Finally, for proofreading earlier versions of this work, the first
author thanks Benjamin Chetioui and Tam Thanh Truong. Thanks to
the Agda developer team for providing and maintaining the proof
assistant used in this work.
\end{acks}
\bibliography{ref.bib}
\end{document}